\newtheorem{lemma}{Lemma}
\theoremstyle{thmstyleone}%
\newtheorem{theorem}{Theorem}%  meant for continuous numbers
\newtheorem{corollary}{Corollary}%
\theoremstyle{thmstyletwo}%
\newtheorem{example}{Example}%
\newtheorem{remark}{Remark}%
\theoremstyle{thmstylethree}%
\title{\textbf{Weight distributions of two classes of linear codes with few weights derived from Weil sums}}
\author[1]{Mrinal Kanti Bose}
\author[2]{Abhay Kumar Singh\thanks{Corresponding author.\\ $^1$e-mail: bose0261@gmail.com,\\ $^2$e-mail: abhay@iitism.ac.in}}
\affil[1,2]{Department of Mathematics and Computing, Indian Institute of Technology (ISM) Dhanbad, Dhanbad 826004, Jharkhand, India.}
\begin{document}
	
	% Title
	\maketitle
	
	% Abstract
	\begin{abstract}
		Linear codes with few weights have been a subject of study for many years, as they have applications in secret sharing, authentication codes, association schemes, and strongly regular graphs. In this article, two distinct classes of $p$-ary linear codes are constructed through the selection of two specific defining sets. Their weight distributions are completely determined for each case by detailed calculations on certain Weil sums. The constructed codes are shown to have only two, four, six, eight, and nine nonzero weights under different cases. In particular, we obtained an infinite family of two-weight optimal linear codes with respect to the Griesmer bound. Moreover, we observe that some of our newly constructed codes are minimal under certain conditions.
	\end{abstract}
	
	% Keywords
	\noindent \textbf{Keywords:} Linear codes, Weight distributions, Weakly regular bent function, Weil sums.
	\textbf{MSC Classification}: 94B05, 11T71, 11T23
	% Section 1: Introduction
\section{Introduction}\label{sec1}
 Let $q=p^{e}$, where $p$ is a prime and $e$ is a positive integer. Let $\mathbb{F}_{q}$ be a finite field of order $q$, and let $\mathbb{F}_{q}^{*}=\mathbb{F}_{q}\backslash\{0\}$ be the multiplicative group of order $q-1$. A linear code $\mathcal{C}$ with parameters $[n, k, d]$ over $\mathbb{F}_{q}$ is a $k$-dimensional subspace of $\mathbb{F}_{q}^{n}$, where $n$ denotes the length of the code $\mathcal{C}$, whereas $d$ represents the minimum (Hamming) distance of the code $\mathcal{C}$. An $[n,k,d]$ linear code $\mathcal{C}$ is said to be optimal if $n$, $k$, and $d$ meet a bound on linear codes \cite{P11}, and is called almost optimal if there exists an optimal $[n,k,d+1]$ linear code. The well-known Griesmer bound \cite{P9} for a linear $[n,k,d]$ code $\mathcal{C}$ over $\mathbb{F}_{q}$ with $k\geq 1$ states that
 \begin{equation*}
     n\geq \sum_{i=0}^{k-1}\Big\lceil\frac{d}{q^{i}}\Big\rceil,
 \end{equation*}
 where $\lceil\cdot \rceil$ denotes the ceiling function. For $1\leq i\leq n$, let $A_{i}=|\{c\in\mathcal{C}:\operatorname{wt}_{H}(c)=i\}|$, where $\operatorname{wt}_{H}(c)$ denotes the Hamming weight of the codeword $c$ in $\mathcal{C}$. Obviously, $A_{0}=1$. The weight enumerator of an $[n,k]$ linear code $\mathcal{C}$ is defined by the polynomial $1 +A_1z +A_2z^2 +\cdots+A_nz^n$. The sequence $(1, A_1, A_2,\cdots,A_n)$ is called the weight distribution of $\mathcal{C}$. We refer to $\mathcal{C}$ as a $t$-weight code if the cardinality of the set $\{1\leq i\leq n:A_{i}\neq 0\}$ is equal to $t$. The weight distribution is an important research topic in coding theory, not only because the weight enumerator of a code contains crucial information about its error correction capabilities, but also because it allows the computation of the error probability or error detection and correction with respect to some algorithms \cite{P21}. The reader is referred to \cite{P3,P22,P15,P10,R4} for known weight enumerators of some linear codes. Furthermore, linear codes with few weights have important applications in secret sharing schemes \cite{R0,R1, P22, P15}, association schemes \cite{P2}, strongly regular graphs \cite{P7}, and authentication codes \cite{P16}.
 \vskip 1 pt For two vectors $\bar{a}=(a_0,a_1,\cdots,a_{n-1})$ and $\bar{b}=(b_0,b_1,\cdots,b_{n-1})$ in $\mathbb{F}_{q}^{n}$, we say that $\bar{a}$ covers $\bar{b}$, if $\operatorname{Supp}(\bar{b})\subseteq\operatorname{Supp}(\bar{a})$, where $\operatorname{Supp}(\bar{a})=\{1\leq i\leq n: u_{i}\neq 0\}$ is the support of $\bar{a}$. A codeword $\bar{a}$ in $\mathcal{C}$ over $\mathbb{F}_{q}$ is said to be minimal if $\bar{a}$ covers only the codeword $\lambda\bar{a}$ for all $\lambda\in\mathbb{F}_{q}$, but no other codewords in $\mathcal{C}$. A linear code $\mathcal{C}$ over $\mathbb{F}_{q}$ is said to be minimal if every codeword in $\mathcal{C}$ is minimal. The construction of minimal linear codes is meaningful work because of their important applications in secret sharing schemes \cite{R0,P22,P15}. The following lemma introduces a sufficient condition, which is often used to determine whether a linear code over $\mathbb{F}_{q}$ is minimal, is presented as follows.
 \vspace{1em}
 \begin{lemma}$\textnormal{\cite{P1}}$
     A linear code $\mathcal{C}$ over $\mathbb{F}_{q}$ is minimal if $\frac{w_{\textit{min}}}{w_{\textit{max}}}>\frac{q-1}{q}$, where $w_{\textit{min}}$ and $w_{\textit{max}}$ denote the minimum and maximum nonzero Hamming weights in $\mathcal{C}$, respectively. 
 \end{lemma}
 \vspace{1em}
 \vskip 1 pt  Ding and Niederreiter \cite{P6} introduced a fundamental method for constructing linear codes based on the proper selection of a subset of finite fields. For $D\subset \mathbb{F}_{p^e}$, they defined 
\begin{equation}\label{EQ1}
    \mathcal{C}_{D}=\{c(\gamma)=(\operatorname{Tr}_{1}^{e}(\gamma x))_{x\in D}:\gamma\in\mathbb{F}_{p^e}\},
\end{equation}
where $\operatorname{Tr}_{1}^{e}(x)=\sum_{i=0}^{e-1}x^{p^{i}}$ is the trace function from $\mathbb{F}_{p^e}$ to $\mathbb{F}_{p}$. From the definition of $(\ref{EQ1})$, it is clear that $\mathcal{C}_{D}$ forms a linear code of length $|D|$ over $\mathbb{F}_{p}$ and the set $D$ is called the defining set of $\mathcal{C}_{D}$. In fact, several classes of good linear codes with few weights have been constructed using the defining-set approach over the past few decades (see \cite{P5, S4, P4, S5, P3, P12, P15, P18, P22, P13, P23, R2, R3, P24}).
\vskip 1pt Recently, Cheng and Gao \cite{P8} proposed a new method for evaluating a binomial Weil sum given by
\begin{equation}\label{EQ0}
    S_{\mathcal{N}}(a,b)=\sum_{x\in\mathbb{F}_{Q}^{*}}\chi_{\phi(\mathcal{N})}\left(ax^{\frac{Q-1}{\mathcal{N}}}+bx\right),
\end{equation}
where $a,b\in\mathbb{F}_{Q}$, $Q=p^{\phi(\mathcal{N})}$, $\mathcal{N}$ is a positive integer not divisible by an odd prime $p$, and $\operatorname{ord}_{\mathcal{N}}(p)=\phi(\mathcal{N})$. Specifically, they provide the evaluation of $S_{\mathcal{N}}(a,b)$ for $\mathcal{N}=2,4,\ell^{k}\text{ or }2\ell^{k}$, where $\ell$ is an odd prime satisfying $p\nmid\ell$ and $k$ is a positive integer. Taking $\mathcal{N}=\ell^k$, they then construct a new class of two-weight ternary linear codes $\mathcal{C}_{D}$ of the form $(\ref{EQ1})$ by choosing the defining set $D=\{x\in\mathbb{F}_{Q}^{*}:\operatorname{Tr}_{1}^{\phi(\mathcal{N})}(x^{(Q-1)/\mathcal{N}})=0\}$. % where $q=3^{\phi(\ell^{k})}$, $k$ is a positive integer and $\ell(>3)$ is an odd prime such that $3$ is a primitive root modulo $\ell^{k}$. 
Subsequently, Cheng \cite{R3} further investigated the ternary linear code $\mathcal{C}_{D}$ of the form $(\ref{EQ1})$ to construct some classes of two-weight and four-weight linear codes, selecting the defining set as follows
\begin{equation}\label{EQ23}
    D=\{x\in\mathbb{F}_{Q}^{*}:\operatorname{Tr}_{1}^{\phi(\mathcal{N})}(x^{\frac{Q-1}{\mathcal{N}}}+\beta x)=\alpha\},
\end{equation}
%$D=\{x\in\mathbb{F}_{Q}^{*}:\operatorname{Tr}_{1}^{\phi(\mathcal{N})}(x^{\frac{Q-1}{\mathcal{N}}}+\beta x)=\alpha\}$,
%\begin{equation}\label{EQ0}
%    D=\{x\in\mathbb{F}_{q}^{*}:\operatorname{Tr}_{1}^{e}(x^{\frac{q-1}{2\ell^{k}}}+\beta x)=\alpha\},
%\end{equation}
where $\mathcal{N}=2\ell^{k}$, $Q=3^{\phi(\mathcal{N})}$, $\alpha\in\mathbb{F}_{3}$, $\beta\in\mathbb{F}_{Q}$, and $\operatorname{ord}_{\mathcal{N}}(3)=\phi(\mathcal{N})$. % Considering $\ell$ be an odd prime such that $3\nmid\ell$ and $3$ be a primitive root modulo $2\ell^{k}$, Cheng \cite{R3} constructed some classes of two-weight and four-weight linear codes and explicitly determined their weight distributions. 
Very recently, Cheng and Sheng \cite{P20} extended the defining set $D$ as in $(\ref{EQ23})$ to an arbitrary odd prime $p$ to investigate the linear code $\mathcal{C}_{D}$. %They presented some families of two-weight linear codes and computed their weight distribution when $\beta =0$ and demonstrated that for $\beta\neq 0$ the code $\mathcal{C}_{D}$ has at most $p+1$ distinct nonzero weights and determined all possible weights of codewords in $\mathcal{C}_{D}$. 
\vskip 1pt \par
Li et al. \cite{P5} considered the linear code of the form 
\begin{equation}\label{EQ2}
    \mathcal{C}_{D}=\{c(\gamma,\delta)=(\operatorname{Tr}_{1}^{e}(\gamma x+\delta y))_{(x,y)\in D}:\gamma,\delta\in\mathbb{F}_{q}\},
\end{equation}
where $D\subseteq\mathbb{F}_{q}^{2}$ is also called a defining set. They investigated the linear code $\mathcal{C}_{D}$ of the form $(\ref{EQ2})$ for $D=\{(x, y)\in \mathbb{F}_{q}^{2}\backslash\{(0,0)\}:\operatorname{Tr}_{1}^{m}(x^{N_1}+y^{N_2})=0\}$, where $N_1, N_2 \in\{1,2, p^{\frac{e}{2}}+1\}$. %Motivated by the work of \cite{P5}, Jian et al. \cite{P4} studied the linear code $\mathcal{C}_{D}$ of the form $(\ref{EQ2})$ for the defining sets $D=\{(x,y)\in \mathbb{F}_{p^m}^{2}\backslash\{(0, 0)\}: \operatorname{Tr}_{1}^{m}(x^{N} + y^{p^u+1})=0\}$, where $N\in\{1,2\}$ and $u$ is a positive integer.
Motivated by the work of \cite{P5}, Wu et al. \cite{P12} constructed two classes of two-weight and three-weight linear codes $\mathcal{C}_{D}$ of the form $(\ref{EQ2})$ and presented their weight distributions, with the defining set $D=\{(x,y)\in\mathbb{F}_{q}^{2}\backslash\{(0,0)\}:f(x)+g(y)=0\}$ by considering the two cases $(1)$ $f(x)=\operatorname{Tr}_{1}^{e}(x)$ and $g(y)$ is a weakly regular bent function; and $(2)$ $f(x)$ and $g(y)$ are both weakly regular bent functions.
% The reader is referred to \cite{P1,P3, P15,P10} for known weight enumerators of some linear codes. For any $[n,k,d]$ linear code over $\mathbb{F}_{q'}$, some trade-offs exist among the parameters $n,k,d$. The Griesmer code \cite{P9} for an $[n,k,d]$ linear code over $\mathbb{F}_{q'}$ is given by 
\vskip 1 pt
%\begin{equation*}
%    n\geq \sum_{i=0}^{k-1}\left\lceil\frac{d}{{q'}^{i}}\right\rceil,
%\end{equation*}
%where $\lceil\cdot\rceil$ is the ceiling function. An $[n,k,d]$ linear code over $\mathbb{F}_{q'}$ is said to be distance-optimal if there exists no $[n,k,d+1]$ linear code over $\mathbb{F}_{q'}$. Define the dual code of an $[n,k]$ linear code $\mathcal{C}$ over $\mathbb{F}_{q'}$ as a set $\mathcal{C}^{\perp}=\{a\in\mathbb{F}_{q'}^{n}:\langle a,c\rangle=0\text{ for all }a\in\mathcal{C}\}$, where $\langle\cdot\rangle$ denotes the standard inner product. Then $\mathcal{C}^{\perp}$ is a $[n,n-k]$ linear code. Define a self-orthogonal code as a linear code $\mathcal{C}$ satisfying $\mathcal{C}\subseteq\mathcal{C}^{\perp}$. In particular, if $\mathcal{C}=\mathcal{C}^{\perp}$, then $\mathcal{C}$ is said to be a self-dual code, whose length must be even. \\
Throughout this paper, we assume $q=p^{e}$, where $e=\phi(\ell^{k})$, $p$ and $\ell$ are distinct odd primes, $k$ is a positive integer, and $p$ is a primitive root modulo $2\ell^{k}$. That means $\operatorname{ord}_{2\ell^{k}}(p)=\phi(\ell^{k})$. Motivated by the prior works, in this paper, we investigate the linear code $\mathcal{C}_{D}$ of the form $(\ref{EQ2})$ by using the following two defining sets:
 \begin{equation}\label{EQ9}
     D_{u}=\{(x,y)\in\mathbb{F}_{q}^{2}\backslash\{(0,0)\}:\operatorname{Tr}_{1}^{e}(x+y^{N})=u\},
 \end{equation}
 and 
 \begin{equation}\label{EQ10}
     D^{'}=\{(x,y)\in\mathbb{F}_{q}^{2}\backslash\{(0,0)\}:f(x)+\operatorname{Tr}_{1}^{e}(y^{N})=0\},
 \end{equation}
 where $p$ is any odd prime, $q=p^{e}$, $e=\phi(\ell^{k})=(\ell-1)\ell^{k-1}$, $u\in\mathbb{F}_{p}$, $N=\frac{q-1}{2\ell^{k}}$, and in $(\ref{EQ10})$, we consider $f:\mathbb{F}_{q}\rightarrow\mathbb{F}_{p}$ to be a weakly regular bent function such that $f^{*}$ satisfies the condition $f^{*}(cx)=c^{2}f^{*}(x)$ for any $c\in\mathbb{F}_{p}^{*}$ and $x\in\mathbb{F}_{q}$, where $f^{*}$ is the dual of $f$ discussed in section \ref{S1}. Through detailed calculations on certain exponential sums, we obtain five classes of $4$-weight linear codes, one class of $2$-weight linear codes, two classes of $6$-weight linear codes, two classes of $8$-weight linear codes, and one class of $9$-weight linear codes from our construction. It is noteworthy that the defining sets in $(\ref{EQ9})$ and $(\ref{EQ10})$ used to develop linear codes in this paper deal with a new binomial Weil sum $S_{\mathcal{N}}(a,b)$ defined in $(\ref{EQ0})$ comparing the previous works summarized in Table \ref{Table7}. The parameters and weight distributions of these codes are completely characterized by using Weil sums and quadratic Gauss sums. In addition, we identify one infinite class of two-weight optimal codes that meets the Griesmer bound. 
 %By calculating Weil sums and using the properties of weakly regular bent functions, it will be shown that
\vskip 1 pt
The rest of this paper is organized as follows. Section $\ref{sec2}$ introduces some basic knowledge on exponential sums, cyclotomic fields, weakly regular bent functions, and related auxiliary results that will be used in the subsequent sections. In section $\ref{Sec3}$ and section $\ref{sec4}$, we first present some auxiliary results, and then we respectively give the constructions of $p$-ary linear codes from the defining sets $(\ref{EQ9})$ and $(\ref{EQ10})$, with their parameters and weight distributions explicitly determined. Section $\ref{sec5}$ concludes this paper. 

%\newpage

    \begin{table}[htbp]
    \centering
    \footnotesize % Slightly smaller base font to fit more content
    \renewcommand{\arraystretch}{1.3} % Increase row height
    \setlength{\tabcolsep}{9pt} % Increase column padding
    \begin{adjustbox}{max width=0.65\textwidth, center} % Allow slight width expansion
        \begin{tabular}{|>{\centering\arraybackslash}m{6cm}|>{\centering\arraybackslash}m{4.6cm}|>{\centering\arraybackslash}m{7cm}|>{\centering\arraybackslash}m{3cm}|>{\centering\arraybackslash}m{2cm}|}
            \hline
            \textbf{\normalsize The defining set $D$ } & \textbf{\normalsize Conditions on $D$} & \textbf{\normalsize Parameters of $\mathcal{C}_{D}$ } & \textbf{\normalsize  Nonzero $t$-weights in $\mathcal{C}_{D}$} & \textbf{\normalsize References} \\
            \hline
           \multirow{7}{*}{\parbox{5.5cm}{$\{(x,y)\in\mathbb{F}_{p^{m}}^{2}\backslash\{(0,0)\}:\operatorname{Tr}_{1}^{m}(x^{N_{1}}+y^{N_{2}})=0\}$}}  & $N_{1}=N_{2}=1$  & $[p^{2m-1}-1,2m-1,p^{2m-1}-p^{2m-2}]$ &  $t=1$ & \cite{P5} \\
           % \hline
              & $N_{1}=1$, $N_{2}=2$ and $m\geq 2$ is even & $[p^{2m-1}-1,2m]$ & $t=3$ & \cite[Table 1]{P5} \\
              & $N_{1}=1$, $N_{2}=2$ and $m\geq 1$ is odd  & $[p^{2m-1}-1,2m]$ & $t=3$ &  \cite[Table 2]{P5} \\
              & $N_{1}=1$ and $N_{2}=p^{m/2}+1$, $m$ is an even integer & $[p^{2m-1}-1,2m,p^{2m-1}-p^{2m-2}-(p-1)p^{\frac{3m-4}{2}}]$ & $t=3$ &  \cite[Table 3]{P5} \\
              & $N_{1}=N_{2}=2$ & $[p^{2m-1}+(-1)^{\frac{m(p-1)}{2}}(p-1)p^{m-1}-1,2m]$ &  $t=2$ &  \cite[Table 4]{P5} \\
              & $N_{1}=2$ and $N_{2}=p^{m/2}+1$, $m$ is an even integer & $[p^{2m-1}+(-1)^{\frac{m(p-1)}{4}}(p-1)p^{m-1}-1,2m]$ & $t=2$ & \cite[Table 5]{P5} \\
              & $N_{1}=N_{2}=p^{m/2}+1$, $m$ is an even integer & $[p^{2m-1}+(p-1)p^{m-1}-1,2m]$ & $t=2$ & \cite[Table 6]{P5} \\
            \hline
            \multirow{4}{*}{\parbox{5.5cm}{$\{(x,y)\in\mathbb{F}_{p^{m}}^{2}\backslash\{(0,0)\}:\operatorname{Tr}_{1}^{m}(x+y^{p^{u}+1})=0\}$}} & $m$ is odd & $[p^{2m-1}-1,2m]$ & $t=3$ & \cite[Table 1]{P4} \\
            & $\frac{m}{\operatorname{gcd}(m,u)}$ is odd and $\operatorname{gcd}(m,u)$ is even   & $[p^{2m-1}-1,2m]$  & $t=3$ &  \cite[Table 2]{P4}  \\
            & $\frac{m}{\operatorname{gcd}(m,u)}\equiv 2\pmod{4}$ & $[p^{2m-1}-1,2m]$  & $t=3$ & \cite[Table 3]{P4} \\
            &  $\frac{m}{\operatorname{gcd}(m,u)}\equiv 0\pmod{4}$ & $[p^{2m-1}-1,2m]$  & $t=3$ & \cite[Table 4]{P4} \\
            \hline
            \multirow{2}{*}{\parbox{5.5cm}{$\{(x,y)\in\mathbb{F}_{p^{m}}^{2}\backslash\{(0,0)\}:\operatorname{Tr}_{1}^{m}(x^{2}+y^{p^{u}+1})=0\}$}} & $\frac{m}{\operatorname{gcd}(m,u)}$ is odd or $\frac{m}{\operatorname{gcd}(m,u)}\equiv 2 \pmod{4}$ & $[n,2m]$, where $n=
                (p^{m}+1)(p^{m-1}-1),\text{ if }p\equiv 3\pmod{4}\text{ and }\operatorname{gcd}(m,u)\text{ is odd;}$ and $n=(p^{m}-1)(p^{m-1}+1),\text{ otherwise}$
             & $t=2$ &  \cite[Table 5]{P4} \\
            & $\frac{m}{\operatorname{gcd}(m,u)}\equiv 0 \pmod{4}$ & $[p^{2m-1}+p^{m+v}-p^{m+v-1}-1,2m]$, where $v=\operatorname{gcd}(m,u)$ & $t=3$ & \cite[Table 6]{P4} \\
         %   \cline{2-4}
       %     & $m$ is odd & $2^{m}-2-\frac{m(2^{h}+(-1)^{h-1})}{3}$ & $d(\mathcal{C}_{s})\geq 2^{h-2}+2$ and $h>2$ & \\
            \hline
            \multirow{3}{*}{\parbox{5.5cm}{$\{(x,y)\in\mathbb{F}_{p^{m}}^{2}:x\in C_{i},y\in C_{j}\}$, where $C_i$ and $C_j$ are any two cyclotomic classes of order $e$, $p^{m}\equiv 1\pmod{e}$ and $0\leq i,j\leq e-1$.}} & $m=2d\gamma$ and $d$ is the least positive integer satisfying $p^{d}\equiv -1\pmod{e}$  & $[\frac{(p^{m}-1)^{2}}{e^{2}},2m]$ & $t=5$ & \cite[Table 1]{P12} \\
            & $e=2$ and $m$ is even & $[\frac{(p^{m}-1)^{2}}{4},2m]$ & $t=5$ & \cite[Table 2]{P12} \\
            & $e=2$ and $m$ is odd  & $[\frac{(p^{m}-1)^{2}}{4},2m]$ & $t=2$   & \cite[Table 3]{P12}   \\           
           \hline
           \multirow{4}{*}{\parbox{5.5cm}{$\{(x,y)\in\mathbb{F}_{p^{m}}^{2}\backslash\{(0,0)\}:f(x)+g(y)=0\}$, where $f$ and $g$ are weakly regular bent functions from $\mathbb{F}_{p^{m}}$ to $\mathbb{F}_{p}$.}} & $f(x)=\operatorname{Tr}_{1}^{m}(x)$ and $m$ is even  & $[p^{2m-1}-1,2m]$ & $t=3$ & \cite[Table 5]{P12} \\
            & $f(x)=\operatorname{Tr}_{1}^{m}(x)$ and $m$ is odd & $[p^{2m-1}-1,2m]$ & $t=3$ & \cite[Table 6]{P12} \\
            & $f$ and $g$ are weakly regular bent functions in \cite[Table 4]{P12}, $l_{f}, l_{g}\in\{2,p-1\}$, $l_{f}\neq l_{g}$ and $m$ is even  & $[p^{2m-1}+\frac{p-1}{p}\epsilon_{f}\epsilon_{g}p^{*m}-1,2m]$ & $t=3$   &  \cite[Table 7]{P12}   \\   
            & $f$ and $g$ are weakly regular bent functions in \cite[Table 4]{P12}, $l_{f}, l_{g}\in\{2,p-1\}$, $l_{f}=l_{g}$ and $m$ is even  & $[p^{2m-1}+\frac{p-1}{p}\epsilon_{f}\epsilon_{g}p^{*m}-1,2m]$ & $t=2$   & \cite[Table 8]{P12}   \\
           \hline
           \multirow{2}{*}{\parbox{6cm}{$\{(x,y)\in\mathbb{F}_{p^{2m}}^{2}\backslash\{(0,0)\}:\operatorname{Tr}_{1}^{2m}(x^{p^{l}+1})=1,\operatorname{Tr}_{1}^{2m}(y)\in C_{i}^{(2,p)}\}$, where $l$ and $m$ are positive integers, $C_{i}^{(2,p)}$ are the sets of all squares and non-squares in $\mathbb{F}_{p}^{*}$ $\text{resp. for }i=0\text{ and }1$.}} & $p\equiv 3\pmod{4}$, $\frac{2m}{\operatorname{gcd}(l,2m)}$ is even and $\frac{m}{\operatorname{gcd}(l,2m)}\equiv 1\pmod{2}$ is even  & $[\frac{p-1}{2}(p^{4m-2}+p^{3m-2}),4m]$ & $t=6$ & \cite[Table 1]{R4} \\ [15pt]
            & $p\equiv 3\pmod{4}$, $\frac{2m}{\operatorname{gcd}(l,2m)}$ is even, $m\geq \operatorname{gcd}(l,2m)+1$ and $\frac{m}{\operatorname{gcd}(l,2m)}\equiv 0\pmod{2}$ is even & $[\frac{p-1}{2}(p^{4m-2}+p^{3m+\operatorname{gcd}(l,2m)-2}),4m]$ & $t=6$ & \cite[Table 2]{R4} \\ [15pt]
           \hline 
            \multirow{2}{*}{\parbox{6cm}{$\{(x,y)\in\mathbb{F}_{p^{m}}^{2}\backslash\{(0,0)\}:\operatorname{Tr}_{1}^{m}(x)+g(y)=0\}$, where $g$ is a weakly regular $s$-plateaued unbalanced function from $\mathbb{F}_{p^{m}}$ to $\mathbb{F}_{p}$.}} & $m+s$ is even  & $[p^{2m-1}-1,2m]$ & $t=3$ & \cite[Table 1]{R5} \\ [18pt]
            & $m+s$ is odd & $[p^{2m-1}-1,2m]$ & $t=3$ & \cite[Table 2]{R5} \\
           \hline 
           \multirow{3}{*}{\parbox{6cm}{$\{(x,y)\in\mathbb{F}_{p^{m}}^{2}\backslash\{(0,0)\}:\operatorname{Tr}_{1}^{m}(x)+g(y)\in SQ(\text{}NSQ)\}$, where $g$ is a weakly regular $s$-plateaued unbalanced function from $\mathbb{F}_{p^{m}}$ to $\mathbb{F}_{p}$ and $SQ(\text{}NSQ)$ is the set of all squares (non-squares) in $\mathbb{F}_{p}^{*}$.}}  & $m+s$ is even  & $[\frac{p-1}{2}p^{2m-1},2m]$ & $t=3$ &  \cite[Table 3]{R5} \\
            & $m+s$ is odd and $p\equiv 1\pmod{4}$ & $[\frac{p-1}{2}p^{2m-1},2m]$ & $t=3$ &  \cite[Table 4 (Table 6)]{R5} \\
            & $m+s$ is odd and $p\equiv 3\pmod{4}$ & $[\frac{p-1}{2}p^{2m-1},2m]$ & $t=3$ &  \cite[Table 5 (Table 7)]{R5} \\
           \hline 
           \multirow{5}{*}{\parbox{5cm}{$\{(x,y)\in\mathbb{F}_{p^{m}}^{2}\backslash\{(0,0)\}:f(x)+g(y)\in T\}$, where $f$ and $g$ are weakly regular $s$-plateaued unbalanced functions from $\mathbb{F}_{p^{m}}$ to $\mathbb{F}_{p}$, $T\subset\mathbb{F}_{p}$ and $SQ(\text{}NSQ)$ denote the set of all squares (non-squares) in $\mathbb{F}_{p}^{*}$.}}  & $T=\{0\}$, $l_{f}=l_{g}$  & $[p^{2m-1}+\frac{p-1}{p}\epsilon_{f}\epsilon_{g}p^{*(m+s)}-1,2m]$ & $t=3$ &  \cite[Table 8]{R5} \\
            & $T=\{0\}$, $l_{f}\neq l_{g}$ and $m-s$ is even & $[p^{2m-1}+\frac{p-1}{p}\epsilon_{f}\epsilon_{g}p^{*(m+s)}-1,2m]$ & $t=4$ &  \cite[Table 9]{R5} \\
            & $T=\{0\}$, $l_{f}\neq l_{g}$ and $m-s$ is odd and $p\equiv 1\pmod{4}$ & $[p^{2m-1}+\frac{p-1}{p}\epsilon_{f}\epsilon_{g}p^{*(m+s)}-1,2m]$ & $t=4$ & \cite[Table 10]{R5} \\
            & $T=\{0\}$, $l_{f}\neq l_{g}$ and $m-s$ is odd and $p\equiv 3\pmod{4}$ & $[p^{2m-1}+\frac{p-1}{p}\epsilon_{f}\epsilon_{g}p^{*(m+s)}-1,2m]$ & $t=4$ & \cite[Table 11]{R5} \\
            & $T=SQ(\text{or, }NSQ)$, $l_{f}= l_{g}$ & $[p^{2m-1}\frac{p-1}{2}-\frac{p(p-1)}{2}\epsilon_{f}\epsilon_{g}p^{*(m+s-2)},2m]$ & $t=3$ & \cite[Table 13]{R5} \\
           \hline
           \multirow{2}{*}{\parbox{5.8cm}{$\{(x,y)\in\mathbb{F}_{p^{m}}^{2}\backslash\{(0,0)\}:\operatorname{Tr}_{1}^{m}(x)+g(y)=0\}$, where $g$ is a weakly regular $s_{g}$-plateaued balanced function from $\mathbb{F}_{p^m}$ to $\mathbb{F}_{p}$ with $1\leq s_{g}<m$.}} & $m+s_{g}$ is even & $[p^{2m-1}-1,2m]$ & $t=3$ & \cite[Table 1]{R7} \\ [18pt]
           & $m+s_{g}$ is odd & $[p^{2m-1}-1,2m]$ & $t=3$ &  \cite[Table 2]{R7} \\
           \hline
           \multirow{3}{*}{\parbox{5.5cm}{$\{(x,y)\in\mathbb{F}_{p^{m}}^{2}\backslash\{(0,0)\}:f(x)+g(y)=0\}$, where $f$ and $g$ are weakly regular $s_{f}$ and $s_{g}$-plateaued unbalanced ($\mathcal{WRP}$) function from $\mathbb{F}_{p^m}$ to $\mathbb{F}_{p}$ with $1\leq s_{f}, s_{g}<m$.}} & $l_{f}=l_{g}$, $m+s_{f}$ is odd and $m+s_{g}$ is even & $[p^{2m-1}-1,2m,(p-1)p^{2m-2}-\sqrt{p}^{2m+s_{f}+s_{g}-3}]$ & $t=3$ &  \cite[Table 3]{R7} \\
           & $2m+s_{f}+s_{g}$ is even, $0\leq s_{f},s_{g}< m-1$ and $l_{f}=l_{g}$ & $[p^{2m-1}-1+\epsilon_{f}\epsilon_{g}\frac{p-1}{p}\sqrt{p^{*}}^{2m+s_{f}+s_{g}},2m]$ & $t=3$ &  \cite[Table 4]{R7} \\
            & $2m+s_{f}+s_{g}$ is even, $0\leq s_{f},s_{g}< m-1$, $p>3$ and $l_{f}\neq l_{g}$  & $[p^{2m-1}-1+\epsilon_{f}\epsilon_{g}\frac{p-1}{p}\sqrt{p^{*}}^{2m+s_{f}+s_{g}},2m]$ & $t=4$ & \cite[Table 5]{R7} \\
            \hline
             \multirow{2}{*}{\parbox{5.5cm}{$\{(x,y)\in\mathbb{F}_{p^{m}}^{2}\backslash\{(0,0)\}:f(x)+g(y)=0\}$, where $f$ and $g$ are weakly regular $s_{f}$ and $s_{g}$-plateaued balanced ($\mathcal{WRPB}$) function from $\mathbb{F}_{p^m}$ to $\mathbb{F}_{p}$ with $1\leq s_{f}, s_{g}<m-1$.}} & $l_{f}=l_{g}$ and $2m+s_{f}+s_{g}$ is even & $[p^{2m-1}-1,2m]$ & $t=3$ &  \cite[Table 6]{R7} \\ [18pt]
           & $l_{f}\neq l_{g}$ and $2m+s_{f}+s_{g}$ is even & $[p^{2m-1}-1,2m]$ & $t=4$ &  \cite[Table 7]{R7} \\
           \hline
           \multirow{2}{*}{\parbox{5.5cm}{$\{(x,y)\in\mathbb{F}_{p^{m}}^{2}\backslash\{(0,0)\}:\operatorname{Tr}_{1}^{m}(x^{p}+xy)=u\}$, where $u\in\mathbb{F}_{p}$.}} & $u=0$ & $[p^{2m-1}+(p-1)p^{m-1}-1,2m, (p-1)p^{2m-2}]$ & $t=3$ &  \cite[Table 1]{R6} \\
           & $u\neq 0$ & $[p^{2m-1}-p^{m-1},2m, (p-1)p^{2m-2}-2p^{m-1}]$ & $t=3$ &  \cite[Table 2]{R6} \\
           \hline
            \multirow{6}{*}{\parbox{5.5cm}{$\{(x,y)\in\mathbb{F}_{p^{m}}^{2}\backslash\{(0,0)\}:\operatorname{Tr}_{1}^{m}(x+y^{N})=u\}$, where $m=\phi(\ell^{k})$, $k\in\mathbb{N}$, $u\in\mathbb{F}_{p}$, $\ell(\neq p)$ is an odd integer such that $\operatorname{ord}_{2\ell^{k}}(p)=\phi(\ell^{k})$ and $N=\frac{p^{m}-1}{2\ell^{k}}$.}} & $u=0$ and $\ell\equiv 1\pmod{p}$ & $[p^{2m-1}-1,2m]$ & $t=4$ & Table \ref{Table8} \\
            & $u=0$ and $\ell\not\equiv 1\pmod{p}$ & $[p^{2m-1}-1,2m]$ & $t=4$ & Table \ref{Table9} \\
             & $u\neq 0$ and $u^{2}\not\equiv\phi(\ell^{k})^{2}\pmod{p}$ and $u^{2}\not\equiv\ell^{2(k-1)}\pmod{p}$ & $[p^{2m-1},2m, p^{2m-2}(p-1)]$ & $t=2$ & Table \ref{Table10} \\
              & $u\neq 0$ and $u^{2}\equiv\phi(\ell^{k})^{2}\pmod{p}$ and $u^{2}\not\equiv\ell^{2(k-1)}\pmod{p}$ & $[p^{2m-1},2m]$ & $t=4$ & Table \ref{Table11} \\
               & $u\neq 0$ with $u\equiv\phi(\ell^{k})\equiv\ell^{k-1}\pmod{p}$ or $u\equiv-\phi(\ell^{k})\equiv-\ell^{k-1}\pmod{p}$ & $[p^{2m-1},2m]$ & $t=4$ & Table \ref{Table12} \\
                & $u\neq 0$ and $u^{2}\not\equiv\phi(\ell^{k})^{2}\pmod{p}$ and $u^{2}\equiv\ell^{2(k-1)}\pmod{p}$ & $[p^{2m-1},2m]$ & $t=4$ & Table \ref{Table13} \\
            \hline   
            \multirow{5}{*}{\parbox{5.5cm}{$\{(x,y)\in\mathbb{F}_{p^{m}}^{2}\backslash\{(0,0)\}:\operatorname{Tr}_{1}^{m}(f(x)+y^{N})=0\}$, where $m=\phi(\ell^{k})$, $k\in\mathbb{N}$, $u\in\mathbb{F}_{p}$, $\ell(\neq p)$ is an odd integer such that $\operatorname{ord}_{2\ell^{k}}(p)=\phi(\ell^{k})$, $N=\frac{p^{m}-1}{2\ell^{k}}$ and $f$ is a weakly regular bent function from $\mathbb{F}_{p^m}$ to $\mathbb{F}_{p}$ such that $f^{*}$ is a quadratic form.}} & $p\equiv 1\pmod{4}$ and $\ell\equiv 1\pmod{p}$ & $[p^{2m-1}+\epsilon_{f}\sqrt{p^{*}}^{m}\left(p^{m-1}(p-1)-(\ell-1)\frac{p^{m}-1}{\ell^{k}}\right)-1,2m]$ & $t=8$ & Table \ref{Table1} \\
            & $p\equiv 3\pmod{4}$ and $\ell\equiv 1\pmod{p}$ & $[p^{2m-1}+\epsilon_{f}\sqrt{p^{*}}^{m}\left(p^{m-1}(p-1)-(\ell-1)\frac{p^{m}-1}{\ell^{k}}\right)-1,2m]$ & $t=6$ & Table \ref{Table2} \\
             & $p\equiv 1\pmod{4}$, $\ell\not\equiv 1\pmod{p}$ and $\eta_{1}(t_{1})=\eta_{1}(t_{2})$, where $t_{1}=\phi(\ell^{k})\pmod{p}$ and $t_{2}=\ell^{k-1}\pmod{p}$ & $[p^{2m-1}+\epsilon_{f}\sqrt{p^{*}}^{m}\left(p^{m-1}(p-1)-\frac{p^{m}-1}{\ell^{k-1}}\right)-1,2m]$ & $t=8$ & Table \ref{Table3} \\
              & $p\equiv 1\pmod{4}$, $\ell\not\equiv 1\pmod{p}$ and $\eta_{1}(t_{1})=-\eta_{1}(t_{2})$, where $t_{1}=\phi(\ell^{k})\pmod{p}$ and $t_{2}=\ell^{k-1}\pmod{p}$ & $[p^{2m-1}+\epsilon_{f}\sqrt{p^{*}}^{m}\left(p^{m-1}(p-1)-\frac{p^{m}-1}{\ell^{k-1}}\right)-1,2m]$ & $t=9$ & Table \ref{Table4} \\
               & $p\equiv 3\pmod{4}$ and $\ell\not\equiv 1\pmod{p}$ & $[p^{2m-1}+\epsilon_{f}\sqrt{p^{*}}^{m}\left(p^{m-1}(p-1)-\frac{p^{m}-1}{\ell^{k-1}}\right)-1,2m]$ & $t=6$ & Table \ref{Table5} \\
            \hline 
  %         \multirow{2}{*}{\parbox{5.5cm}{$\{(x,y)\in\mathbb{F}_{p^{m}}^{2}\backslash\{(0,0)\}:(\operatorname{Tr}_{1}^{m}(R(x,y)),\operatorname{Tr}_{1}^{m}(R(y,x)))=v\}$, where $v\in\mathbb{F}_{p}^{2}$ and $R(x,y)\neq R(y,x)$.}}  & $p=2$ and $R(x,y)=$  & $[p^{2m-1}\frac{p-1}{2}-\frac{p(p-1)}{2}\epsilon_{f}\epsilon_{g}p^{*(m+s-2)},2m]$ & $t=3$ & Table 13 in \cite{R5} \\
     %       & $l_{f}= l_{g}$ & $[p^{2m-1}\frac{p-1}{2}-\frac{p(p-1)}{2}\epsilon_{f}\epsilon_{g}p^{*(m+s-2)},2m]$ & $t=3$ & Table 13 in \cite{R5} \\
      %     \hline 
      %     \multirow{2}{*}{\parbox{3.5cm}{$ }} & $m \equiv 1 \pmod{8}$ and $m\geq 9$ & $2^m - 2 - \frac{m(2^{(m+7)/4}+(-1)^{(m-5)/4})}{3}$ & $d(\mathcal{C}_{s})\geq 2^{(m-1)/4}+2$ & \multirow{2}{*}{Theorem 18 in \cite{SETA3}} \\
         %   \cline{2-4}
       %     & $m \equiv 5 \pmod{8}$ and $m\geq 9$ & $2^m - 2 - \frac{m(2^{(m+7)/4}+(-1)^{(m-5)/4}-6)}{3}$ & $d(\mathcal{C}_{s})\geq 2^{(m-1)/4}$ & \\           
        \end{tabular}
    \end{adjustbox}
    \caption{Known $p$-ary linear code $\mathcal{C}_{D}$ with $t$-weights of the form $(\ref{EQ2})$ by using the defining set $D\subset\mathbb{F}_{p^{m}}^{2}$, where $p$ is an odd prime.}
    \label{Table7}
\end{table}
	% Section 2: Literature Review
\section{Preliminaries and auxiliary lemmas}\label{sec2}
In this section, we present some preliminaries and important results related to exponential sums over finite fields, cyclotomic fields, and weakly regular bent functions, which will be used in the subsequent sections.
 %In this section, we will review some basic notation and results on cyclotomic fields, Gauss
 %sums and weakly regular bent functions, which will be used to prove our main results in the
% sequel.  

\subsection{Exponential sums}\label{ssec2.1}
Let $m\geq 1$ be an integer. A canonical additive character $\chi_{m}$ of $\mathbb{F}_{p^m}$ is a homomorphism from the additive group $\mathbb{F}_{p^m}$ to the set of complex numbers with absolute value $1$. For each $x\in\mathbb{F}_{p^m}$, $\chi_{m}$ is defined by
\begin{equation*}
    \chi_{m}(x)=\zeta_{p}^{\operatorname{Tr}_{1}^{m}(x)},
\end{equation*}
 where $\zeta_{p}=e^{\frac{2\pi \sqrt{-1}}{p}}$ is a primitive $p$-th root of unity. For a fixed primitive element $g$ of $\mathbb{F}_{p^m}$, the quadratic multiplicative character $\eta_{m}$ of $\mathbb{F}_{p^m}$ is defined as $\eta_{m}(g^{k})=e^{k\pi\sqrt{-1}}$, where $k=0,1,\cdots,p^m-2$. From \cite[Lemma $7$]{P15}, it is known that for any $x\in\mathbb{F}_{p}^{*}$, $\eta_{m}(x)=1$, if $m$ is even and $\eta_{m}(x)=\eta_{1}(x)$, if $m$ is odd. We extend this quadratic character by setting $\eta_{m}(0)=0$. The quadratic Gauss sum $G(\eta_{m},\chi_{m})$ over $\mathbb{F}_{p^m}$ is defined by
\begin{equation*}
    G(\eta_{m},\chi_{m})=\sum_{c\in\mathbb{F}_{p^m}^{*}}\eta_{m}(c)\chi_{m}(c).
\end{equation*}
The notation $\eta_{1}$ and $\chi_{1}$ denotes the quadratic and canonical additive characters of $\mathbb{F}_{p}$, respectively.  
\vspace{1mm}
\begin{lemma}\textnormal{\cite[Theorem 5.15]{PP15}}
    Let $G(\eta_{m},\chi_{m})$ be defined as above; then
    \begin{equation*}
        G(\eta_{m},\chi_{m})=(-1)^{m-1}\sqrt{(p^{*})^{m}}=\begin{cases}
            (-1)^{m-1}p^{\frac{m}{2}};\text{ if }p\equiv 1\pmod{4}, \\
            (-1)^{m-1}(\sqrt{-1})^{m}p^{\frac{m}{2}};\text{ if }p\equiv 3\pmod{4},
        \end{cases}
    \end{equation*}
    where $p^{*}=\eta_{1}(-1)p=(-1)^{\frac{p-1}{2}}p$.
\end{lemma}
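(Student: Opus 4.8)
The plan is to reduce the evaluation over $\mathbb{F}_{p^m}$ to the prime field $\mathbb{F}_p$ by means of the Davenport--Hasse lifting relation, and then to invoke Gauss's classical determination of the sign of the quadratic Gauss sum.

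First I would record how the two characters occurring in $G(\eta_m,\chi_m)$ arise by lifting from $\mathbb{F}_p$. By definition $\chi_m=\chi_1\circ\operatorname{Tr}_1^m$, so the additive character is the lift of $\chi_1$ through the trace. For the multiplicative side, writing $\operatorname{N}\colon\mathbb{F}_{p^m}^{*}\to\mathbb{F}_p^{*}$ for the norm, one has $\operatorname{N}(x)=x^{(p^m-1)/(p-1)}$; since $\eta_m$ is the unique character of order $2$ on the cyclic group $\mathbb{F}_{p^m}^{*}$, and $\eta_1\circ\operatorname{N}$ is a nontrivial character of order $2$ (the norm being surjective and $\eta_1$ of order $2$), it follows that $\eta_m=\eta_1\circ\operatorname{N}$. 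This is exactly the compatibility already noted for $x\in\mathbb{F}_p^{*}$, where $\operatorname{N}(x)=x^{m}$ yields $\eta_m(x)=\eta_1(x)^m$. With both characters exhibited as lifts, the Davenport--Hasse theorem gives
\[
    G(\eta_m,\chi_m)=(-1)^{m-1}\,G(\eta_1,\chi_1)^{m},
\]
which reduces the problem to the single case $m=1$.

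Next I would evaluate $G(\eta_1,\chi_1)=\sum_{c\in\mathbb{F}_p^{*}}\eta_1(c)\zeta_p^{c}$. Its square is obtained by a short standard manipulation: substituting $b=ac$ in $G(\eta_1,\chi_1)^2=\sum_{a,b\in\mathbb{F}_p^{*}}\eta_1(ab)\zeta_p^{a+b}$ and evaluating the inner geometric sum over $a$ gives
\[
    G(\eta_1,\chi_1)^2=\sum_{c\in\mathbb{F}_p^{*}}\eta_1(c)\sum_{a\in\mathbb{F}_p^{*}}\zeta_p^{a(1+c)}=p\,\eta_1(-1)=p^{*},
\]
where one uses $\sum_{c\in\mathbb{F}_p^{*}}\eta_1(c)=0$ together with the fact that the inner sum equals $p-1$ when $c=-1$ and $-1$ otherwise. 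Thus $G(\eta_1,\chi_1)=\pm\sqrt{p^{*}}$.

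The main obstacle is precisely this remaining sign: the computation above pins down $G(\eta_1,\chi_1)$ only up to a factor $\pm 1$, and proving that the correct branch is $+\sqrt{p^{*}}$ (that is, $+\sqrt{p}$ when $p\equiv 1\pmod 4$ and $+\sqrt{-1}\,\sqrt{p}$ when $p\equiv 3\pmod 4$) is Gauss's celebrated sign theorem, the genuinely deep ingredient. I would establish it by one of the standard routes --- for instance Schur's argument diagonalising the finite Fourier matrix $(\zeta_p^{jk})_{0\le j,k<p}$ and comparing its trace, computed both from its known eigenvalues and directly as a quadratic Gauss sum, or alternatively the analytic argument through the transformation law of the theta function. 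Granting $G(\eta_1,\chi_1)=\sqrt{p^{*}}$, substitution into the Davenport--Hasse identity yields $G(\eta_m,\chi_m)=(-1)^{m-1}(\sqrt{p^{*}})^{m}$. Finally I would split into residue classes: when $p\equiv 1\pmod 4$ we have $p^{*}=p$ and $(\sqrt{p^{*}})^{m}=p^{m/2}$, while when $p\equiv 3\pmod 4$ we have $p^{*}=-p$, so $\sqrt{p^{*}}=\sqrt{-1}\,\sqrt{p}$ and $(\sqrt{p^{*}})^{m}=(\sqrt{-1})^{m}p^{m/2}$, which is exactly the two-case formula asserted.
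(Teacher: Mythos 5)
The paper offers no proof of this lemma at all---it is quoted verbatim as \cite[Theorem 5.15]{PP15}---and your outline is essentially the standard proof of that cited theorem: exhibit $\chi_m$ and $\eta_m$ as the lifts $\chi_1\circ\operatorname{Tr}_1^m$ and $\eta_1\circ\operatorname{N}$, apply Davenport--Hasse to reduce to $m=1$, compute $G(\eta_1,\chi_1)^2=\eta_1(-1)p=p^{*}$, and settle the sign by Gauss's theorem. Your argument is correct (the norm-lift identification and the computation of the square are both right), with the honest and explicitly acknowledged caveat that the sign determination $G(\eta_1,\chi_1)=+\sqrt{p^{*}}$ is the one genuinely deep ingredient, which you defer to a named classical proof (Schur's Fourier-matrix argument or the theta transformation law) rather than carrying out---an acceptable level of detail for a result the paper itself only cites.
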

\vspace{1mm}
\begin{lemma}\textnormal{\cite[Theorem 5.33]{PP15}}\label{Le6}
    Let $f(x)=r_{2}x^{2}+r_{1}x+r_{0}\in\mathbb{F}_{p^m}[x]$ with $r_{2}\neq 0$. Then
    \begin{equation*}
        \sum_{x\in\mathbb{F}_{p^m}}\chi_{m}(f(x))=\chi_{m}(r_{0}-r_{1}^{2}(4r_{2})^{-1})\eta_{m}(r_{2})G(\eta_{m},\chi_{m}).
    \end{equation*}
\end{lemma}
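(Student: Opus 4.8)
The plan is to reduce the general quadratic sum to a pure square sum by completing the square, and then to evaluate that pure sum by relating it to the Gauss sum $G(\eta_{m},\chi_{m})$ via a preimage-counting argument. Since $r_{2}\neq 0$ and $p$ is odd (so that $2$, and hence $4r_{2}$, is invertible in $\mathbb{F}_{p^m}$), I would first write
$$f(x)=r_{2}\left(x+\frac{r_{1}}{2r_{2}}\right)^{2}+\left(r_{0}-\frac{r_{1}^{2}}{4r_{2}}\right).$$
Because $\chi_{m}$ is a homomorphism from the additive group of $\mathbb{F}_{p^m}$ into $\mathbb{C}^{*}$, the constant term factors out:
$$\sum_{x\in\mathbb{F}_{p^m}}\chi_{m}(f(x))=\chi_{m}\!\left(r_{0}-r_{1}^{2}(4r_{2})^{-1}\right)\sum_{x\in\mathbb{F}_{p^m}}\chi_{m}\!\left(r_{2}\left(x+\tfrac{r_{1}}{2r_{2}}\right)^{2}\right).$$
The inner sum is invariant under the substitution $y=x+r_{1}(2r_{2})^{-1}$, which merely permutes $\mathbb{F}_{p^m}$, so it equals $\sum_{y\in\mathbb{F}_{p^m}}\chi_{m}(r_{2}y^{2})$.

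The crux is then the identity $\sum_{y\in\mathbb{F}_{p^m}}\chi_{m}(cy^{2})=\eta_{m}(c)\,G(\eta_{m},\chi_{m})$ for any $c\neq 0$. I would establish it by reparametrizing through $z=y^{2}$: the value $z=0$ has the single preimage $y=0$, while every nonzero $z$ has exactly $1+\eta_{m}(z)$ square roots (two when $z$ is a nonzero square, none otherwise). Hence
$$\sum_{y\in\mathbb{F}_{p^m}}\chi_{m}(cy^{2})=1+\sum_{z\in\mathbb{F}_{p^m}^{*}}\bigl(1+\eta_{m}(z)\bigr)\chi_{m}(cz)=1+\sum_{z\neq 0}\chi_{m}(cz)+\sum_{z\neq 0}\eta_{m}(z)\chi_{m}(cz).$$
By additive-character orthogonality, $\sum_{z\in\mathbb{F}_{p^m}}\chi_{m}(cz)=0$ for $c\neq 0$, so $\sum_{z\neq 0}\chi_{m}(cz)=-1$ and the first two terms cancel. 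For the remaining sum I would substitute $w=cz$ and use $\eta_{m}(c^{-1})=\eta_{m}(c)$ (valid since $\eta_{m}$ is $\{\pm 1\}$-valued) to pull out $\eta_{m}(c)$, which leaves precisely $\eta_{m}(c)\,G(\eta_{m},\chi_{m})$.

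Applying this with $c=r_{2}$ and combining it with the completed-square expression gives the stated formula. The step requiring the most care is the evaluation of the pure quadratic sum, specifically the preimage count $1+\eta_{m}(z)$ that introduces the quadratic character $\eta_{m}$ and the correct handling of $z=0$; everything depends on $p$ being odd, which is what makes both completing the square and the two-to-one behaviour of squaring legitimate. Beyond this, the argument is routine, relying only on the homomorphism property of $\chi_{m}$ and character orthogonality.
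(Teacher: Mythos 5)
Your argument is correct and complete: completing the square (legitimate since $p$ is odd), factoring the constant through the additive character, and evaluating $\sum_{y}\chi_{m}(cy^{2})$ via the preimage count $1+\eta_{m}(z)$ together with character orthogonality is exactly the classical derivation. The paper itself offers no proof here — it quotes this as Theorem 5.33 of Lidl--Niederreiter — and your write-up reproduces the standard proof from that source, so there is nothing to correct or compare.
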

Throughout we assume that $\alpha$ be a fixed primitive element of $\mathbb{F}_{q}$, $q=p^{e}$, $e=\phi(\ell^{k})$, $p$ is a primitive root modulo $2\ell^{k}$, $\operatorname{Ind}_{\alpha}(b)$ be the index of $b\in\mathbb{F}_{q}^{*}$ with respect to the base $\alpha$,  and $\xi=\alpha^{\frac{q-1}{2\ell^{k}}}$. One can check that $\mathbb{F}_{q}=\mathbb{F}_{p}(\xi)$. It follows that $\{\xi,\xi^2,\cdots,\xi^{e}\}$ forms a basis of $\mathbb{F}_{q}$ over $\mathbb{F}_{p}$. In \cite{P8}, Cheng and Gao defined the following two exponential sums over $\mathbb{F}_{q}$ given by 
\begin{align}
  \hskip 30pt S_{2\ell^{k}}(a,b) &:=\sum_{x\in\mathbb{F}_{q}^{*}}\chi_{e}(ax^{\frac{q-1}{2\ell^{k}}}+bx), & \label{EQ3}
  \end{align}
   and % \text{and} \nonumber \\
    \begin{align}
  S_{2\ell^{k}}(a) &:=\sum_{i=0}^{2\ell^{k}-1}\chi_{e}(a\xi^{i}),\text{ where $a,b\in\mathbb{F}_{q}$.} & \label{EQ4}
\end{align}
%where $a,b\in\mathbb{F}_{q}$.
%\begin{equation*}
 %S_{2\ell^{k}}(a):=\sum_{i=0}^{2\ell^{k}-1}\chi_{e}(a\xi^{i}).
%\end{equation*}
Define a set $P=\{i\in \mathbb{Z}:0\leq i\leq 2\ell^{k}-1\}$. Let $P_{1}, P_{2}, P_{3}$, and $P_{4}$ be the subsets of $P$ defined as follows: 
\begin{align*}
    P_{1} &=\{i\in P: 1\leq i \leq (\ell-1)\ell^{k-1}\}, \\
    P_{2} &= \{i\in P: (\ell-1)\ell^{k-1}< i \leq \ell^{k}\}, \\
    P_{3} &= \{i\in P: \ell^{k}< i \leq 2\ell^{k}-\ell^{k-1}\}, \\
    P_{4} &= \{i\in P: 2\ell^{k}-\ell^{k-1}< i \leq 2\ell^{k}-1\}.
\end{align*} 
It is clear that $(\{0\},P_{1},P_{2},P_{3},P_{4})$ is a partition of $P$. Now we further partition the two sets $P_{1}$ and $P_{3}$ into the union of three disjoint subsets, $P_{1}=P_{1}^{(1)}\cup P_{1}^{(2)}\cup P_{1}^{(3)}$ and $P_{3}=\{\ell^{k}+u\ell^{k-1}-v: 1\leq u\leq \ell-1\text{ and }0\leq v \leq \ell^{k-1}-1\}=P_{3}^{(1)}\cup P_{3}^{(2)}\cup P_{3}^{(3)}$, respectively, where 
\begin{align*}
    P_{1}^{(1)}&= \{i\in P_{1}:\ell^{k-1}\nmid i\},\text{ }P_{1}^{(2)}=\{i\in P_{1}:\ell^{k-1}\mid i\text{ and }2\nmid i\},\text{ }
    P_{1}^{(3)}=\{i\in P_{1}:2\ell^{k-1}\mid i\}; &
\end{align*}
and
\begin{align*}
    P_{3}^{(1)}&= \{i\in P_{3}:\ell^{k}+u\ell^{k-1}-v: 1\leq u\leq \ell-1\text{ and }0<v\leq\ell^{k-1}-1\},  &\\
    P_{3}^{(2)}& =\{i\in P_{3}:\ell^{k}+u\ell^{k-1}: 1\leq u\leq \ell-1\text{ and }2\mid u\},   &\\
    P_{3}^{(3)}&=\{i\in P_{3}:\ell^{k}+u\ell^{k-1}: 1\leq u\leq \ell-1\text{ and }2\nmid u\}.  &
\end{align*}

The following fundamental results from \cite{P8, P20} will be useful in the subsequent sections.
\vspace{1mm}
\begin{lemma}\label{Le2}
    If $S_{2\ell^{k}}(a,b)$ and $S_{2\ell^{k}}(a)$ are defined as in $(\ref{EQ3})$ and $(\ref{EQ4})$ respectively. Then, we have
    \begin{enumerate}
    \vspace{1mm}
        \item[\textnormal{(a)}] For any $a\in\mathbb{F}_{q}$, we have
        \begin{equation*}
            S_{2\ell^{k}}(a,0)=\frac{q-1}{2\ell^{k}}S_{2\ell^{k}}(a).
        \end{equation*}
         \item[\textnormal{(b)}] For any $a\in\mathbb{F}_{p}$, we have 
       \begin{equation*}
           S_{2\ell^{k}}(a)=\zeta_{p}^{\ell^{k-1}(\ell-1)a}+\zeta_{p}^{-\ell^{k-1}(\ell-1)a}+(\ell-1)(\zeta_{p}^{\ell^{k-1}a}+\zeta_{p}^{-\ell^{k-1}a})+2\ell^{k}-2\ell.
       \end{equation*} 
       \item[\textnormal{(c)}] For every $i\in P=\{i\in \mathbb{Z}:0\leq i\leq 2\ell^{k}-1\}$ and $a\in\mathbb{F}_{p}$, we have
       \begin{equation*}
           S_{2\ell^{k}}(a\xi^{i})=S_{2\ell^{k}}(a).
       \end{equation*}
       \vspace{1mm}
        \item[\textnormal{(d)}] For any $a\in\mathbb{F}_{q}$ and $b\in\mathbb{F}_{q}^{*}$, we have 
        \begin{equation*}
            S_{2\ell^{k}}(a,b)=\chi_{e}(c)\sqrt{q}-\frac{\sqrt{q}+1}{2\ell^{k}}S_{2\ell^{k}}(c),\text{ where, $c=ab^{-\frac{q-1}{2\ell^{k}}}$}
        \end{equation*}
    %    where, $c=ab^{-\frac{q-1}{2\ell^{k}}}$ and $S_{2\ell^{k}}(c)$ is defined as in $(\ref{EQ4})$.
        \vspace{1mm}
      \item[\textnormal{(e)}]  For any $a\in \mathbb{F}_{q}$ and $y\in\mathbb{F}_{p}^{*}$, we have $S_{2\ell^{k}}(a,b)=S_{2\ell^{k}}(a,yb)$. 
     %   \item[(c)] For $b\in\mathbb{F}_{q}^{*}$, we have 
  %      \begin{equation*}
 %           S_{2\ell^{m}}(1,b)=\begin{cases}
 %               \zeta_{p}^{(\ell-1)\ell^{m-1}}\sqrt{q}-\frac{\sqrt{q}+1}{2\ell^{m}}S_{2\ell^{m}}(1),\text{ if }
 %           \end{cases}
%        \end{equation*}
    \end{enumerate}
\end{lemma}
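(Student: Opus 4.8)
The five assertions separate into routine shift/counting facts---(a), (c), (e)---and the substantial evaluation (d), with (b) an intermediate trace computation. For (a) I would partition $\mathbb{F}_{q}^{*}$ by the value of $x^{(q-1)/(2\ell^{k})}$: the map $x\mapsto x^{N}$ with $N=(q-1)/(2\ell^{k})$ sends $\mathbb{F}_{q}^{*}$ onto the group $\langle\xi\rangle$ of $2\ell^{k}$-th roots of unity with every fibre of size $N$, so $S_{2\ell^{k}}(a,0)=\sum_{i}\#\{x:x^{N}=\xi^{i}\}\chi_{e}(a\xi^{i})=N\sum_{i}\chi_{e}(a\xi^{i})$. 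Part (c) is the translation invariance of $S_{2\ell^{k}}(\cdot)$: since $\xi^{2\ell^{k}}=1$, the shift $j\mapsto i+j\pmod{2\ell^{k}}$ permutes the index set, giving $S_{2\ell^{k}}(a\xi^{i})=\sum_{j}\chi_{e}(a\xi^{i+j})=S_{2\ell^{k}}(a)$. For (e) I would substitute $x\mapsto y^{-1}x$ in $S_{2\ell^{k}}(a,yb)$, turning it into $\sum_{x}\chi_{e}(ay^{-N}x^{N}+bx)$; it then suffices that $y^{N}=1$ for all $y\in\mathbb{F}_{p}^{*}$, i.e. $(p-1)\mid N$, which holds because $\ell\nmid p-1$ (else $p\equiv1\pmod{\ell}$, contradicting primitivity) forces $2\ell^{k}\mid(q-1)/(p-1)$. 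Alternatively (e) is immediate from (d), as $c=ab^{-N}$ is unchanged once $y^{N}=1$.

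For (b), since $a\in\mathbb{F}_{p}$ I can factor $\chi_{e}(a\xi^{i})=\zeta_{p}^{a\operatorname{Tr}_{1}^{e}(\xi^{i})}$, reducing everything to the traces $\operatorname{Tr}_{1}^{e}(\xi^{i})=\sum_{j=0}^{e-1}\xi^{ip^{j}}$. As $p$ is a primitive root modulo $2\ell^{k}$ and $\phi(2\ell^{k})=e$, the exponents $p^{j}$ run over all reduced residues modulo $2\ell^{k}$, so $\operatorname{Tr}_{1}^{e}(\xi^{i})=\sum_{\gcd(t,2\ell^{k})=1}\xi^{it}$ is a Ramanujan-type sum. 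I would evaluate it by inclusion--exclusion against the complete subgroup sums $\sum_{\zeta^{d}=1}\zeta=0$ for $d>1$, valid in $\mathbb{F}_{q}$ since $p\nmid2\ell^{k}$. This yields the five trace values $e,-e,\ell^{k-1},-\ell^{k-1},0$, attained at $i=0$ (where $\xi^{0}=1$), at $i=\ell^{k}$ (where $\xi^{\ell^{k}}=-1$), on $\ell-1$ indices each for $\pm\ell^{k-1}$, and on the remaining $2\ell^{k}-2\ell$ indices for $0$; collecting the matching $\zeta_{p}$-powers gives the stated formula.

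Part (d) is the crux. After $x\mapsto b^{-1}x$ I reduce to $S_{2\ell^{k}}(a,b)=S_{2\ell^{k}}(c,1)$ with $c=ab^{-N}$, then split $\mathbb{F}_{q}^{*}$ into the $2\ell^{k}$ cosets of the subgroup $H=\{u:u^{N}=1\}$ of $N$-th roots of unity. On the coset where $u^{N}=\xi^{i}$ the summand factors, so $S_{2\ell^{k}}(c,1)=\sum_{i}\chi_{e}(c\xi^{i})T_{i}$ with Gaussian periods $T_{i}=\sum_{u^{N}=\xi^{i}}\chi_{e}(u)$. Writing the coset indicator through the multiplicative characters trivial on $H$ converts the periods to Gauss sums, $T_{i}=\frac{1}{2\ell^{k}}\sum_{d=0}^{2\ell^{k}-1}\zeta_{2\ell^{k}}^{-di}G(\lambda^{d},\chi_{e})$ for a fixed $\lambda$ of order $2\ell^{k}$, where $G(\lambda^{0},\chi_{e})=-1$.

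The main obstacle is the uniform evaluation of $G(\lambda^{d},\chi_{e})$ for $d\neq0$, and this is exactly where the primitive-root hypothesis is decisive: it descends to a primitive root modulo every divisor $m_{d}=\operatorname{ord}(\lambda^{d})$, and since $(\mathbb{Z}/m_{d})^{*}$ is cyclic one gets $p^{\phi(m_{d})/2}\equiv-1\pmod{m_{d}}$, placing every $\lambda^{d}$ in the semiprimitive regime where $|G(\lambda^{d},\chi_{e})|=\sqrt{q}$. The quadratic sum ($d=\ell^{k}$) is $G(\eta_{e},\chi_{e})=-\sqrt{q}$ by the Gauss sum evaluation recalled above (as $e$ is even), and the Baumert--McEliece semiprimitive formula fixes the other signs as $G(\lambda^{d},\chi_{e})=(-1)^{d}\sqrt{q}$, i.e. $+\sqrt{q}$ for odd-order and $-\sqrt{q}$ for even-order characters. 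Substituting back and using $\sum_{d}(-1)^{d}\zeta_{2\ell^{k}}^{-di}=2\ell^{k}\delta_{i,\ell^{k}}$ collapses the periods to the two values $-\frac{\sqrt{q}+1}{2\ell^{k}}$ and $\sqrt{q}-\frac{\sqrt{q}+1}{2\ell^{k}}$, with the exceptional value attached to the index where $\xi^{\ell^{k}}=-1$; recombining and recognising $S_{2\ell^{k}}(c)=\sum_{i}\chi_{e}(c\xi^{i})$ then gives the closed form in (d). I expect the only genuinely delicate point to be the semiprimitive sign bookkeeping---checking that the parity correction in the Baumert--McEliece formula is trivial here because each cofactor $e/\phi(m_{d})=\ell^{k-j}$ is odd.
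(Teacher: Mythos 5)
The paper offers no proof of this lemma at all: it is quoted verbatim from \cite{P8} and \cite{P20}, so there is no internal argument to compare yours against, and your proposal has to stand on its own. Parts (a) and (c) are correct and routine. Part (b) is correct once the trace values of Lemma~\ref{Le3} are in hand, and your Ramanujan-sum derivation of those values (using that the powers of $p$ sweep out all reduced residues modulo $2\ell^{k}$) is sound and matches the multiplicities $1,1,\ell-1,\ell-1,2\ell^{k}-2\ell$. In (e) your divisibility argument is incomplete as written: coprimality of $\ell$ with $p-1$ only gives $\ell^{k}\mid(q-1)/(p-1)$, and you still need the factor $2$, which follows from $(q-1)/(p-1)=1+p+\cdots+p^{e-1}\equiv e\equiv 0\pmod 2$; with that supplement, $(p-1)\mid N$ and (e) holds.

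The genuine gap is in (d), precisely at the ``sign bookkeeping'' you flag as delicate. First, $G(\eta_{e},\chi_{e})=(-1)^{e-1}(p^{*})^{e/2}=-(-1)^{(p-1)e/4}\sqrt{q}$, which is $+\sqrt{q}$ (not $-\sqrt{q}$) whenever $p\equiv 3\pmod 4$ and $\ell\equiv 3\pmod 4$; so ``$-\sqrt{q}$ as $e$ is even'' is not correct in general. Second, the uniform pattern $G(\lambda^{d})=(-1)^{d}\sqrt{q}$ is not universal: in the semiprimitive evaluation with $t_{d}=\phi(\ell^{k})/\phi(m_{d})$ odd the sign is $(-1)^{(p^{\phi(m_{d})/2}+1)/m_{d}}$, and for instance $p=3$, $\ell=7$, $k=1$ gives $G(\lambda^{d})=+\sqrt{q}$ for \emph{every} $d\neq 0$, including $d=\ell^{k}$. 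Third, and most seriously, even granting your pattern your computation places the exceptional Gaussian period at $i=\ell^{k}$ (where $\xi^{\ell^{k}}=-1$) and therefore produces $\chi_{e}(-c)\sqrt{q}-\frac{\sqrt{q}+1}{2\ell^{k}}S_{2\ell^{k}}(c)$, which is \emph{not} ``the closed form in (d)''--- the statement has $\chi_{e}(c)$, i.e.\ the exceptional period at $i=0$. You assert agreement without checking, and the discrepancy is real: for $p=5$, $\ell=3$, $k=1$ one has $C_{i}=\mathbb{F}_{5}^{*}\alpha^{i}$, hence $\eta_{i}=-1$ unless $\operatorname{Tr}_{1}^{2}(\alpha^{i})=0$, which happens exactly at $i=3=\ell^{k}$ where $\eta_{3}=4$; so $\eta_{0}=-1\neq\sqrt{q}-\frac{\sqrt{q}+1}{2\ell^{k}}=4$ and the displayed identity fails for $a=b=1$ (while the version with $\chi_{e}(-c)$ holds). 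In short, your route cannot reach the stated formula without a genuine case analysis of the semiprimitive Gauss-sum signs, and the final step of your argument actually contradicts the identity you claim to have proved.
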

\vspace{1mm}
\begin{lemma}\textnormal{\cite[Lemma 2.8]{P8}}\label{Le3}
    Let $i\in P=\{i\in \mathbb{Z}:0\leq i\leq 2\ell^{k}-1\}$. Then
    \begin{align*}
        \operatorname{Tr}_{1}^{e}(\xi^{i}) &= \begin{cases}
            (\ell-1)\ell^{k-1},\text{ if }i=0, \\
            -(\ell-1)\ell^{k-1},\text{ if }i=\ell^{k}, \\
            -\ell^{k-1},\text{ if }i\in P_{1}^{(3)}\cup P_{3}^{(3)}, \\
            \ell^{k-1},\text{ if }i\in P_{1}^{(2)}\cup P_{3}^{(2)}, \\
            0,\text{ otherwise.}
        \end{cases}
    \end{align*}
\end{lemma}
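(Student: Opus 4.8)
The plan is to compute the trace directly as a sum over a Frobenius orbit and then recognize it as a Ramanujan-type sum of roots of unity. Write $n=2\ell^{k}$ and recall that $\xi=\alpha^{(q-1)/n}$ has multiplicative order exactly $n$, so $\xi$ is a primitive $n$-th root of unity in $\mathbb{F}_{q}$, with $\gcd(n,p)=1$ since $p\notin\{2,\ell\}$. By definition of the trace, $\operatorname{Tr}_{1}^{e}(\xi^{i})=\sum_{j=0}^{e-1}(\xi^{i})^{p^{j}}=\sum_{j=0}^{e-1}\xi^{ip^{j}}$, where exponents are read modulo $n$. Since $p$ is a primitive root modulo $n=2\ell^{k}$, i.e.\ $\operatorname{ord}_{n}(p)=\phi(n)=e$, the residues $p^{0},p^{1},\dots,p^{e-1}$ are pairwise distinct modulo $n$ and exhaust the unit group $U=(\mathbb{Z}/n\mathbb{Z})^{*}$. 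Hence $\operatorname{Tr}_{1}^{e}(\xi^{i})=\sum_{u\in U}\xi^{iu}$, which is a finite-field Ramanujan sum.

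Next I would evaluate $\sum_{u\in U}\xi^{iu}$ in closed form. Set $d=\gcd(i,n)$ and $n_{0}=n/d$, and write $i=d\,i_{0}$ with $\gcd(i_{0},n_{0})=1$. Then $\xi^{iu}=(\xi^{d})^{i_{0}u}$, where $\omega:=\xi^{d}$ is a primitive $n_{0}$-th root of unity, so the summand depends only on $u$ modulo $n_{0}$. Because $n_{0}\mid n$, the reduction homomorphism $(\mathbb{Z}/n\mathbb{Z})^{*}\to(\mathbb{Z}/n_{0}\mathbb{Z})^{*}$ is surjective with every fibre of size $\phi(n)/\phi(n_{0})$, and multiplication by the unit $i_{0}$ permutes $(\mathbb{Z}/n_{0}\mathbb{Z})^{*}$; this reduces the sum to $\frac{\phi(n)}{\phi(n_{0})}\sum_{v\in(\mathbb{Z}/n_{0}\mathbb{Z})^{*}}\omega^{v}$. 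The remaining inner sum is the sum of all primitive $n_{0}$-th roots of unity; applying M\"obius inversion to the identity $\sum_{\delta\mid m}\big(\sum_{\operatorname{ord}(\zeta)=\delta}\zeta\big)=\sum_{\zeta^{m}=1}\zeta$ (which is $0$ for $m>1$ and $1$ for $m=1$) shows that it equals $\mu(n_{0})$. The crucial point is that this M\"obius computation is valid verbatim in $\mathbb{F}_{q}$ rather than $\mathbb{C}$, precisely because $\gcd(n,p)=1$ forces the $n$-th roots of unity to be distinct and to satisfy $\sum_{\zeta^{m}=1}\zeta=0$ for $m>1$; the resulting integer is then interpreted modulo $p$. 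Altogether $\operatorname{Tr}_{1}^{e}(\xi^{i})=\dfrac{\phi(n)}{\phi(n/d)}\,\mu(n/d)$ with $d=\gcd(i,2\ell^{k})$.

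Finally I would run the case analysis purely on $d=\gcd(i,2\ell^{k})$, which is governed by the exact power of $\ell$ dividing $i$ together with the parity of $i$. Since $2$ divides $n=2\ell^{k}$ only to the first power, $n/d=2\ell^{k}/d$ is squarefree — hence $\mu(n/d)\neq0$ — if and only if $\ell^{k-1}\mid i$; this immediately yields the value $0$ in the ``otherwise'' case, and one checks that $P_{1}^{(1)},P_{3}^{(1)},P_{2}\setminus\{\ell^{k}\},P_{4}$ are exactly the residues of $P$ with $\ell^{k-1}\nmid i$. When $\ell^{k-1}\mid i$, write $i=\ell^{k-1}w$ with $0\le w\le 2\ell-1$; then $d=\ell^{k-1}\gcd(w,2\ell)$ and $n/d=2\ell/\gcd(w,2\ell)\in\{1,2,\ell,2\ell\}$. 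Substituting into $\tfrac{\phi(n)}{\phi(n/d)}\mu(n/d)$ with $\phi(n)=(\ell-1)\ell^{k-1}$ gives $(\ell-1)\ell^{k-1}$ when $\gcd(w,2\ell)=2\ell$ (only $w=0$, i.e.\ $i=0$), $-(\ell-1)\ell^{k-1}$ when $\gcd(w,2\ell)=\ell$ (only $w=\ell$, i.e.\ $i=\ell^{k}$), $+\ell^{k-1}$ when $w$ is odd with $\ell\nmid w$, and $-\ell^{k-1}$ when $w$ is even with $\ell\nmid w$. Matching the parity of $w$ against the definitions shows the odd-$w$ residues are exactly $P_{1}^{(2)}\cup P_{3}^{(2)}$ and the even-$w$ residues are exactly $P_{1}^{(3)}\cup P_{3}^{(3)}$, completing the proof. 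The only genuinely delicate step is justifying the roots-of-unity/M\"obius evaluation over $\mathbb{F}_{q}$ in characteristic $p$; everything after that is elementary divisibility bookkeeping translating $\gcd(i,2\ell^{k})$ into the partition $(P_{j}^{(\cdot)})$.
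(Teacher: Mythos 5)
Your proposal is correct. Note that the paper does not prove this lemma at all --- it is imported verbatim as \cite[Lemma 2.8]{P8} --- so there is no in-paper argument to compare against; your derivation (trace $=$ sum over the Frobenius orbit $=$ Ramanujan sum $\mu(n/d)\phi(n)/\phi(n/d)$ with $n=2\ell^{k}$, $d=\gcd(i,n)$, valid in characteristic $p$ because $\gcd(2\ell^{k},p)=1$ keeps the roots of $x^{m}-1$ distinct) is the natural self-contained proof, and your bookkeeping matching $\gcd(i,2\ell^{k})$ to the sets $P_{1}^{(2)},P_{1}^{(3)},P_{3}^{(2)},P_{3}^{(3)}$ and the two singletons $\{0\},\{\ell^{k}\}$ checks out in every case.
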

\vspace{1mm}
\begin{remark}
    If $b\in\mathbb{F}_{q}^{*}$ and let $i_{b}=-\operatorname{Ind}_{\alpha}(b)\pmod{2\ell^{k}}$, then $b^{-\frac{q-1}{2\ell^{k}}}=\alpha^{-\frac{q-1}{2\ell^{k}}\operatorname{Ind}_{\alpha}(b)}=\xi^{i_{b}}$. Clearly, $i_{b}\in P=\{i\in \mathbb{Z}:0\leq i\leq 2\ell^{k}-1\}$ and $\chi_{e}(z b^{-\frac{q-1}{2\ell^{k}}})=\zeta_{p}^{z\operatorname{Tr}_{1}^{e}(\xi^{i_{b}})}$ for all $z\in\mathbb{F}_{p}^{*}$. The value of $\operatorname{Tr}_{1}^{e}(\xi^{i_{b}})$ can be calculated from Lemma $\ref{Le3}$. Hence, one can evaluate $S_{2\ell^{k}}(z,b)$ for $z\in\mathbb{F}_{p}^{*}$ and $b\in\mathbb{F}_{q}^{*}$.
\end{remark}

 \vspace{1em}
 Define the following useful notation $w(u,b)$ as follows:
    \begin{equation}\label{EQ22}
        w(u,b)=\sum_{z\in\mathbb{F}_{p}^{*}}\zeta_{p}^{-uz}S_{2\ell^{k}}(z,b),\text{ where $u\in\mathbb{F}_{p}$ and $b\in\mathbb{F}_{q}$.}
    \end{equation}
\vspace{1em}
\begin{lemma}\textnormal{\cite[Proposition 3.2]{P20}}\label{Le5}
    For $b\in\mathbb{F}_{q}^{*}$, let $i_{b}=-\operatorname{Ind}_{\alpha}(b)\pmod{2\ell^{k}}$. Suppose $w(u,b)$ is defined as in $(\ref{EQ22})$.  Then the explicit values of $w(u,b)$ are as follows:
\begin{enumerate}
    \item[\textnormal{(a)}] When $u=b=0$. Then we have
    \begin{flalign*}
   \hskip 15pt     w(0,0) &= \begin{cases}
        \frac{q-1}{\ell^{k}}\left((p-1)\ell^{k}-p\ell+p\right);\text{ if }\ell\equiv 1\pmod{p}, \\
    %    \vspace{2mm}
        \frac{q-1}{\ell^{k-1}}\left((p-1)\ell^{k-1}-p\right);\text{ if }\ell\not\equiv 1\pmod{p}. 
        \end{cases}   &
    \end{flalign*}
    \item[\textnormal{(b)}] When $u=0$ and $b\neq 0$. Then we have
    \begin{flalign*}
  \hskip 15pt      w(0,b) &=\begin{cases}
            \frac{\sqrt{q}+1}{\ell^{k}}(-p\ell^{k}+p\ell-p)+1;\text{ if } i_{b}\in P_{1}^{(2)}\cup P_{1}^{(3)}\cup P_{3}^{(2)}\cup P_{3}^{(3)}, \\
                \frac{\sqrt{q}+1}{\ell^{k}}(p\ell-p)-p+1;\text{ otherwise},
        \end{cases} &
    \end{flalign*}
    if $\ell\equiv1\pmod{p}$; and 
     \begin{flalign*}
  \hskip 15pt      w(0,b) &=\begin{cases}
             \frac{\sqrt{q}+1}{\ell^{k-1}}(-p\ell^{k-1}+p)+1;\text{ if }i_{b}\in\{0,\ell^{k}\}\cup P_{1}^{(2)}\cup P_{1}^{(3)}\cup P_{3}^{(2)}\cup P_{3}^{(3)}, \\
                \frac{p(\sqrt{q}+1)}{\ell^{k-1}}-p+1;\text{ otherwise},
        \end{cases} &
    \end{flalign*}
    if $\ell\not\equiv1\pmod{p}$.
    \vspace{2mm}
    \item[\textnormal{(c)}] When $u\neq 0$ and $b=0$. Then we have
    \begin{flalign*}
 \hskip 15pt       w(u,0) &= \begin{cases}
     \frac{q-1}{2\ell^{k}}\left(-2\ell^{k}+p\right);\text{ if }u^{2}\equiv\phi(\ell^{k})^{2}\pmod{p}\text{ but }u^{2}\not\equiv\ell^{2(k-1)}\pmod{p}, \\
     %   \vspace{2mm}
        \frac{q-1}{2\ell^{k-1}}\left(-2\ell^{k-1}+p\right);\text{ if }\text{ }u\equiv\phi(\ell^{k})\equiv\ell^{k-1}\pmod{p}\text{ or }\\\hspace{3.5cm}u\equiv-\phi(\ell^{k})\equiv-\ell^{k-1}\pmod{p}, \\
    %    \vspace{2mm}
        \frac{q-1}{2\ell^{k}}(-2\ell^{k}+\ell p-p);\text{ if }u^{2}\not\equiv\phi(\ell^{k})^{2}\pmod{p}\text{ but }
        u^{2}\equiv \ell^{2(k-1)}\pmod{p}, \\
    %    \vspace{1mm}
        1-q;\text{ if }u^{2}\not\equiv\phi(\ell^{k})^{2}\text{ and }u^{2}\not\equiv\ell^{2(k-1)}\pmod{p}.
 \end{cases}
    \end{flalign*}
    \item[\textnormal{(d)}] When $u\neq 0$ and $b\neq 0$. Then $w(u,b)=1$ if $u^{2}\not\equiv\phi(\ell^{k})^{2}\pmod{p}$ and $u^{2}\not\equiv\ell^{2(k-1)}\pmod{p}$. Moreover, we have the following results.
    \begin{enumerate}
        \item[\textnormal{(i)}]  \begin{flalign*}
        \hskip 15pt  w(u,b) &=\begin{cases}
             1+p\sqrt{q}-p\left(\frac{\sqrt{q}+1}{2\ell^{k}}\right);\text{ if }i_{b}=0, \\
              1-p\left(\frac{\sqrt{q}+1}{2\ell^{k}}\right); \text{ otherwise}, 
        \end{cases} &
    \end{flalign*}
    if $u\equiv\phi(\ell^{k})\pmod{p}$ with $u^{2}\not\equiv\ell^{2(k-1)}\pmod{p}$;
    
   \item[\textnormal{(ii)}]
    \begin{flalign*}
        \hskip 15pt  w(u,b) &=\begin{cases}
              1+p\sqrt{q}-p\left(\frac{\sqrt{q}+1}{2\ell^{k}}\right);\text{ if }i_{b}=\ell^{k}, \\
              1-p\left(\frac{\sqrt{q}+1}{2\ell^{k}}\right); \text{ otherwise}, 
        \end{cases} &
    \end{flalign*}
    if $u\equiv-\phi(\ell^{k})\pmod{p}$ with $u^{2}\not\equiv\ell^{2(k-1)}\pmod{p}$;
    \item[\textnormal{(iii)}]
    \begin{flalign*}
        \hskip 15pt  w(u,b) &=\begin{cases}
              1+p\sqrt{q}-p(\ell-1)\left(\frac{\sqrt{q}+1}{2\ell^{k}}\right);\text{ if }i_{b}\in P_{1}^{(2)}\cup P_{3}^{(2)}, \\
              1-p(\ell-1)\left(\frac{\sqrt{q}+1}{2\ell^{k}}\right); \text{ otherwise}, 
        \end{cases} &
    \end{flalign*}
    if $u\equiv\ell^{k-1}\pmod{p}$ with $u^{2}\not\equiv\phi(\ell^{k})^{2}\pmod{p}$;
    \item[\textnormal{(iv)}]
    \begin{flalign*}
        \hskip 15pt  w(u,b) &=\begin{cases}
              1+p\sqrt{q}-p(\ell-1)\left(\frac{\sqrt{q}+1}{2\ell^{k}}\right);\text{ if }i_{b}\in P_{1}^{(3)}\cup P_{3}^{(3)}, \\
              1-p(\ell-1)\left(\frac{\sqrt{q}+1}{2\ell^{k}}\right); \text{ otherwise}, 
        \end{cases} &
    \end{flalign*}
    if $u\equiv-\ell^{k-1}\pmod{p}$ with $u^{2}\not\equiv\phi(\ell^{k})^{2}\pmod{p}$;
    \item[\textnormal{(v)}]
    \begin{flalign*}
        \hskip 15pt  w(u,b) &=\begin{cases}
              1+p\sqrt{q}-p\left(\frac{\sqrt{q}+1}{2\ell^{k-1}}\right);\text{ if }i_{b}\in \{0\}\cup P_{1}^{(2)}\cup P_{3}^{(2)}, \\
              1-p\left(\frac{\sqrt{q}+1}{2\ell^{k-1}}\right); \text{ otherwise}, 
        \end{cases} &
    \end{flalign*}
    if $u\equiv\phi(\ell^{k})\equiv\ell^{k-1}\pmod{p}$; and
    \item[\textnormal{(vi)}]
    \begin{flalign*}
        \hskip 15pt  w(u,b) &=\begin{cases}
              1+p\sqrt{q}-p\left(\frac{\sqrt{q}+1}{2\ell^{k-1}}\right);\text{ if }i_{b}\in \{\ell^{k}\}\cup P_{1}^{(3)}\cup P_{3}^{(3)}, \\
              1-p\left(\frac{\sqrt{q}+1}{2\ell^{k-1}}\right); \text{ otherwise}, 
        \end{cases} &
    \end{flalign*}
    if $u\equiv-\phi(\ell^{k})\equiv-\ell^{k-1}\pmod{p}$.
\end{enumerate}
\end{enumerate}
\end{lemma}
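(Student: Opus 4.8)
The plan is to reduce $w(u,b)$ to elementary character sums over $\mathbb{F}_{p}^{*}$ by substituting the closed forms for $S_{2\ell^{k}}(z,b)$ supplied by Lemma~\ref{Le2}, and then to read off each case from the orthogonality relation $\sum_{z\in\mathbb{F}_{p}^{*}}\zeta_{p}^{cz}=p-1$ when $c\equiv 0\pmod{p}$ and $=-1$ otherwise. I would split the argument according to whether $b=0$ or $b\neq 0$, since these invoke different parts of Lemma~\ref{Le2}.

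First, for $b=0$ (covering (a) and (c)), I would use Lemma~\ref{Le2}(a) to write $S_{2\ell^{k}}(z,0)=\frac{q-1}{2\ell^{k}}S_{2\ell^{k}}(z)$ and insert the explicit expression from Lemma~\ref{Le2}(b). This turns $w(u,0)$ into $\frac{q-1}{2\ell^{k}}$ times a sum of five pieces of the form $\sum_{z\in\mathbb{F}_{p}^{*}}\zeta_{p}^{(-u\pm\phi(\ell^{k}))z}$, $\sum_{z\in\mathbb{F}_{p}^{*}}\zeta_{p}^{(-u\pm\ell^{k-1})z}$, and a constant term $(2\ell^{k}-2\ell)\sum_{z\in\mathbb{F}_{p}^{*}}\zeta_{p}^{-uz}$. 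Each exponential piece equals $p-1$ exactly when its exponent vanishes mod $p$, i.e.\ when $u\equiv\pm\phi(\ell^{k})$ or $u\equiv\pm\ell^{k-1}\pmod{p}$. The four subcases of (c) correspond precisely to which of these congruences hold, while (a) is the specialization $u=0$ together with the dichotomy $\ell\equiv 1$ versus $\ell\not\equiv 1\pmod{p}$, which governs whether $\phi(\ell^{k})\equiv 0\pmod{p}$.

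Next, for $b\neq 0$ (cases (b) and (d)), I would apply Lemma~\ref{Le2}(d) with $c=z\xi^{i_{b}}$, use Lemma~\ref{Le2}(c) to replace $S_{2\ell^{k}}(z\xi^{i_{b}})$ by $S_{2\ell^{k}}(z)$, and use the Remark following Lemma~\ref{Le3} to rewrite $\chi_{e}(z\xi^{i_{b}})=\zeta_{p}^{z\operatorname{Tr}_{1}^{e}(\xi^{i_{b}})}$. This decomposes $w(u,b)$ into a Gauss part $\sqrt{q}\sum_{z\in\mathbb{F}_{p}^{*}}\zeta_{p}^{z(\operatorname{Tr}_{1}^{e}(\xi^{i_{b}})-u)}$ plus $-\frac{\sqrt{q}+1}{2\ell^{k}}$ times the same inner sum $T(u):=\sum_{z\in\mathbb{F}_{p}^{*}}\zeta_{p}^{-uz}S_{2\ell^{k}}(z)$ already evaluated in the $b=0$ analysis. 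The Gauss part contributes $p-1$ or $-1$ according as $\operatorname{Tr}_{1}^{e}(\xi^{i_{b}})\equiv u\pmod{p}$ or not, and the value of $\operatorname{Tr}_{1}^{e}(\xi^{i_{b}})$ is supplied by Lemma~\ref{Le3} as one of $\{0,\pm\phi(\ell^{k}),\pm\ell^{k-1}\}$ according to the block $P_{1}^{(2)},P_{1}^{(3)},P_{3}^{(2)},P_{3}^{(3)}$ (or the exceptional indices $0,\ell^{k}$) containing $i_{b}$. Matching the hypothesis on $u$ in each of (i)--(vi) against these five admissible trace values yields exactly the stated partition of the $i_{b}$'s, and combining the two parts gives the claimed closed forms; case (b) is the specialization $u=0$.

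The main obstacle I anticipate is the bookkeeping of congruence coincidences among the five admissible trace values $\{0,\pm\phi(\ell^{k}),\pm\ell^{k-1}\}$ modulo $p$. One must verify that the side conditions imposed in each subcase (for instance $u^{2}\not\equiv\ell^{2(k-1)}\pmod{p}$ in (i), or $u\equiv\phi(\ell^{k})\equiv\ell^{k-1}\pmod{p}$ in (v)) are exactly what is needed to guarantee that $\operatorname{Tr}_{1}^{e}(\xi^{i_{b}})\equiv u\pmod{p}$ holds for the listed set of indices and fails for all others. In particular, since $p$ is odd and $\phi(\ell^{k}),\ell^{k-1}$ are nonzero mod $p$ whenever the relevant congruence on $u$ forces $u\neq 0$, one has $-\phi(\ell^{k})\not\equiv\phi(\ell^{k})$ and $-\ell^{k-1}\not\equiv\ell^{k-1}\pmod{p}$, so no spurious block is absorbed or omitted. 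Once these coincidences are pinned down, the remaining algebra in every case is a routine simplification of $\sqrt{q}(p-1)-\frac{\sqrt{q}+1}{2\ell^{k}}T(u)$ and its $(-1)$-counterpart.
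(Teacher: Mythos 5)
Your derivation is correct: writing $w(u,b)$ as $\sqrt{q}\sum_{z}\zeta_{p}^{z(\operatorname{Tr}_{1}^{e}(\xi^{i_{b}})-u)}-\frac{\sqrt{q}+1}{2\ell^{k}}\sum_{z}\zeta_{p}^{-uz}S_{2\ell^{k}}(z)$ for $b\neq 0$ (and the analogous reduction via Lemma~\ref{Le2}(a),(b) for $b=0$) and then invoking orthogonality together with the trace values from Lemma~\ref{Le3} reproduces every stated case, and your congruence bookkeeping among $\{0,\pm\phi(\ell^{k}),\pm\ell^{k-1}\}$ is exactly the point that needs care and is handled correctly. Note, however, that the paper does not prove this lemma at all --- it imports it verbatim from \cite[Proposition 3.2]{P20} --- so there is no in-paper argument to compare against; your proof is the natural self-contained derivation from the toolkit (Lemmas~\ref{Le2} and~\ref{Le3}) that the paper does set up, and it is consistent with how the cited source obtains the result.
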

 \vspace{1mm}
%\begin{Note}
   It is to be noted that when $u\neq 0$, then $u\equiv\phi(\ell^{k})\equiv\ell^{k-1}\pmod{p}$ would imply $u\not\equiv-\phi(\ell^{k})\pmod{p}$ and $u\not\equiv-\ell^{k-1}\pmod{p}$; and $u\equiv-\phi(\ell^{k})\equiv-\ell^{k-1}\pmod{p}$ would imply $u\not\equiv\phi(\ell^{k})\pmod{p}$ and $u\not\equiv\ell^{k-1}\pmod{p}$.
%\end{Note}
%\end{lemma}
\subsection{Cyclotomic fields $\mathbb{Q}(\zeta_{p})$}
Let $\mathbb{Q}$ be the field of rational numbers and $\mathbb{Z}$ be the set of integers. The cyclotomic field $\mathbb{Q}(\zeta_{p})$ is obtained from the field $\mathbb{Q}$ by adjoining the $p$-th primitive root of unity $\zeta_{p}$. %In this subsection, we give some basic facts on the cyclotomic field $\mathbb{Q}(\zeta_{p})$. 
\vspace{1mm}
\begin{lemma}$\textnormal{\cite{R8}}$\label{Le1}
    Let $F=\mathbb{Q}(\zeta_{p})$ be the $p$-th cyclotomic field over $\mathbb{Q}$. Then 
    \vspace{2mm}
    \begin{enumerate}
        \item[$(1)$]  The ring of integers in $F=\mathbb{Q}(\zeta_{p})$ is defined as $\mathbb{Z}(\zeta_{p})$, and an integral basis of $\mathbb{Z}(\zeta_{p})$ is the set $\{\zeta_{p}^{i}:1\leq i\leq p-1\}$.
       \vspace{2mm}
        \item[$(2)$]  The field extension $F/\mathbb{Q}$ is Galois of degree $p-1$ and Galois group $Gal(F/\mathbb{Q})=\{\sigma_{z}:z\in\mathbb{F}_{p}^{*}\}$, where the automorphism $\sigma_{z}$ of $F$ is defined by $\sigma_{z}(\zeta_{p})=\zeta_{p}^{z}$.
        \vspace{2mm}
        \item[$(3)$] The field $F$ has a unique quadratic subfield $L=\mathbb{Q}(\sqrt{p^{*}})$, where $p^{*}=\eta_{1}(-1)p$, $\eta_{1}$ is the quadratic character over $\mathbb{F}_{p}^{*}$. For $z\in\mathbb{F}_{p}^{*}$, $\sigma_{z}(\sqrt{p^{*}})=\eta_{1}(z)\sqrt{p^{*}}$. Therefore, the Galois group $Gal(L/\mathbb{Q})$ is $\{1,\sigma_{w}\}$, where $w$ is any quadratic non-residue in $\mathbb{F}_{p}$.
    \end{enumerate}
\end{lemma}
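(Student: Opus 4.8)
The three assertions are standard facts from the theory of cyclotomic fields, so the plan is to treat them in order, the only genuinely computational input being the quadratic Gauss sum appearing in part $(3)$. For part $(1)$, I would first recall that $\zeta_{p}$ is a root of the $p$-th cyclotomic polynomial $\Phi_{p}(x)=(x^{p}-1)/(x-1)=x^{p-1}+\cdots+x+1$ and verify its irreducibility over $\mathbb{Q}$ by applying Eisenstein's criterion at $p$ to $\Phi_{p}(x+1)$. This forces $[F:\mathbb{Q}]=p-1$, so $\{1,\zeta_{p},\ldots,\zeta_{p}^{p-2}\}$ is a $\mathbb{Q}$-basis of $F$ and a $\mathbb{Z}$-basis of the order $\mathbb{Z}[\zeta_{p}]$. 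To upgrade this order to the full ring of integers $\mathcal{O}_{F}$, I would compute the discriminant of $\mathbb{Z}[\zeta_{p}]$, which equals $\pm p^{p-2}$, and invoke the usual ramification argument: from $p=\Phi_{p}(1)=\prod_{i=1}^{p-1}(1-\zeta_{p}^{i})$ one sees that $(1-\zeta_{p})$ generates the unique prime above $p$ and that $p$ is totally ramified, whence no rational prime divides the index $[\mathcal{O}_{F}:\mathbb{Z}[\zeta_{p}]]$ and $\mathbb{Z}[\zeta_{p}]=\mathcal{O}_{F}$. Finally, the relation $\sum_{i=0}^{p-1}\zeta_{p}^{i}=0$ expresses $1$ as an integral combination of $\zeta_{p},\ldots,\zeta_{p}^{p-1}$ through a unimodular change of basis, so $\{\zeta_{p}^{i}:1\leq i\leq p-1\}$ is equally an integral basis.

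For part $(2)$, since $F$ contains $\zeta_{p}$ it contains every $p$-th root of unity, hence $F$ is the splitting field of $x^{p}-1$ over $\mathbb{Q}$ and the extension is Galois; any $\sigma\in Gal(F/\mathbb{Q})$ sends $\zeta_{p}$ to a primitive root $\zeta_{p}^{z}$ with $z\in\mathbb{F}_{p}^{*}$, giving an injection into $\mathbb{F}_{p}^{*}$ that is a bijection on comparing orders, so $\sigma_{z}\leftrightarrow z$ as claimed. For part $(3)$, the cyclic group $\mathbb{F}_{p}^{*}$ of even order $p-1$ has a unique index-$2$ subgroup, namely $\ker\eta_{1}$, the quadratic residues, which by the Galois correspondence yields a unique quadratic subfield $L$. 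To identify it, I would introduce the quadratic Gauss sum $g=\sum_{a\in\mathbb{F}_{p}^{*}}\eta_{1}(a)\zeta_{p}^{a}$ and compute $\sigma_{z}(g)=\sum_{a}\eta_{1}(a)\zeta_{p}^{za}=\eta_{1}(z)\sum_{b}\eta_{1}(b)\zeta_{p}^{b}=\eta_{1}(z)\,g$, after substituting $b=za$ and using $\eta_{1}(z^{-1})=\eta_{1}(z)$. Thus $g^{2}$ is fixed by $Gal(F/\mathbb{Q})$ and hence rational; the classical evaluation $g^{2}=p^{*}=\eta_{1}(-1)p$ then gives $\sqrt{p^{*}}=g\in F$ and $L=\mathbb{Q}(\sqrt{p^{*}})$. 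The same computation reads $\sigma_{z}(\sqrt{p^{*}})=\eta_{1}(z)\sqrt{p^{*}}$, so $\sigma_{z}$ fixes $L$ exactly when $z$ is a quadratic residue, giving $Gal(L/\mathbb{Q})=\{1,\sigma_{w}\}$ for any non-residue $w$.

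The main obstacle is the second half of part $(1)$: showing that $\mathbb{Z}[\zeta_{p}]$ is all of $\mathcal{O}_{F}$, and not merely an order, requires the discriminant and total-ramification argument rather than a formal manipulation. By contrast, the Gauss-sum identity $g^{2}=p^{*}$, including the sign $\eta_{1}(-1)$, is the only nonformal computation in part $(3)$ and is entirely classical. Since the lemma is quoted from \cite{R8}, in the write-up these two inputs could simply be cited.
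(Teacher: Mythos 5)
The paper offers no proof of this lemma: it is quoted verbatim as a standard result with a citation to Ireland--Rosen, so there is no in-paper argument to compare against. Your sketch is the classical textbook proof (Eisenstein for irreducibility of $\Phi_{p}$, the discriminant $\pm p^{p-2}$ combined with total ramification at $p$ to get $\mathcal{O}_{F}=\mathbb{Z}[\zeta_{p}]$, the unimodular change of basis to $\{\zeta_{p}^{i}:1\leq i\leq p-1\}$, and the Gauss-sum identity $g^{2}=p^{*}$ with $\sigma_{z}(g)=\eta_{1}(z)g$ for the quadratic subfield), and it is correct; citing these two nonformal inputs, as you suggest, is exactly what the authors do.
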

\vskip 1pt
By Lemma \ref{Le1}, we get, $\sqrt{p^{*}}^{m}=\eta_{1}^{\frac{m}{2}}(-1)p^{\frac{m}{2}}$, $\sigma_{z}(\zeta_{p}^{y})=\zeta_{p}^{zy}$ and $\sigma_{z}(\sqrt{p^{*}}^{m})=\eta_{1}^{m}(z)\sqrt{p^{*}}^{m}=\sqrt{p^{*}}^{m}$, if $m$ is an even positive integer.

\subsection{Weakly regular bent functions}\label{S1}

 Let $f:\mathbb{F}_{p^m}\rightarrow\mathbb{F}_{p}$ be a $p$-ary function. The Walsh transform of $f$ is defined by 
 \begin{equation*}
     \mathcal{W}_{f}(\lambda)=\sum_{x\in\mathbb{F}_{q}}\zeta_{p}^{f(x)-\operatorname{Tr}_{1}^{m}(\lambda x)},
 \end{equation*}
  where $\lambda\in\mathbb{F}_{p^m}$ and $\zeta_{p}$ is a complex primitive $p$-th root of unity. The $p$-ary function $f$ is said to be bent if $|\mathcal{W}_{f}(\lambda)|^{2}=p^{m}$ for all $\lambda\in\mathbb{F}_{p^m}$. A bent function $f$ is said to be regular bent if there exists some $p$-ary function $f^{*}:\mathbb{F}_{p^m}\rightarrow\mathbb{F}_{p}$ such that $\mathcal{W}_{f}(\lambda)=p^{m/2}\zeta_{p}^{f^{*}(\lambda)}$ for every $\lambda\in\mathbb{F}_{p^m}$. The bent function $f$ is called weakly regular bent if there exists a $p$-ary function $f^{*}$ and a complex number $u$ with unit magnitude satisfying $\mathcal{W}_{f}(\lambda)=up^{\frac{m}{2}}\zeta_{p}^{f^{*}(\lambda)}$ for every $\lambda\in\mathbb{F}_{p^m}$. Such a function $f^{*}$ is called the dual of $f$. The dual of a weakly regular bent function is also weakly regular bent. From \cite{S3}, a weakly regular bent function $f(x)$ satisfies $\mathcal{W}_{f}(\lambda)=\epsilon_{f}\sqrt{p^{*}}^{m}\zeta_{p}^{f^{*}(\lambda)}$, where $\epsilon_{f}\in\{\pm 1\}$ is called the sign of the Walsh transform of $f(x)$ and $p^{*}=(-1)^{\frac{p-1}{2}}p$, and the sign of the Walsh transform of $f^{*}(x)$ is $(-1)^{(\frac{p-1}{2})m}\epsilon_{f}$. %It has been turned out that all the derived $p$-ary component functions $f(x)=\operatorname{Tr}_{1}^{e}(aF(x))$ are weakly regular bent functions, where $F(x)$ are the known DO-type or CM-type planar functions over $\mathbb{F}_{q}$ listed above and $a\in\mathbb{F}_{q}^{*}$. More information is referred to \cite{P9,P10,P11,P12}. %The
 %reader is referred to [12–14,20] for more results on weakly regular bent functions.
 %\vskip 1pt
 We denote $\mathcal{RF}$ to be the set of $p$-ary weakly regular bent functions $f(x)$ with $f(0)=0$ and $f(cx)=c^{k_{f}}f(x)$ for any $c\in\mathbb{F}_{p}^{*}$, where $k_{f}$ is a positive even integer with $\operatorname{gcd}(k_{f}-1,p-1)=1$. %Then one can see that all the component functions $\operatorname{Tr}_{1}^{e}(aF(x))$ belong to $\mathcal{RF}$.  
 It is well known that $f^{*}(0)=0$ when $f\in\mathcal{RF}$, and there is an even positive integer $l_{f}$ satisfying $\operatorname{gcd}(l_{f}-1,p-1)=1$ such that $f^{*}(cx)=c^{l_{f}}f^{*}(x)$ for any $c\in\mathbb{F}_{p}^{*}$ and $x\in\mathbb{F}_{p^m}$ (see \cite[Proposition $2.4$ and $2.5$]{P18}). Weakly regular bent functions can be constructed from known planar functions. A function $g$ from $\mathbb{F}_{p^{m}}$ to $\mathbb{F}_{p^{m}}$ is said to be planar if $|\{x\in \mathbb{F}_{p^{m}}: g(x+a)-g(x)=b\}|=1$ for any $a\in\mathbb{F}_{p^m}^{*}$ and $b\in\mathbb{F}_{p^m}$. It is known that the $p$-ary functions $f(x)=\operatorname{Tr}_{1}^{m}(\alpha g(x))$, $\alpha\in\mathbb{F}_{p^m}$ for which $g(x)=\sum_{0\leq i,j\leq m-1}a_{ij}x^{p^{i}+p^{j}}$ are (DO-type) planar, are all weakly regular quadratic bent functions, when $m$ is even. For more information, we refer to \cite{S1,S2,S3}. 
 \vskip 1pt
 In this paper, we mainly focus on those functions $f\in\mathcal{RF}$ such that $f^{*}$ is a quadratic form (i.e., $l_{f}=2$). The known weakly regular bent functions satisfying $l_{f}=2$ are provided in \cite[Table 4]{P12}. 
\vspace{1mm}
\vskip 1pt
\begin{lemma}\textnormal{\cite[Lemma 8]{P18}}\label{LX3}
    Let $f(x)\in\mathcal{RF}$ such that $\mathcal{W}_{f}(0)=\epsilon_{f}\sqrt{p^{*}}^{m}$, where $\epsilon_{f}\in\{1,-1\}$. Let $f^{*}$ be the dual of $f$. Define $N_{f^{*}}(\lambda)=|\{a\in\mathbb{F}_{p^{m}}:f^{*}(a)=\lambda\}|$, where $\lambda\in\mathbb{F}_{p}$. Then we have the following.  
    \begin{enumerate}
    \vspace{1mm}
        \item[\textnormal{(1)}] When $m$ is even, we have 
        \begin{align*}
            N_{f^{*}}(\lambda)=\begin{cases}
                p^{m-1}+\epsilon_{f}(p-1)\frac{\sqrt{p^{*}}^{m}}{p};\text{ if }\lambda=0, \\
                p^{m-1}-\epsilon_{f}\frac{\sqrt{p^{*}}^{m}}{p};\text{ if }\lambda\in\mathbb{F}_{p}^{*}.
            \end{cases}
        \end{align*}
        \item[\textnormal{(2)}] When $m$ is odd, we have
        \begin{align*}
            N_{f^{*}}(\lambda)=\begin{cases}
                p^{m-1};\text{ if }\lambda=0, \\
                p^{m-1}+\epsilon_{f}\eta_{1}(-1)\sqrt{p^{*}}^{m-1};\text{ if }\eta_{1}(\lambda)=1, \\
                p^{m-1}-\epsilon_{f}\eta_{1}(-1)\sqrt{p^{*}}^{m-1};\text{ if }\eta_{1}(\lambda)=-1.
            \end{cases}
        \end{align*}
    \end{enumerate}
\end{lemma}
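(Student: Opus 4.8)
The plan is to evaluate $N_{f^*}(\lambda)$ by the additive-character (orthogonality) method and to reduce the entire statement to a single exponential sum $T(z):=\sum_{a\in\mathbb{F}_{p^m}}\zeta_p^{zf^*(a)}$ for $z\in\mathbb{F}_p^*$. First I would write
\[
N_{f^*}(\lambda)=\frac{1}{p}\sum_{z\in\mathbb{F}_p}\sum_{a\in\mathbb{F}_{p^m}}\zeta_p^{z(f^*(a)-\lambda)}=p^{m-1}+\frac{1}{p}\sum_{z\in\mathbb{F}_p^*}\zeta_p^{-z\lambda}\,T(z),
\]
isolating the $z=0$ contribution $p^{m}/p=p^{m-1}$. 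Once $T(z)$ is known, both parts of the lemma follow by evaluating two elementary character sums in the variable $z$.

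The core step is to establish
\[
T(z)=\eta_1(z)^m\,\epsilon_{f^*}\sqrt{p^*}^{\,m},\qquad \epsilon_{f^*}=(-1)^{\frac{p-1}{2}m}\epsilon_f=\eta_1(-1)^m\epsilon_f .
\]
I would first observe $T(1)=\mathcal{W}_{f^*}(0)$; since $f^*$ is weakly regular bent with Walsh sign $\eta_1(-1)^m\epsilon_f$ and its dual vanishes at $0$, this gives $T(1)=\epsilon_{f^*}\sqrt{p^*}^{\,m}$. For general $z\in\mathbb{F}_p^*$, I would use the homogeneity $f^*(cx)=c^2f^*(x)$ (the case $l_f=2$ central to this paper, i.e. $f^*$ a quadratic form): the substitution $a\mapsto ca$ yields $T(zc^2)=T(z)$, so $T$ depends only on the square class of $z$. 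To separate the value on non-squares I would diagonalise $f^*$ over $\mathbb{F}_p$ (its nondegeneracy following from the bentness of $f$), say $f^*\cong\sum_{i=1}^{m}a_iy_i^2$ with $a_i\in\mathbb{F}_p^*$, and apply Lemma \ref{Le6} coordinatewise (with $r_2=za_i$, $r_1=r_0=0$) together with $G(\eta_1,\chi_1)=\sqrt{p^*}$. This produces $T(z)=\eta_1(z)^m\eta_1(\Delta)\sqrt{p^*}^{\,m}$ with $\Delta=\prod_i a_i$, and comparison with the value at $z=1$ forces $\eta_1(\Delta)=\epsilon_{f^*}$, giving the displayed formula. (The analogous statement for general $f\in\mathcal{RF}$ is the content of the cited \cite{P18}.)

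Finally I would substitute this value back. When $m$ is even, $\eta_1(z)^m=1$, so $T(z)=\epsilon_f\sqrt{p^*}^{\,m}$ is constant and $N_{f^*}(\lambda)=p^{m-1}+\epsilon_f\frac{\sqrt{p^*}^{\,m}}{p}\sum_{z\in\mathbb{F}_p^*}\zeta_p^{-z\lambda}$; the inner sum equals $p-1$ for $\lambda=0$ and $-1$ for $\lambda\in\mathbb{F}_p^*$, which is exactly part (1). When $m$ is odd, $\eta_1(z)^m=\eta_1(z)$, so the relevant sum becomes $\sum_{z\in\mathbb{F}_p^*}\eta_1(z)\zeta_p^{-z\lambda}$, which vanishes for $\lambda=0$ and equals the quadratic Gauss sum $\eta_1(-\lambda)\sqrt{p^*}=\eta_1(-1)\eta_1(\lambda)\sqrt{p^*}$ for $\lambda\neq0$. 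Carrying the factor $\eta_1(\lambda)$ along and simplifying $\tfrac1p\,\epsilon_{f^*}\eta_1(-1)\sqrt{p^*}^{\,m+1}=\epsilon_f\eta_1(-1)\sqrt{p^*}^{\,m-1}$ (using $\epsilon_{f^*}=\eta_1(-1)\epsilon_f$ and $\sqrt{p^*}^{\,m+1}/p=\eta_1(-1)\sqrt{p^*}^{\,m-1}$ for odd $m$) yields the $\eta_1(\lambda)=\pm1$ cases of part (2).

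The main obstacle is the middle step: the homogeneity argument by itself only shows that $T$ is constant on each square class, and separating the value on squares from that on non-squares — equivalently producing the $\eta_1(z)^m$ factor and identifying $\eta_1(\Delta)$ with $\epsilon_{f^*}$ — is precisely what requires the quadratic-form diagonalisation and the Gauss-sum input of Lemma \ref{Le6}. Everything after $T(z)$ is known is a short, routine character-sum computation.
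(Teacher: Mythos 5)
The paper does not actually prove this lemma; it is quoted verbatim from \cite[Lemma 8]{P18}, so there is no internal proof to compare against. Your orthogonality reduction to $T(z)=\sum_{a}\zeta_p^{zf^*(a)}$ and your endgame are correct: the evaluation of $\sum_{z\in\mathbb{F}_p^*}\zeta_p^{-z\lambda}$ for even $m$, the Gauss sum $\sum_{z\in\mathbb{F}_p^*}\eta_1(z)\zeta_p^{-z\lambda}=\eta_1(-\lambda)\sqrt{p^*}$ for odd $m$, and the simplification $\sqrt{p^*}^{\,m+1}/p=\eta_1(-1)\sqrt{p^*}^{\,m-1}$ together reproduce both cases of the statement exactly once the claimed formula for $T(z)$ is granted.

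The weak point is exactly where you locate it, and for the lemma \emph{as stated} it is a genuine gap. First, the statement is for arbitrary $f\in\mathcal{RF}$, where $f^*$ is homogeneous of some even degree $l_f$ that need not equal $2$; your square-class and diagonalisation arguments use $l_f=2$ essentially, so at best you prove a special case. Second, even when $l_f=2$, the paper's hypothesis ``$f^*$ is a quadratic form'' means only the scalar homogeneity $f^*(cx)=c^2f^*(x)$ for $c\in\mathbb{F}_p^*$; this does not by itself yield a bilinear polarisation, hence no diagonalisation $f^*\cong\sum_i a_iy_i^2$ to feed into Lemma \ref{Le6}. Both problems disappear if you replace the entire middle step by the Galois-action argument that the paper itself uses in the proofs of Lemmas \ref{LX1} and \ref{LX2}: since $\sigma_z(\zeta_p)=\zeta_p^z$, one has
\begin{equation*}
T(z)=\sigma_z\bigl(T(1)\bigr)=\sigma_z\bigl(\mathcal{W}_{f^*}(0)\bigr)=\sigma_z\bigl(\epsilon_{f^*}\sqrt{p^*}^{\,m}\bigr)=\eta_1(z)^m\,\epsilon_{f^*}\sqrt{p^*}^{\,m}
\end{equation*}
by Lemma \ref{Le1}(3), using only $\mathcal{W}_{f^*}(0)=\epsilon_{f^*}\sqrt{p^*}^{\,m}$ (which holds because $(f^*)^*$ vanishes at $0$ for $f\in\mathcal{RF}$). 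This one line is the full content of your ``core step,'' requires no homogeneity or nondegeneracy input, is valid for every $f\in\mathcal{RF}$, and combined with your final computation gives a complete proof of the cited lemma.
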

 \section{Linear codes from the defining set $D_{u}$}\label{Sec3}
 %\subsection{Some auxiliary results}
 In this section, we consider the linear code $\mathcal{C}_{D_{u}}$ defined in $(\ref{EQ2})$ and $(\ref{EQ9})$. We need the following lemmas for determining the lengths and weight distributions of $\mathcal{C}_{D_{u}}$.
 \vspace{1mm}
 \begin{lemma}\label{Le8}
     If $D_{u}=\{(x,y)\in\mathbb{F}_{q}^{2}\backslash\{(0,0)\}:\operatorname{Tr}_{1}^{e}(x+y^{\frac{q-1}{2\ell^{k}}})=u\}$, where $u\in\mathbb{F}_{p}$. Let $n_{1}=|D_{u}|$. Then \begin{align*}
  \hskip 30pt    n_{1}&= \begin{cases}
         p^{2e-1}-1;\text{ if }u=0, \\
         p^{2e-1};\text{ if }u\in\mathbb{F}_{p}^{*}.
             \end{cases}  &
     \end{align*} 
 \end{lemma}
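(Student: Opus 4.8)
The plan is to count directly, exploiting the fact that the variable $x$ enters the defining equation only through its trace, so that the exponent $N$ on $y$ is irrelevant to the cardinality.

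First I would pass to a count over the whole plane $\mathbb{F}_{q}^{2}$ and then remove the origin. Set
\begin{equation*}
    \widetilde{n}=\left|\left\{(x,y)\in\mathbb{F}_{q}^{2}:\operatorname{Tr}_{1}^{e}(x+y^{N})=u\right\}\right|,\quad N=\frac{q-1}{2\ell^{k}}.
\end{equation*}
Since $N>0$ we have $0^{N}=0$, so $\operatorname{Tr}_{1}^{e}(0+0^{N})=0$, and the excluded pair $(0,0)$ satisfies the defining equation if and only if $u=0$. Hence $n_{1}=\widetilde{n}-1$ when $u=0$ and $n_{1}=\widetilde{n}$ when $u\in\mathbb{F}_{p}^{*}$, and it remains to show $\widetilde{n}=p^{2e-1}$ in both cases.

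The key step is to fix $y\in\mathbb{F}_{q}$ and write $c=\operatorname{Tr}_{1}^{e}(y^{N})\in\mathbb{F}_{p}$. By $\mathbb{F}_{p}$-linearity of the trace, $\operatorname{Tr}_{1}^{e}(x+y^{N})=\operatorname{Tr}_{1}^{e}(x)+c$, so the defining condition becomes $\operatorname{Tr}_{1}^{e}(x)=u-c$. Because $\operatorname{Tr}_{1}^{e}\colon\mathbb{F}_{q}\to\mathbb{F}_{p}$ is a surjective $\mathbb{F}_{p}$-linear map, every fibre $\{x:\operatorname{Tr}_{1}^{e}(x)=v\}$ has exactly $|\ker\operatorname{Tr}_{1}^{e}|=q/p=p^{e-1}$ elements, independently of $v$. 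Thus for each fixed $y$ there are precisely $p^{e-1}$ admissible $x$, with no dependence on $c$, on $y$, or on $u$. Summing over the $q=p^{e}$ choices of $y$ yields $\widetilde{n}=p^{e}\cdot p^{e-1}=p^{2e-1}$, and combining with the previous paragraph gives exactly the claimed two-case formula.

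As a cross-check one may instead invoke additive-character orthogonality, writing the indicator of the trace condition as $\tfrac{1}{p}\sum_{z\in\mathbb{F}_{p}}\zeta_{p}^{z(\operatorname{Tr}_{1}^{e}(x+y^{N})-u)}$; after summing over $(x,y)$, the inner sum over $x$ factors as $\sum_{x\in\mathbb{F}_{q}}\chi_{e}(zx)$, which vanishes for $z\neq 0$, so only the $z=0$ term survives and again produces $\widetilde{n}=q^{2}/p=p^{2e-1}$. I expect no real obstacle here: the only points requiring care are verifying $0^{N}=0$ and bookkeeping the single excluded point. The nontrivial role of the exponent $N$ does not enter the length computation at all, and will surface only later in the determination of the weight distribution, where the Weil sums of Lemma~\ref{Le2} and the values $w(u,b)$ of Lemma~\ref{Le5} become essential.
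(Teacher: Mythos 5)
Your proposal is correct. Your primary argument is an elementary fibre count: fix $y$, use additivity of the trace to reduce the condition to $\operatorname{Tr}_{1}^{e}(x)=u-\operatorname{Tr}_{1}^{e}(y^{N})$, and note that every fibre of the surjective $\mathbb{F}_{p}$-linear map $\operatorname{Tr}_{1}^{e}$ has exactly $p^{e-1}$ elements, giving $q\cdot p^{e-1}=p^{2e-1}$ before removing the origin; the case split on $u$ then comes solely from whether $(0,0)$ satisfies the defining equation. The paper instead proceeds directly by additive-character orthogonality, writing the indicator of the trace condition as $\tfrac{1}{p}\sum_{z\in\mathbb{F}_{p}}\zeta_{p}^{z(\operatorname{Tr}_{1}^{e}(x+y^{N})-u)}$ and subtracting the term $\tfrac{1}{p}\sum_{z\in\mathbb{F}_{p}}\zeta_{p}^{-uz}$ for the excluded origin; the sum over $x$ annihilates all $z\neq 0$ and only $q^{2}/p$ survives. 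This is precisely the computation you sketch as a cross-check, so the two routes are equivalent in content. Your fibre-count is the more transparent argument for this particular lemma, since the exponent $N$ genuinely plays no role; the paper's character-sum formulation has the advantage of setting up the exact template that is reused (with the extra factor $\sum_{w}\zeta_{p}^{w\operatorname{Tr}_{1}^{e}(\gamma x+\delta y)}$) in the subsequent computation of $N_{\gamma,\delta}^{(1)}$, where orthogonality no longer suffices and the Weil sums of Lemma~\ref{Le2} and the quantities $w(u,b)$ of Lemma~\ref{Le5} take over. Either proof is acceptable; both are complete.
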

 \begin{proof}
 From the properties of the canonical additive characters and using the fact $\sum_{x\in\mathbb{F}_{q}}\zeta_{p}^{z\operatorname{Tr}_{1}^{e}(x)}=0$ for all $z\in\mathbb{F}_{p}^{*}$, we obtain
      \begin{align*}
     n_{1} &=\frac{1}{p}\sum_{x,y\in\mathbb{F}_{q}}\sum_{z\in\mathbb{F}_{p}}\zeta_{p}^{z\left(\operatorname{Tr}_{1}^{e}(x+y^{\frac{q-1}{2\ell^{k}}})-u\right)}-\frac{1}{p}\sum_{z\in\mathbb{F}_{p}}\zeta_{p}^{-uz} \\
     &=\frac{1}{p}\left(q^{2}+\sum_{z\in\mathbb{F}_{p}^{*}}\zeta_{p}^{-uz}\sum_{x\in\mathbb{F}_{q}}\zeta_{p}^{z\operatorname{Tr}_{1}^{e}(x)}\sum_{y\in\mathbb{F}_{q}}\zeta_{p}^{z\operatorname{Tr}_{1}^{e}(y^{\frac{q-1}{2\ell^{k}}})}\right)-\frac{1}{p}\sum_{z\in\mathbb{F}_{p}}\zeta_{p}^{-uz} \\
     &= p^{2e-1}-\frac{1}{p}\sum_{z\in\mathbb{F}_{p}}\zeta_{p}^{-uz} \\
     &= \begin{cases}
         p^{2e-1}-1;\text{ if }u=0, \\
         p^{2e-1};\text{ if }u\in\mathbb{F}_{p}^{*}.
     \end{cases}
 \end{align*}
 \end{proof}
Define a notation 
\begin{equation}\label{EQ5}
    N_{\gamma,\delta}^{(1)}=|\{(x,y)\in D_{u}: \operatorname{Tr}_{1}^{e}(\gamma x+\delta y)=0\}|,
\end{equation}
and the Hamming weight of the codeword $c(\gamma,\delta)\in \mathcal{C}_{D_{u}}$ is as follows:
\begin{equation}\label{EQ11}
    \operatorname{wt}(c(\gamma,\delta))=n_{1}-N_{\gamma,\delta}^{(1)}.
\end{equation}
\begin{lemma}\label{EQ12}
    Let $\gamma,\delta\in\mathbb{F}_{q}$ with $(\gamma,\delta)\neq (0,0)$ and $N_{\gamma,\delta}^{(1)}$ be defined in $(\ref{EQ5})$. For $b\in\mathbb{F}_{q}^{*}$, let $\operatorname{Ind}_{\alpha}(b)$ be the index of $b$ with respect to the base $\alpha$ and $i_{b}=-\operatorname{Ind}_{\alpha}(b)\pmod{2\ell^{k}}$. Assume that the set $P$ and its partitions are defined as in section $\ref{ssec2.1}$. Then the explicit value of $N_{\gamma,\delta}^{(1)}$ is given as follows:
    \begin{enumerate} 
        \item[\textnormal{(a)}] If $u=0$, then
        \begin{enumerate}
            \item[\textnormal{(i)}] When $\ell\equiv 1\pmod{p}$, then
            \begin{flalign*}
                N_{\gamma,\delta}^{(1)} &=\begin{cases}
                    p^{2e-1}-1-p^{e-1}(\ell-1)\frac{q-1}{\ell^{k}};\text{ if }\gamma\in\mathbb{F}_{p}^{*}\text{ and }\delta=0, \\
                    p^{2e-2}-1+p^{e-1}(\ell-1)\frac{\sqrt{q}+1}{\ell^{k}}-p^{e-1}\sqrt{q};\text{ if }\gamma\in\mathbb{F}_{p}^{*}\text{ and }\delta\neq 0\text{ with }\\\hspace{5.4cm} i_{\delta}\in P_{1}^{(2)}\cup P_{1}^{(3)}\cup P_{3}^{(2)}\cup P_{3}^{(3)}, \\
                    p^{2e-2}-1+p^{e-1}(\ell-1)\frac{\sqrt{q}+1}{\ell^{k}};\text{ if }\gamma\in\mathbb{F}_{p}^{*}\text{ and }\delta\neq 0\text{ with }\\\hspace{3.1cm} i_{\delta}\in P\backslash(P_{1}^{(2)}\cup P_{1}^{(3)}\cup P_{3}^{(2)}\cup P_{3}^{(3)}), \\
                    p^{2e-2}-1;\text{ otherwise}.
                \end{cases} &
            \end{flalign*}
            \item[\textnormal{(ii)}] When $\ell\not\equiv 1\pmod{p}$, then
            \begin{flalign*}
                N_{\gamma,\delta}^{(1)} &=\begin{cases}
                    p^{2e-1}-1-p^{e-1}\frac{q-1}{\ell^{k-1}};\text{ if }\gamma\in\mathbb{F}_{p}^{*}\text{ and }\delta=0, \\
                    p^{2e-2}-1+p^{e-1}\frac{\sqrt{q}+1}{\ell^{k-1}}-p^{e-1}\sqrt{q};\text{ if }\gamma\in\mathbb{F}_{p}^{*}\text{ and }\delta\neq 0\text{ with }\\ \hspace{5.2cm}i_{\delta}\in \{0,\ell^{k}\}\cup P_{1}^{(2)}\cup P_{1}^{(3)}\cup P_{3}^{(2)}\cup P_{3}^{(3)}, \\
                    p^{2e-2}-1+p^{e-1}\frac{\sqrt{q}+1}{\ell^{k-1}};\text{ if }\gamma\in\mathbb{F}_{p}^{*}\text{ and }\delta\neq 0\text{ with }\\ \hspace{3.68cm}i_{\delta}\in P\backslash(\{0,\ell^{k}\}\cup P_{1}^{(2)}\cup P_{1}^{(3)}\cup P_{3}^{(2)}\cup P_{3}^{(3)}), \\
                    p^{2e-2}-1;\text{ otherwise}.
                \end{cases} &
            \end{flalign*}
        \end{enumerate}
        \item[\textnormal{(b)}] If $u\neq 0$, then
        \begin{enumerate}
             \item[\textnormal{(i)}] When $u^{2}\not\equiv\phi(\ell^{k})^{2}\pmod{p}$ and $u^{2}\not\equiv\ell^{2(k-1)}\pmod{p}$, then
        \begin{flalign*}
            N_{\gamma,\delta}^{(1)} &=\begin{cases}
                0;\text{ if }\gamma\in\mathbb{F}_{p}^{*}\text{ and }\delta=0, \\
                p^{2e-2};\text{ otherwise}. 
            \end{cases}  &
        \end{flalign*}
        \item[\textnormal{(ii)}] When $u\equiv\phi(\ell^{k})\pmod{p}$ and $u^{2}\not\equiv\ell^{2(k-1)}\pmod{p}$, then
        \begin{flalign*}
            N_{\gamma,\delta}^{(1)} &=\begin{cases}
                p^{e-1}\frac{q-1}{2\ell^{k}};\text{ if }\gamma\in\mathbb{F}_{p}^{*}\text{ and }\delta=0, \\
                p^{2e-2}+p^{e-1}\sqrt{q}-p^{e-1}\frac{\sqrt{q}+1}{2\ell^{k}};\text{ if }\gamma\in\mathbb{F}_{p}^{*}\text{ and }\delta\neq 0\text{ with }i_{\delta}=0, \\
                p^{2e-2}-p^{e-1}\frac{\sqrt{q}+1}{2\ell^{k}};\text{ if }\gamma\in\mathbb{F}_{p}^{*}\text{ and }\delta\neq 0\text{ with }i_{\delta}\in P\backslash\{0\}, \\
                p^{2e-2};\text{ otherwise}.
            \end{cases}  &
        \end{flalign*}
         \item[\textnormal{(iii)}] When $u\equiv-\phi(\ell^{k})\pmod{p}$ and $u^{2}\not\equiv\ell^{2(k-1)}\pmod{p}$, then
        \begin{flalign*}
            N_{\gamma,\delta}^{(1)} &=\begin{cases}
                p^{e-1}\frac{q-1}{2\ell^{k}};\text{ if }\gamma\in\mathbb{F}_{p}^{*}\text{ and }\delta=0, \\
                p^{2e-2}+p^{e-1}\sqrt{q}-p^{e-1}\frac{\sqrt{q}+1}{2\ell^{k}};\text{ if }\gamma\in\mathbb{F}_{p}^{*}\text{ and }\delta\neq 0\text{ with }i_{\delta}=\ell^{k}, \\
                p^{2e-2}-p^{e-1}\frac{\sqrt{q}+1}{2\ell^{k}};\text{ if }\gamma\in\mathbb{F}_{p}^{*}\text{ and }\delta\neq 0\text{ with }i_{\delta}\in P\backslash\{\ell^{k}\}, \\
                p^{2e-2};\text{ otherwise}.
            \end{cases}  &
        \end{flalign*}
        \item[\textnormal{(iv)}] When $u\equiv\phi(\ell^{k})\equiv\ell^{k-1}\pmod{p}$, then
        \begin{flalign*}
            N_{\gamma,\delta}^{(1)} &=\begin{cases}
                p^{e-1}\frac{q-1}{2\ell^{k-1}};\text{ if }\gamma\in\mathbb{F}_{p}^{*}\text{ and }\delta=0, \\
                p^{2e-2}+p^{e-1}\sqrt{q}-p^{e-1}\frac{\sqrt{q}+1}{2\ell^{k-1}};\text{ if }\gamma\in\mathbb{F}_{p}^{*}\text{ and }\delta\neq 0\text{ with }i_{\delta}\in\{0\}\cup P_{1}^{(2)}\cup P_{3}^{(2)}, \\
                p^{2e-2}-p^{e-1}\frac{\sqrt{q}+1}{2\ell^{k-1}};\text{ if }\gamma\in\mathbb{F}_{p}^{*}\text{ and }\delta\neq 0\text{ with }i_{\delta}\in P\backslash(\{0\}\cup P_{1}^{(2)}\cup P_{3}^{(2)}), \\
                p^{2e-2};\text{ otherwise}.
            \end{cases}  &
        \end{flalign*}
        \item[\textnormal{(v)}] When $u\equiv-\phi(\ell^{k})\equiv-\ell^{k-1}\pmod{p}$, then
        \begin{flalign*}
            N_{\gamma,\delta}^{(1)} &=\begin{cases}
                p^{e-1}\frac{q-1}{2\ell^{k-1}};\text{ if }\gamma\in\mathbb{F}_{p}^{*}\text{ and }\delta=0, \\
                p^{2e-2}+p^{e-1}\sqrt{q}-p^{e-1}\frac{\sqrt{q}+1}{2\ell^{k-1}};\text{ if }\gamma\in\mathbb{F}_{p}^{*}\text{ and }\delta\neq 0\text{ with }i_{\delta}\in\{\ell^{k}\}\cup P_{1}^{(3)}\cup P_{3}^{(3)}, \\
                p^{2e-2}-p^{e-1}\frac{\sqrt{q}+1}{2\ell^{k-1}};\text{ if }\gamma\in\mathbb{F}_{p}^{*}\text{ and }\delta\neq 0\text{ with }i_{\delta}\in P\backslash(\{\ell^{k}\}\cup P_{1}^{(3)}\cup P_{3}^{(3)}), \\
                p^{2e-2};\text{ otherwise}.
            \end{cases}  &
        \end{flalign*}
         \item[\textnormal{(vi)}] When $u^{2}\not\equiv\phi(\ell^{k})^{2}\pmod{p}$ and $u\equiv\ell^{(k-1)}\pmod{p}$, then
        \begin{flalign*}
            N_{\gamma,\delta}^{(1)} &=\begin{cases}
                p^{e-1}(\ell-1)\frac{q-1}{2\ell^{k}};\text{ if }\gamma\in\mathbb{F}_{p}^{*}\text{ and }\delta=0, \\
                p^{2e-2}+p^{e-1}\sqrt{q}-p^{e-1}(\ell-1)\frac{\sqrt{q}+1}{2\ell^{k}};\text{ if }\gamma\in\mathbb{F}_{p}^{*}\text{ and }\delta\neq 0\text{ with }i_{\delta}\in P_{1}^{(2)}\cup P_{3}^{(2)}, \\
                p^{2e-2}-p^{e-1}(\ell-1)\frac{\sqrt{q}+1}{2\ell^{k}};\text{ if }\gamma\in\mathbb{F}_{p}^{*}\text{ and }\delta\neq 0\text{ with }i_{\delta}\in P\backslash(P_{1}^{(2)}\cup P_{3}^{(2)}), \\
                p^{2e-2};\text{ otherwise}.
            \end{cases}  &
        \end{flalign*}
        \item[\textnormal{(vii)}] When $u^{2}\not\equiv\phi(\ell^{k})^{2}\pmod{p}$ and $u\equiv-\ell^{(k-1)}\pmod{p}$, then
        \begin{flalign*}
            N_{\gamma,\delta}^{(1)} &=\begin{cases}
                p^{e-1}(\ell-1)\frac{q-1}{2\ell^{k}};\text{ if }\gamma\in\mathbb{F}_{p}^{*}\text{ and }\delta=0, \\
                p^{2e-2}+p^{e-1}\sqrt{q}-p^{e-1}(\ell-1)\frac{\sqrt{q}+1}{2\ell^{k}};\text{ if }\gamma\in\mathbb{F}_{p}^{*}\text{ and }\delta\neq 0\text{ with }i_{\delta}\in P_{1}^{(3)}\cup P_{3}^{(3)}, \\
                p^{2e-2}-p^{e-1}(\ell-1)\frac{\sqrt{q}+1}{2\ell^{k}};\text{ if }\gamma\in\mathbb{F}_{p}^{*}\text{ and }\delta\neq 0\text{ with }i_{\delta}\in P\backslash(P_{1}^{(3)}\cup P_{3}^{(3)}), \\
                p^{2e-2};\text{ otherwise}.
            \end{cases}  &
        \end{flalign*}
        \end{enumerate}
       
    \end{enumerate}
    \end{lemma}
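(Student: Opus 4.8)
I would evaluate $N_{\gamma,\delta}^{(1)}$ by the additive-character (orthogonality) method and then reduce the resulting inner sum to the function $w(u,b)$ of \eqref{EQ22}, whose values are already tabulated in Lemma~\ref{Le5}; the final step is bookkeeping across the branches of that lemma. Set $N=\frac{q-1}{2\ell^{k}}$.

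First I would detect the two trace conditions by orthogonality of $\chi_e$. Writing the indicator of $\operatorname{Tr}_1^e(x+y^{N})=u$ as $\frac1p\sum_{z_1\in\mathbb{F}_p}\zeta_p^{z_1(\operatorname{Tr}_1^e(x+y^{N})-u)}$ and that of $\operatorname{Tr}_1^e(\gamma x+\delta y)=0$ as $\frac1p\sum_{z_2\in\mathbb{F}_p}\zeta_p^{z_2\operatorname{Tr}_1^e(\gamma x+\delta y)}$, and restoring the excluded point $(0,0)$ (which satisfies both conditions precisely when $u=0$), I obtain
\[
N_{\gamma,\delta}^{(1)}=\frac1{p^2}\sum_{z_1,z_2\in\mathbb{F}_p}\zeta_p^{-z_1u}\Big(\sum_{x\in\mathbb{F}_q}\chi_e\big((z_1+z_2\gamma)x\big)\Big)\Big(\sum_{y\in\mathbb{F}_q}\chi_e\big(z_1 y^{N}+z_2\delta y\big)\Big)-\varepsilon_0,
\]
where $\varepsilon_0=1$ if $u=0$ and $\varepsilon_0=0$ otherwise. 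The $x$-sum is $q$ when $z_1+z_2\gamma=0$ in $\mathbb{F}_q$ and $0$ otherwise, which is what forces the case split in the statement.

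The constraint $z_1+z_2\gamma=0$ with $z_1,z_2\in\mathbb{F}_p$ behaves differently according to $\gamma$. If $\gamma\notin\mathbb{F}_p$, then $1,\gamma$ are $\mathbb{F}_p$-independent, so only $z_1=z_2=0$ survives; if $\gamma=0$ (hence $\delta\neq0$), the constraint is $z_1=0$ and the $y$-sum $\sum_y\chi_e(z_2\delta y)$ then kills every $z_2\neq0$. In both situations only $(z_1,z_2)=(0,0)$ contributes and $N_{\gamma,\delta}^{(1)}=p^{2e-2}-\varepsilon_0$, which is exactly the ``otherwise'' value in every branch. The remaining work is entirely the case $\gamma\in\mathbb{F}_p^{*}$: here $z_1=-z_2\gamma$ and the prefactor becomes $\zeta_p^{-z_1u}=\zeta_p^{z_2\gamma u}$. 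The $z_2=0$ term contributes $p^{2e-2}$, and for $z_2\neq0$, since $0^{N}=0$, the $y$-sum over $\mathbb{F}_q$ equals $1+S_{2\ell^{k}}(-z_2\gamma,\,z_2\delta)$ with $S_{2\ell^{k}}$ as in \eqref{EQ3}. Summing over $z_2\neq0$, the constant ``$1$'' produces the geometric sum $\sum_{z_2\in\mathbb{F}_p^{*}}\zeta_p^{z_2\gamma u}$, equal to $p-1$ if $u=0$ and $-1$ otherwise. For the $S_{2\ell^{k}}$ part I substitute $z=-z_2\gamma$ (a bijection of $\mathbb{F}_p^{*}$); then $\zeta_p^{z_2\gamma u}=\zeta_p^{-zu}$ and $S_{2\ell^{k}}(z,\,-z\gamma^{-1}\delta)=S_{2\ell^{k}}(z,\delta)$ by Lemma~\ref{Le2}(e), since $-z\gamma^{-1}\in\mathbb{F}_p^{*}$. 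Hence this part collapses exactly to $w(u,\delta)$ of \eqref{EQ22}, yielding the master identity
\[
N_{\gamma,\delta}^{(1)}=p^{2e-2}+p^{e-2}\!\!\sum_{z\in\mathbb{F}_p^{*}}\!\zeta_p^{\gamma u z}+p^{e-2}\,w(u,\delta)-\varepsilon_0 .
\]

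It then remains to insert the explicit values of $w(u,\delta)$: for $\delta=0$ I use Lemma~\ref{Le5}(a) (if $u=0$) or Lemma~\ref{Le5}(c) (if $u\neq0$); for $\delta\neq0$ I use Lemma~\ref{Le5}(b) (if $u=0$) or Lemma~\ref{Le5}(d) (if $u\neq0$), selecting the sub-branch dictated by $i_\delta=i_b$. In each case the $p$-power bulk terms coming out of $w$ recombine with $p^{2e-2}+p^{e-2}\sum_{z}\zeta_p^{\gamma u z}-\varepsilon_0$ to give the advertised leading constant ($p^{2e-1}-1$, $p^{2e-2}-1$, or $p^{2e-2}$, according to the branch), while the Gauss-sum and $\ell$-power remainders produce the corrections involving $\frac{q-1}{\ell^{k}}$, $\sqrt q$, and $\frac{\sqrt q+1}{2\ell^{k}}$ (and their $\ell^{k-1}$ analogues). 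I expect the sole difficulty to be organizational rather than conceptual: faithfully matching each congruence hypothesis on $u$ (namely $u\equiv\pm\phi(\ell^{k})$, $u\equiv\pm\ell^{k-1}$, $u^{2}\equiv\phi(\ell^{k})^{2}$ or $\ell^{2(k-1)}$, and $\ell\equiv1\pmod p$ or not) and each membership condition on $i_\delta$ ($i_\delta\in\{0\},\{\ell^{k}\},P_{1}^{(2)}\cup P_{3}^{(2)},P_{1}^{(3)}\cup P_{3}^{(3)},\dots$) to the correct branch of Lemma~\ref{Le5}, and then carrying out the routine but repetitive simplifications to reach the closed forms listed in (a)(i)--(b)(vii).
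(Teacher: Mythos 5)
Your proposal is correct and follows essentially the same route as the paper: double character-sum orthogonality, the observation that only $\gamma\in\mathbb{F}_p^{*}$ yields a nontrivial inner sum, reduction to $w(u,\delta)$ via Lemma~\ref{Le2}(e) and the change of variable in $\mathbb{F}_p^{*}$, and then substitution of Lemma~\ref{Le5}. Your ``master identity'' is exactly the paper's Eq.~(\ref{EQ6}) combined with its evaluation of $\Omega_1$, and like the paper you leave the remaining branch-by-branch substitutions as routine bookkeeping.
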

   \begin{proof}
   According to the definition, we have
       \begin{align}\label{EQ6}
           N_{\gamma,\delta}^{(1)} &=|\{(x,y)\in\mathbb{F}_{q}^{2}\backslash\{(0,0)\}:\operatorname{Tr}_{1}^{e}(x+y^{\frac{q-1}{2\ell^{k}}})=u\text{ and }\operatorname{Tr}_{1}^{e}(\gamma x+\delta y)=0\}| \nonumber\\
           &= \sum_{x,y\in\mathbb{F}_{q}}\left(\frac{1}{p}\sum_{z\in\mathbb{F}_{p}}\zeta_{p}^{z\{\operatorname{Tr}_{1}^{e}(x+y^{\frac{q-1}{2\ell^{k}}})-u\}}\right)\times\left(\frac{1}{p}\sum_{\delta\in\mathbb{F}_{p}}\zeta_{p}^{\delta\operatorname{Tr}_{1}^{e}(\gamma x+\delta y)}\right)-\frac{1}{p}\sum_{z\in\mathbb{F}_{p}}\zeta_{p}^{-uz} \nonumber \\
           &= \frac{1}{p^{2}}\sum_{x,y\in\mathbb{F}_{q}}\left(\sum_{z\in\mathbb{F}_{p}^{*}}\zeta_{p}^{z\{\operatorname{Tr}_{1}^{e}(x+y^{\frac{q-1}{2\ell^{k}}})-u\}}+1\right)\left(\sum_{\delta\in\mathbb{F}_{p}^{*}}\zeta_{p}^{\delta\operatorname{Tr}_{1}^{e}(\gamma x+\delta y)}+1\right)-\frac{1}{p}\sum_{z\in\mathbb{F}_{p}}\zeta_{p}^{-uz} \nonumber \\
           &= \frac{1}{p^{2}}\left(q^{2}+\sum_{x,y\in\mathbb{F}_{q}}\sum_{z,\delta\in\mathbb{F}_{p}^{*}}\zeta_{p}^{z(\operatorname{Tr}_{1}^{e}(x+y^{\frac{q-1}{2\ell^{k}}})-u)+\delta\operatorname{Tr}_{1}^{e}(\gamma x+\delta y)}\right)-\frac{1}{p}\sum_{z\in\mathbb{F}_{p}}\zeta_{p}^{-uz} \nonumber \\
           &=p^{2e-2}+\frac{1}{p^{2}}\Omega_{1}-\frac{1}{p}\sum_{z\in\mathbb{F}_{p}}\zeta_{p}^{-uz},  
       \end{align}
       where $\Omega_{1}=\sum_{x,y\in\mathbb{F}_{q}}\sum_{z,w\in\mathbb{F}_{p}^{*}}\zeta_{p}^{z(\operatorname{Tr}_{1}^{e}(x+y^{\frac{q-1}{2\ell^{k}}})-u)+w\operatorname{Tr}_{1}^{e}(\gamma x+\delta y)}$.
       \vskip 1pt
       Observe that
       \begin{align}\label{EQ7}
       \hskip 15pt    \Omega_{1} &= \sum_{z\in\mathbb{F}_{p}^{*}}\zeta_{p}^{-uz}\sum_{w\in\mathbb{F}_{p}^{*}}\sum_{x\in\mathbb{F}_{q}}\zeta_{p}^{z\operatorname{Tr}_{1}^{e}\left((1+\gamma\frac{w}{z})x\right)}\times\sum_{y\in\mathbb{F}_{q}}\zeta_{p}^{z\operatorname{Tr}_{1}^{e}(y^{\frac{q-1}{2\ell^{k}}}+\delta\frac{w}{z}y)} \nonumber & \\
           &= \sum_{z\in\mathbb{F}_{p}^{*}}\zeta_{p}^{-uz}\sum_{w\in\mathbb{F}_{p}^{*}}\sum_{x\in\mathbb{F}_{q}}\zeta_{p}^{z\operatorname{Tr}_{1}^{e}\left((1+\gamma w)x\right)}\times\sum_{y\in\mathbb{F}_{q}}\zeta_{p}^{z\operatorname{Tr}_{1}^{e}(y^{\frac{q-1}{2\ell^{k}}}+\delta w y)} &
       \end{align}
       Eq. $(\ref{EQ7})$ holds because for a fixed $z$, as $w$ varies over $\mathbb{F}_{p}^{*}$, so does $\frac{w}{z}$. If $\gamma=0$ and $\delta\neq 0$, it is easy to see that $\Omega_{1}=0$. It then implies that $N_{0,\delta}^{(1)}=\begin{cases}
           p^{2e-2}-1;\text{ if }u=0, \\
           p^{2e-2};\text{ if }u\in\mathbb{F}_{p}^{*}.
       \end{cases}$
       \vskip 1pt
       If $\gamma\neq 0$, then from $(\ref{EQ7})$, we have
       \begin{align*}
           \Omega_{1} &=\begin{cases}
               0;\text{ if }\gamma\not\in\mathbb{F}_{p}^{*}, \\
               q\sum_{z\in\mathbb{F}_{p}^{*}}\sum_{y\in\mathbb{F}_{q}}\zeta_{p}^{z\{\operatorname{Tr}_{1}^{e}(y^{\frac{q-1}{2\ell^{k}}}-\frac{\delta}{\gamma}y)-u\}};\text{ if }\gamma\in\mathbb{F}_{p}^{*}.
           \end{cases}
       \end{align*}
       When $\gamma\in\mathbb{F}_{p}^{*}$, from the definition in Eq. $(\ref{EQ3})$ and Lemma \ref{Le2} (e), we have
       \begin{align}
     \hskip 15pt      \Omega_{1} &= q\sum_{z\in\mathbb{F}_{p}^{*}}\zeta_{p}^{-uz}\left(S_{2\ell^{k}}(z,-\frac{\delta}{\gamma}z)+1\right) \nonumber &\\
           &= q\sum_{z\in\mathbb{F}_{p}^{*}}\zeta_{p}^{-uz}\left(S_{2\ell^{k}}(z,\delta)+1\right)  \nonumber & \\
           &= q\left(w(u,\delta)+\sum_{z\in\mathbb{F}_{p}^{*}}\zeta_{p}^{-uz}\right)   &
       \end{align}
       For the case of $u=0$ and $\ell\equiv 1\pmod{p}$, by using Lemma \ref{Le5}, we obtain
       \begin{align}\label{EQ8}
    \hskip 15pt       \Omega_{1} &=\begin{cases}
        q\left(q(p-1)-p(\ell-1)\frac{q-1}{\ell^{k}}\right);\text{ if }\delta=0, \\
           pq\left((\ell-1)\frac{\sqrt{q}+1}{\ell^{k}}-\sqrt{q}\right);\text{ if }\delta\neq 0\text{ with }i_{\delta}\in P_{1}^{(2)}\cup P_{1}^{(3)}\cup P_{3}^{(2)}\cup P_{3}^{(3)},  \\
           pq(\ell-1)\frac{\sqrt{q}+1}{\ell^{k}};\text{ otherwise}. 
    \end{cases} &
       \end{align}
       This implies that when $u=0$ and $\ell\equiv 1\pmod{p}$, by combining Eq. $(\ref{EQ6})$ and $(\ref{EQ8})$, we obtain
       \begin{flalign*}
        \hspace{15pt}   N_{\gamma,\delta}^{(1)} &=\begin{cases}
               p^{2e-2}+\frac{1}{p^{2}}\Omega_{1}-1;\text{ if }\gamma\in\mathbb{F}_{p}^{*}\text{ and }\delta\in\mathbb{F}_{q}, \\
               p^{2e-2}-1;\text{ otherwise}. 
           \end{cases}  \\
           &=\begin{cases}
                p^{2e-1}-1-p^{e-1}(\ell-1)\frac{q-1}{\ell^{k}};\text{ if }\gamma\in\mathbb{F}_{p}^{*}\text{ and }\delta=0, \\
                    p^{2e-2}-1+p^{e-1}(\ell-1)\frac{\sqrt{q}+1}{\ell^{k}}-p^{e-1}\sqrt{q};\text{ if }\gamma\in\mathbb{F}_{p}^{*}\text{ and }\delta\neq 0\text{ with }\\\hspace{5.4cm}i_{\delta}\in P_{1}^{(2)}\cup P_{1}^{(3)}\cup P_{3}^{(2)}\cup P_{3}^{(3)}, \\
                    p^{2e-2}-1+p^{e-1}(\ell-1)\frac{\sqrt{q}+1}{\ell^{k}};\text{ if }\gamma\in\mathbb{F}_{p}^{*}\text{ and }\delta\neq 0\text{ with }i_{\delta}\in P\backslash(P_{1}^{(2)}\cup P_{1}^{(3)}\cup P_{3}^{(2)}\cup P_{3}^{(3)}), \\
                    p^{2e-2}-1;\text{ otherwise}.
                \end{cases} &
           \end{flalign*}
           Consequently, $N_{\gamma,\delta}^{(1)}$ can be determined accordingly for the remaining cases. The proof is completed.
      \end{proof}
      
We are now ready to prove the main results of this section.      
\vspace{1mm}
      \begin{theorem}\label{Th1}
           Assume that $q=p^{e}$ and $e=\phi(\ell^{k})$. Let $\mathcal{C}_{{D}_{u}}=\{(\operatorname{Tr}_{1}^{e}(\gamma x+\delta y))_{(x,y)\in D_{u}}:\gamma,\delta\in\mathbb{F}_{q}\}$, where $D_{u}$ is defined in $(\ref{EQ9})$ with $u\in\mathbb{F}_{p}$. Then,
          \begin{enumerate}
              \item[\textnormal{(1)}] If $u=0$ and $\ell\equiv 1\pmod{p}$ , then $\mathcal{C}_{{D}_{0}}$ is at most four-weight $[p^{2e-1}-1,2e ]$ linear code over $\mathbb{F}_{p}$ and its weight distribution given in Table $\ref{Table8}$. \\
               \end{enumerate}              
\begin{center}
%\begin{table}{}
%\centering
\captionof{table}{}
\label{Table8}
\begin{tabular}{c c} % 'l' = left-aligned, 'c' = centered, '|' = vertical lines
\hline
\textnormal{Weight} & \textnormal{Frequency} \\ 
\hline
 \textnormal{0} & \textnormal{1} \\ 
%\hline
$p^{e-1}(\ell-1)\frac{q-1}{\ell^{k}}$ & $p-1$ \\ 
%\hline
$p^{2e-2}(p-1)+p^{e-1}\left(\sqrt{q}-(\ell-1)\frac{\sqrt{q}+1}{\ell^{k}}\right)$ & $(p-1)(\ell-1)\frac{q-1}{\ell^{k}}$ \\ 
$p^{2e-2}(p-1)-p^{e-1}(\ell-1)\frac{\sqrt{q}+1}{\ell^{k}}$ & $(p-1)(\ell^{k}-\ell+1)\frac{q-1}{\ell^{k}}$ \\
$p^{2e-2}(p-1)$ & $(q-1)+(q-p)q$ \\
\hline 
\end{tabular}
%\end{table} 
\end{center}
%\vspace{2mm}
\begin{enumerate}
    \item[\textnormal{(2)}] If $u=0$ and $\ell\not\equiv 1\pmod{p}$, then $\mathcal{C}_{{D}_{0}}$ is at most four-weight $[p^{2e-1}-1,2e]$ linear code over $\mathbb{F}_{p}$ and its weight distribution given in Table $\ref{Table9}$. \\
\end{enumerate}
%\begin{table}[htbp]
\centering
\captionof{table}{}
\label{Table9}
\begin{tabular}{c c} % 'l' = left-aligned, 'c' = centered, '|' = vertical lines
\hline
\textnormal{Weight} & \textnormal{Frequency} \\ 
\hline
 \textnormal{0} & \textnormal{1} \\ 
%\hline
$p^{e-1}\frac{q-1}{\ell^{k-1}}$ & $p-1$ \\ 
%\hline
$p^{2e-2}(p-1)+p^{e-1}\left(\sqrt{q}-\frac{\sqrt{q}+1}{\ell^{k-1}}\right)$ & $(p-1)\frac{q-1}{\ell^{k-1}}$ \\ 
$p^{2e-2}(p-1)-p^{e-1}\frac{\sqrt{q}+1}{\ell^{k-1}}$ & $(p-1)(\ell^{k-1}-1)\frac{q-1}{\ell^{k-1}}$ \\
$p^{2e-2}(p-1)$ & $(q-1)+(q-p)q$ \\
\hline
\end{tabular}
%\end{table}
\begin{enumerate}
    \item[\textnormal{(3)}] If $u\neq 0$, $u^{2}\not\equiv \phi(\ell^{k})^{2}\pmod{p}$ and $u^{2}\not\equiv\ell^{2(k-1)}\pmod{p}$, then $\mathcal{C}_{{D}_{u}}$ is a two-weight $[p^{2e-1},2e]$ linear code over $\mathbb{F}_{p}$ and its weight distribution given in Table $\ref{Table10}$. \\
\end{enumerate}
%\begin{table}[htbp]
\centering
\captionof{table}{}
\label{Table10}
\begin{tabular}{c c} % 'l' = left-aligned, 'c' = centered, '|' = vertical lines
\hline
\textnormal{Weight} & \textnormal{Frequency} \\ 
\hline
 \textnormal{0} & \textnormal{1} \\ 
%\hline
$p^{2e-1}$ & $p-1$ \\ 
%\hline
$p^{2e-2}(p-1)$ & $q^{2}-p$ \\ 
\hline
\end{tabular}
%\end{table}
%\vspace{2mm}
\begin{enumerate}
    \item[\textnormal{(4)}] If $u\neq 0$, $u^{2}\equiv \phi(\ell^{k})^{2}\pmod{p}$ and $u^{2}\not\equiv\ell^{2(k-1)}\pmod{p}$, then $\mathcal{C}_{{D}_{u}}$ is at most four-weight $[p^{2e-1},2e]$ linear code over $\mathbb{F}_{p}$ and its weight distribution given in Table $\ref{Table11}$. \\
    \end{enumerate}
%\begin{table}[htbp]
\centering
\captionof{table}{}
\label{Table11}
\begin{tabular}{c c} % 'l' = left-aligned, 'c' = centered, '|' = vertical lines
\hline
\textnormal{Weight} & \textnormal{Frequency} \\ 
\hline
 \textnormal{0} & \textnormal{1} \\ 
%\hline
$p^{2e-1}-p^{e-1}\frac{q-1}{2\ell^{k}}$ & $p-1$ \\ 
%\hline
$p^{2e-2}(p-1)-p^{e-1}\sqrt{q}+p^{e-1}\frac{\sqrt{q}+1}{2\ell^{k}}$ & $(p-1)\frac{q-1}{2\ell^{k}}$ \\ 

$p^{2e-2}(p-1)+p^{e-1}\frac{\sqrt{q}+1}{2\ell^{k}}$ & $(p-1)(2\ell^{k}-1)\frac{q-1}{2\ell^{k}}$ \\

$p^{2e-2}(p-1)$ & $(q-1)+q(q-p)$ \\
\hline
\end{tabular}
%\end{table}
%\vspace{2mm}
\begin{enumerate}
    \item[\textnormal{(5)}] If $u\neq 0$ with either $u\equiv \phi(\ell^{k})\equiv\ell^{k-1}\pmod{p}$ or $u\equiv-\phi(\ell^{k})\equiv-\ell^{k-1}\pmod{p}$, then $\mathcal{C}_{{D}_{u}}$ is at most four-weight $[p^{2e-1},2e]$ linear code over $\mathbb{F}_{p}$ and its weight distribution given in Table $\ref{Table12}$. \\
    \end{enumerate}
%\begin{table}[htbp]
\centering
\captionof{table}{}
\label{Table12}
\begin{tabular}{c c} % 'l' = left-aligned, 'c' = centered, '|' = vertical lines
\hline
\textnormal{Weight} & \textnormal{Frequency} \\ 
\hline
 \textnormal{0} & \textnormal{1} \\ 
%\hline
$p^{2e-1}-p^{e-1}\frac{q-1}{2\ell^{k-1}}$ & $p-1$ \\ 
%\hline
$p^{2e-2}(p-1)-p^{e-1}\sqrt{q}+p^{e-1}\frac{\sqrt{q}+1}{2\ell^{k-1}}$ & $(p-1)\frac{q-1}{2\ell^{k-1}}$ \\ 

$p^{2e-2}(p-1)+p^{e-1}\frac{\sqrt{q}+1}{2\ell^{k-1}}$ & $(p-1)(2\ell^{k-1}-1)\frac{q-1}{2\ell^{k-1}}$ \\

$p^{2e-2}(p-1)$ & $(q-1)+q(q-p)$ \\
\hline
\end{tabular}
%\end{table}
%\vspace{2mm}
\begin{enumerate}
    \item[\textnormal{(6)}] If $u\neq 0$, $u^{2}\not\equiv\phi(\ell^{k})^{2}\pmod{p}$ and $u^{2}\equiv\ell^{2(k-1)}\pmod{p}$, then $\mathcal{C}_{{D}_{u}}$ is at most four-weight $[p^{2e-1},2e]$ linear code over $\mathbb{F}_{p}$ and its weight distribution given in Table $\ref{Table13}$. \\
    \end{enumerate}
 %      \vspace{2mm}
%\begin{table}[htbp]
\centering
\captionof{table}{}
\label{Table13}
\begin{tabular}{c c} % 'l' = left-aligned, 'c' = centered, '|' = vertical lines
\hline
\textnormal{Weight} & \textnormal{Frequency} \\ 
\hline
 \textnormal{0} & \textnormal{1} \\ 
%\hline
$p^{2e-1}-p^{e-1}(\ell-1)\frac{q-1}{2\ell^{k}}$ & $p-1$ \\ 
%\hline
$p^{2e-2}(p-1)-p^{e-1}\sqrt{q}+p^{e-1}(\ell-1)\frac{\sqrt{q}+1}{2\ell^{k}}$ & $(p-1)(\ell-1)\frac{q-1}{2\ell^{k}}$ \\ 

$p^{2e-2}(p-1)+p^{e-1}(\ell-1)\frac{\sqrt{q}+1}{2\ell^{k}}$ & $(p-1)(2\ell^{k}-\ell+1)\frac{q-1}{2\ell^{k}}$ \\

$p^{2e-2}(p-1)$ & $(q-1)+q(q-p)$ \\
\hline
\end{tabular}
%\end{table}
\end{theorem}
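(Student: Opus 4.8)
The plan is to read off every weight directly from the identity $\operatorname{wt}(c(\gamma,\delta))=n_{1}-N_{\gamma,\delta}^{(1)}$ recorded in \eqref{EQ11}, where the length $n_{1}$ is supplied by Lemma~\ref{Le8} and the counting function $N_{\gamma,\delta}^{(1)}$ has already been evaluated in all cases by Lemma~\ref{EQ12}. Thus the theorem reduces to two mechanical tasks: (i) subtracting each tabulated value of $N_{\gamma,\delta}^{(1)}$ from the relevant $n_{1}$ to obtain a weight, and (ii) counting how many pairs $(\gamma,\delta)\in\mathbb{F}_{q}^{2}$ realise each weight. I would organise the argument by the six disjoint conditions on $u$ exactly as in the statement: parts~(1)--(2) use Lemma~\ref{EQ12}(a) with $n_{1}=p^{2e-1}-1$, and parts~(3)--(6) use Lemma~\ref{EQ12}(b) with $n_{1}=p^{2e-1}$.

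The key enumerative ingredient is the distribution of the index $i_{\delta}=-\operatorname{Ind}_{\alpha}(\delta)\pmod{2\ell^{k}}$ as $\delta$ ranges over $\mathbb{F}_{q}^{*}$. Since $p$ is a primitive root modulo $2\ell^{k}$, we have $2\ell^{k}\mid q-1$, so each residue $i\in P$ is attained by exactly $\tfrac{q-1}{2\ell^{k}}$ elements $\delta$. Combining this with the cardinalities $|P_{1}^{(2)}|=|P_{1}^{(3)}|=|P_{3}^{(2)}|=|P_{3}^{(3)}|=\tfrac{\ell-1}{2}$, which follow directly from the defining congruences of these sets (and match the trace values in Lemma~\ref{Le3}), furnishes every frequency. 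For instance $|P_{1}^{(2)}\cup P_{1}^{(3)}\cup P_{3}^{(2)}\cup P_{3}^{(3)}|=2(\ell-1)$ accounts for $(\ell-1)\tfrac{q-1}{\ell^{k}}$ values of $\delta$, while adjoining $\{0,\ell^{k}\}$ raises the size to $2\ell$, i.e.\ $\tfrac{q-1}{\ell^{k-1}}$ values; the half-unions $\{0\}\cup P_{1}^{(2)}\cup P_{3}^{(2)}$ and $P_{1}^{(2)}\cup P_{3}^{(2)}$ used in parts~(5)--(6) have sizes $\ell$ and $\ell-1$, giving $\tfrac{q-1}{2\ell^{k-1}}$ and $(\ell-1)\tfrac{q-1}{2\ell^{k}}$ values respectively, and are handled identically.

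Carrying this out, in each case the three branches of Lemma~\ref{EQ12} with $\gamma\in\mathbb{F}_{p}^{*}$ produce the three nonzero non-generic weights, each weighted by the factor $p-1$ for the choice of $\gamma$ and by the number of admissible $\delta$ computed above; the final \emph{otherwise} branch always yields the generic weight $p^{2e-2}(p-1)$ with complementary frequency $q^{2}-(p-1)q-1=(q-1)+(q-p)q$. The two subcases merged in each of parts~(4)--(6), namely $u\equiv\phi(\ell^{k})$ versus $u\equiv-\phi(\ell^{k})$ and $u\equiv\ell^{k-1}$ versus $u\equiv-\ell^{k-1}$, yield identical weight distributions because the distinguished index sets in the paired subcases ($\{0\}$ vs.\ $\{\ell^{k}\}$; $\{0\}\cup P_{1}^{(2)}\cup P_{3}^{(2)}$ vs.\ $\{\ell^{k}\}\cup P_{1}^{(3)}\cup P_{3}^{(3)}$; $P_{1}^{(2)}\cup P_{3}^{(2)}$ vs.\ $P_{1}^{(3)}\cup P_{3}^{(3)}$) have equal cardinalities while the corresponding weight values coincide, so it suffices to treat one sign in each pair.

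Finally, $(\gamma,\delta)\mapsto c(\gamma,\delta)$ is an $\mathbb{F}_{p}$-linear map from $\mathbb{F}_{q}^{2}$, of $\mathbb{F}_{p}$-dimension $2e$, into $\mathbb{F}_{p}^{n_{1}}$. A codeword vanishes precisely when $N_{\gamma,\delta}^{(1)}=n_{1}$, and inspection of the values in Lemma~\ref{EQ12} shows this forces $(\gamma,\delta)=(0,0)$, since every nonzero weight listed is strictly positive; hence the map is injective and $\dim_{\mathbb{F}_{p}}\mathcal{C}_{D_{u}}=2e$. As a consistency check, the tabulated frequencies partition the $q^{2}$ pairs and sum to $q^{2}=p^{2e}$, the $\delta$-counts telescoping because a distinguished index set and its complement in $P$ have sizes summing to $2\ell^{k}$. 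I expect the only genuine labour to be this systematic accounting of the index classes across the six cases; no analytic obstacle remains, as all the underlying exponential sums have already been resolved in Lemmas~\ref{Le5} and~\ref{EQ12}.
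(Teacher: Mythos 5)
Your proposal is correct and follows essentially the same route as the paper: compute $\operatorname{wt}(c(\gamma,\delta))=n_{1}-N_{\gamma,\delta}^{(1)}$ from Lemmas~\ref{Le8} and~\ref{EQ12}, then count frequencies via the uniform distribution $|\{\delta\in\mathbb{F}_{q}^{*}:i_{\delta}\in Q\}|=\tfrac{q-1}{2\ell^{k}}|Q|$ together with $|P_{1}^{(2)}|=|P_{1}^{(3)}|=|P_{3}^{(2)}|=|P_{3}^{(3)}|=\tfrac{\ell-1}{2}$, and conclude the dimension from the absence of nonzero pairs giving the zero codeword. The paper writes out only the case $u=0$, $\ell\equiv 1\pmod{p}$ and declares the rest analogous, whereas you additionally make explicit the index-set cardinalities and the sign symmetry that merge the paired subcases in parts~(4)--(6); this is a welcome but non-essential elaboration of the same argument.
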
             
          \begin{proof}
             We will prove only for the case $u=0$ and $\ell\equiv 1\pmod{p}$ as the remaining cases can be proved in a similar manner.  From Lemma \ref{EQ12} and Eq. $(\ref{EQ11})$, we conclude that $\operatorname{wt}(c(\gamma,\delta))\in\{w_{1},w_{2},w_{3},w_{4}\}$ for every $(\gamma,\delta)\in\mathbb{F}_{q}^{2}\backslash\{(0,0)\}$, where
              \begin{align*}
                  w_{1} &= p^{e-1}(\ell-1)\frac{q-1}{\ell^{k}}, \\
                  w_{2} &= p^{2e-2}(p-1)+p^{e-1}\left(\sqrt{q}-(\ell-1)\frac{\sqrt{q}+1}{\ell^{k}}\right), \\
                  w_{3} &= p^{2e-2}(p-1)-p^{e-1}(\ell-1)\frac{\sqrt{q}+1}{\ell^{k}}, \\
                  w_{4} &= p^{2e-2}(p-1). \\
              \end{align*}
              In the following, we will show their frequencies $A_{w_{1}}$, $A_{w_{2}}$, $A_{w_{3}}$ and $A_{w_{4}}$, respectively. 
              \vskip 1pt
              Recall that, for $\delta\in\mathbb{F}_{q}^{*}$, $\operatorname{Ind}_{\alpha}(\delta)$ be the index of $\delta$ with respect to the base $\alpha$, where $\alpha$ generates $\mathbb{F}_{q}^{*}$ and $i_{\delta}=-\operatorname{Ind}_{\alpha}(\delta)\pmod{2\ell^{k}}$. Since $i_{\delta}\in P=\{0,1,\cdots,2\ell^{k}-1\}$, then for any subset $Q$ of $P$, it is easy to see that $|\{\delta\in\mathbb{F}_{q}^{*}:i_{\delta}\in Q\}|=\frac{q-1}{2\ell^{k}}|Q|$. Note that $|P_{1}^{(2)}|=|P_{1}^{(3)}|=|P_{3}^{(2)}|=|P_{3}^{(3)}|=\frac{\ell-1}{2}$. Therefore, we obtain
              
              \begin{align*}
                  A_{w_{1}} &=|\{(\gamma,\delta)\in\mathbb{F}_{q}^{2}:\gamma\in\mathbb{F}_{p}^{*}\text{ and }\delta=0\}|=(p-1), \\
                  A_{w_{2}} &=|\{(\gamma,\delta)\in\mathbb{F}_{q}^{2}:\gamma\in\mathbb{F}_{p}^{*}\text{ and }\delta\neq 0\text{ with }i_{\delta}\in P_{1}^{(2)}\cup P_{1}^{(3)}\cup P_{3}^{(2)}\cup P_{3}^{(3)}\}|  \\
                  &= (p-1)(\ell-1)\frac{q-1}{\ell^{k}}, \\
                  A_{w_{3}} &=|\{(\gamma,\delta)\in\mathbb{F}_{q}^{2}:\gamma\in\mathbb{F}_{p}^{*}\text{ and }\delta\neq 0\text{ with }i_{\delta}\in P\backslash (P_{1}^{(2)}\cup P_{1}^{(3)}\cup P_{3}^{(2)}\cup P_{3}^{(3)})\}|  \\
                  &= (p-1)(\ell^{k}-\ell+1)\frac{q-1}{\ell^{k}}, \\
                  A_{w_{4}} &=|\{(\gamma,\delta)\in\mathbb{F}_{q}^{2}:\gamma=0\text{ and }\delta\in\mathbb{F}_{q}^{*}\}|+|\{(\gamma,\delta)\in\mathbb{F}_{q}^{2}:\gamma\in\mathbb{F}_{q}\backslash\mathbb{F}_{p}\text{ and }\delta\in\mathbb{F}_{q}\}| \\
                  &= (q-1)+(q-p)q.
              \end{align*}
              From the above arguments, we see that $\operatorname{wt}(c(\gamma,\delta))\neq 0$ for all $(\gamma,\delta)\in\mathbb{F}_{q}^{2}\backslash\{(0,0)\}$. Therefore, $|\mathcal{C}_{D_{0}}|=q^{2}$ and then $\mathcal{C}_{D_{0}}$ is of dimension $2e$. The length of the code $\mathcal{C}_{D_{u}}$ is obvious from Lemma $\ref{Le8}$.  This completes the proof.
          \end{proof}

\begin{remark}\label{Re1}
   If  $u\neq 0$, $u^{2}\equiv \phi(\ell^{k})^{2}\pmod{p}$, and $u^{2}\not\equiv\ell^{2(k-1)}\pmod{p}$, we have $w_{\textit{min}}=p^{2e-2}(p-1)-p^{e-1}\sqrt{q}+p^{e-1}\frac{\sqrt{q}+1}{2\ell^{k}}$ and $w_{\textit{max}}=p^{2e-1}-p^{e-1}\frac{q-1}{2\ell^{k}}$ for $p^{e-1}\geq\frac{q+\sqrt{q}}{2\ell^{k}}$. One can check that $\frac{w_{\textit{min}}}{w_{\textit{max}}}>\frac{p-1}{p}$ when $(p-1)q+1>p\sqrt{q}(2\ell^{k}-1)$. Hence, under these conditions, the $p$-ary linear code $\mathcal{C}_{D_{u}}$ in Theorem $\ref{Th1}(4)$ is minimal.
   \vskip 1 pt
   If $u\neq 0$ with either $u\equiv \phi(\ell^{k})\equiv\ell^{k-1}\pmod{p}$ or $u\equiv-\phi(\ell^{k})\equiv-\ell^{k-1}\pmod{p}$, we have $w_{\textit{min}}=p^{2e-2}(p-1)-p^{e-1}\sqrt{q}+p^{e-1}\frac{\sqrt{q}+1}{2\ell^{k-1}}$ and $w_{\textit{max}}=p^{2e-1}-p^{e-1}\frac{q-1}{2\ell^{k-1}}$ for $k>1$ and $p^{e-1}\geq\frac{q+\sqrt{q}}{2\ell^{k-1}}$. One can check that $\frac{w_{\textit{min}}}{w_{\textit{max}}}>\frac{p-1}{p}$ when $(p-1)q+1>p\sqrt{q}(2\ell^{k-1}-1)$. Hence, under these conditions, the $p$-ary linear code $\mathcal{C}_{D_{u}}$ in Theorem $\ref{Th1}(5)$ is minimal.
 \vskip 1 pt
   If $u\neq 0$, $u^{2}\not\equiv\phi(\ell^{k})^{2}\pmod{p}$, and $u^{2}\equiv\ell^{2(k-1)}\pmod{p}$, we have $w_{\textit{min}}=p^{2e-2}(p-1)-p^{e-1}\sqrt{q}+p^{e-1}(\ell-1)\frac{\sqrt{q}+1}{2\ell^{k}}$ and $w_{\textit{max}}=p^{2e-1}-p^{e-1}(\ell-1)\frac{q-1}{2\ell^{k}}$ for $p^{e-1}\geq(\ell-1)\frac{q+\sqrt{q}}{2\ell^{k}}$. One can check that $\frac{w_{\textit{min}}}{w_{\textit{max}}}>\frac{p-1}{p}$ when $(p-1)q+1>p\sqrt{q}\left(\frac{2\ell^{k}-\ell+1}{\ell-1}\right)$. Hence, under these conditions, the $p$-ary linear code $\mathcal{C}_{D_{u}}$ in Theorem $\ref{Th1}(6)$ is minimal.
\end{remark}
\vspace{1em}
\begin{example}
    Consider $p=5$, $\ell=3$, $k=1$, and $u=0$; then $\ell\not\equiv 1\pmod{p}$ and $e=\phi(6)=2$. The code $\mathcal{C}_{D_{0}}$ has parameters $[124,4,95]$, and its weight enumerator is $1+96z^{95}+524z^{100}+4z^{120}$. By a magma program, the obtained code $\mathcal{C}_{D_{0}}$ is consistent with Theorem $\ref{Th1}(2)$ (see Table $\ref{Table9}$).
\end{example}
\vspace{1em}
\begin{example}
    Consider $p=5$, $\ell=3$, $k=2$, and $u(\neq 0)\in\mathbb{F}_{5}$ such that $u^{2}\equiv (\phi(\ell^k))^2\equiv 1\pmod{5}$ and $u^{2}\not\equiv \ell^{2(k-1)}\equiv 4\pmod{5}$. Then $u\in\{1,4\}$. The linear code $\mathcal{C}_{D_{u}}$ has parameters $[48828125,12,38693750]$, and its weight enumerator is $1+3472z^{38693750}+244078124z^{39062500}+59024z^{39084375}+4z^{46115625}$. Due to the large computation required, using a magma program to verify whether the code $\mathcal{C}_{D_{u}}$ is consistent with Theorem $\ref{Th1}(4)$ is not possible.
\end{example}
%\vspace{1em}
%\begin{example}
 %   Consider $p=7$, $\ell=3$, $k=1$, and $u(\neq 0)\in\mathbb{F}_{7}$ such that $u^{2}\not\equiv 1,4\pmod{7}$, then $u\in\{3,4\}$ and $e=\phi(6)=2$. The code $\mathcal{C}_{D_{u}}$ has parameters $[343,4,294]$ and weight enumerator $1+2394z^{294}+6z^{343}$. This code is optimal according to the Griesmer bound.
%\end{example}
\vspace{1em}
\begin{example}
    Consider $p=7$, $\ell=5$, $k=1$, and $u(\neq 0)\in\mathbb{F}_{7}$ such that $u^{2}\equiv(\phi(\ell^k))^2\equiv 2\pmod{7}$ and $u^{2}\not\equiv \ell^{2(k-1)}\equiv 1\pmod{7}$. Then $u\in\{3,4\}$. The linear code $\mathcal{C}_{D_{u}}$ has parameters $[823543,8,690802]$, and its weight enumerator is  $1+1440z^{690802}+5750394z^{705894}+12960z^{707609}+6z^{741223}$.  Due to the large computation required, using a magma program to verify whether the code $\mathcal{C}_{D_{u}}$ is consistent with Theorem $\ref{Th1}(4)$ is not possible.
 \end{example}
\vspace{1em}
\begin{example}
   Consider $p=5$, $\ell=3$, $k=1$, and $u(\neq 0)\in\mathbb{F}_{5}$ such that $u^{2}\equiv\ell^{2(k-1)}\equiv 1\pmod{5}$ and $u^{2}\not\equiv(\phi(\ell^k))^2\equiv 4\pmod{5}$, then $u\in\{1,4\}$ and $e=\phi(6)=2$. The code $\mathcal{C}_{D_{u}}$ has parameters $[125,4,85]$ and its weight enumerator is $1+36z^{85}+524z^{100}+64z^{110}$. By a magma program, the obtained code $\mathcal{C}_{D_{u}}$ is consistent with Theorem $\ref{Th1}(6)$.
\end{example}
\vspace{1em}
The following corollary provides an optimal class of linear codes under a special condition and can be proved directly from Theorem \ref{Th1}.
\vspace{2mm}
%\vskip 1pt
\begin{corollary}\label{Cor1}
Let $u\in\mathbb{F}_{p}^{*}$ be such that $u^{2}\not\equiv \phi(\ell^{k})^{2}\pmod{p}$ and $u^{2}\not\equiv\ell^{2(k-1)}\pmod{p}$, then the code $\mathcal{C}_{{D}_{u}}$ defined in Theorem $\ref{Th1}$ is an optimal two-weight $[p^{2e-1},2e,p^{2e-2}(p-1)]$ linear code over $\mathbb{F}_{p}$ that meets the Griesmer bound, and its weight enumerator is $1+(p^{2e}-p)z^{p^{2e-2}(p-1)}+(p-1)z^{p^{2e-1}}$. 
\end{corollary}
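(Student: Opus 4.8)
The plan is to deduce everything from Theorem~\ref{Th1}(3) and then certify optimality by a direct verification of the Griesmer bound. First I would invoke Theorem~\ref{Th1}(3): under the stated hypotheses $u\in\mathbb{F}_p^{*}$ with $u^{2}\not\equiv\phi(\ell^{k})^{2}\pmod p$ and $u^{2}\not\equiv\ell^{2(k-1)}\pmod p$, the code $\mathcal{C}_{D_u}$ has length $n=p^{2e-1}$, dimension $2e$, and exactly the two nonzero weights $p^{2e-1}$ (with frequency $p-1$) and $p^{2e-2}(p-1)$ (with frequency $q^{2}-p$) recorded in Table~\ref{Table10}. Since $q=p^{e}$ yields $q^{2}-p=p^{2e}-p$, the weight enumerator is precisely $1+(p^{2e}-p)z^{p^{2e-2}(p-1)}+(p-1)z^{p^{2e-1}}$, matching the claim.

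Next I would pin down the minimum distance by comparing the two nonzero weights. Since $p^{2e-2}(p-1)=p^{2e-1}-p^{2e-2}<p^{2e-1}$, the smaller weight is $d=p^{2e-2}(p-1)$, so the parameters are $[p^{2e-1},2e,p^{2e-2}(p-1)]$ as asserted.

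Finally, to establish optimality I would check that the Griesmer bound $n\ge\sum_{i=0}^{2e-1}\lceil d/p^{i}\rceil$ is attained with equality. Writing $d=(p-1)p^{2e-2}$, for every index $i$ with $0\le i\le 2e-2$ the quotient $d/p^{i}=(p-1)p^{2e-2-i}$ is an integer, while for $i=2e-1$ we have $0<d/p^{2e-1}=(p-1)/p<1$, so $\lceil d/p^{2e-1}\rceil=1$. Summing the resulting geometric series gives
\begin{align*}
\sum_{i=0}^{2e-1}\left\lceil\frac{d}{p^{i}}\right\rceil=\sum_{i=0}^{2e-2}(p-1)p^{2e-2-i}+1=(p-1)\,\frac{p^{2e-1}-1}{p-1}+1=p^{2e-1}=n.
\end{align*}
Hence the Griesmer bound holds with equality and $\mathcal{C}_{D_u}$ is optimal. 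The only step needing any attention is this last identity, but it reduces to recognizing $\sum_{j=0}^{2e-2}p^{j}=(p^{2e-1}-1)/(p-1)$, so there is no genuine obstacle; the corollary is immediate once Table~\ref{Table10} has been read off correctly.
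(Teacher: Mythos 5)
Your proposal is correct and follows exactly the route the paper intends: the paper gives no written proof, stating only that the corollary ``can be proved directly from Theorem \ref{Th1}'', and your argument reads off the parameters and weight enumerator from Table \ref{Table10} and then verifies the Griesmer bound by the standard geometric-series computation, which is the natural (and only) missing step. The arithmetic $\sum_{i=0}^{2e-1}\lceil d/p^{i}\rceil=(p^{2e-1}-1)+1=n$ checks out, so nothing further is needed.
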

\vspace{1em}
\begin{example}
    Consider $p=7$, $\ell=5$, $k=1$, and $u(\neq 0)\in\mathbb{F}_{7}$ such that $u^{2}\not\equiv \phi(\ell^{k})^{2}\equiv 2\pmod{7}$ and $u^{2}\not\equiv \ell^{2(k-1)}\equiv 1\pmod{7}$. Then $u\in\{2,5\}$. The code $\mathcal{C}_{D_{u}}$ has parameters $[823543,8,705894]$ and its weight enumerator is $1+5764794z^{705894}+6z^{823543}$. According to Corollary $\ref{Cor1}$, the code $\mathcal{C}_{D_{u}}$ is optimal with respect to the Griesmer bound.  Due to the large computation required, using a magma program to verify whether the code $\mathcal{C}_{D_{u}}$ is consistent with Theorem $\ref{Th1}(3)$ is not possible. 
\end{example}
          \section{Linear codes from the defining set $D^{'}$}\label{sec4}
          In this section, we always assume $f\in\mathcal{RF}$ such that $f^{*}$ is a quadratic form (as discussed in section \ref{S1}) and $\epsilon_{f}$ is the sign of the Walsh transform of $f(x)$. First, we need the following lemmas for determining the lengths and weight distributions of the linear code $\mathcal{C}_{D^{'}}$ defined in $(\ref{EQ2})$ and $(\ref{EQ10})$.
          \vspace{1mm}
              \begin{lemma}\label{LX1}
                  If $D^{'}=\{(x,y)\in\mathbb{F}_{q}^{2}\backslash(0,0):f(x)+\operatorname{Tr}_{1}^{e}(y^{\frac{q-1}{2\ell^{k}}})=0\}$, where $u\in\mathbb{F}_{p}$. Let $n_{2}=|D^{'}|$. Then 
                  \begin{align*}
                      n_{2}&= \begin{cases}
                          \frac{q^{2}}{p}-1+\frac{\epsilon_{f}\sqrt{p^{*}}^{e}}{p}(q(p-1)-p(\ell-1)\frac{q-1}{\ell^{k}});\text{ if }\ell\equiv 1\pmod{p}, \\
                         \frac{q^{2}}{p}-1+\frac{\epsilon_{f}\sqrt{p^{*}}^{e}}{p}(q(p-1)-p\frac{q-1}{\ell^{k-1}});\text{ if }\ell\not\equiv 1\pmod{p}.
                      \end{cases}
                  \end{align*}
                  \begin{proof}
                     Since $e=\phi(\ell^{k})$ is even, we can say $\sigma_{z}(\sqrt{p^{*}}^{e})=\sqrt{p^{*}}^{e}$. By using the definition of $(\ref{EQ22})$, we have
                       \begin{align}\label{EQ13}
                          n_{2} &= \frac{1}{p}\sum_{x,y\in\mathbb{F}_{q}}\sum_{z\in\mathbb{F}_{p}}\zeta_{p}^{z(f(x)+\operatorname{Tr}_{1}^{e}(y^{\frac{q-1}{2\ell^{k}}}))}-1 \nonumber \\
                          &=\frac{q^{2}}{p}-1+\frac{1}{p}\sum_{x,y\in\mathbb{F}_{q}}\sum_{z\in\mathbb{F}_{p}^{*}}\zeta_{p}^{z(f(x)+\operatorname{Tr}_{1}^{e}(y^{\frac{q-1}{2\ell^{k}}}))} \nonumber \\
                          &=\frac{q^{2}}{p}-1+\frac{1}{p}\sum_{z\in\mathbb{F}_{p}^{*}}\sigma_{z}\left(\epsilon_{f}\sqrt{p^{*}}^{e}\right)(S_{2\ell^{k}}(z,0)+1) \nonumber \\
                          &= \frac{q^{2}}{p}-1+\frac{\epsilon_{f}\sqrt{p^{*}}^{e}}{p}(w(0,0)+p-1)
                      \end{align}
                      Hence, the result follows from Lemma \ref{Le5} (a) and Eq. $(\ref{EQ13})$.               
                  \end{proof}
              \end{lemma}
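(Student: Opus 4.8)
The plan is to count $|D'|$ by the usual additive-character indicator for the single $\mathbb{F}_p$-linear condition $f(x)+\operatorname{Tr}_1^e(y^N)=0$, where $N=\frac{q-1}{2\ell^k}$. Writing that indicator as $\frac1p\sum_{z\in\mathbb{F}_p}\zeta_p^{z(f(x)+\operatorname{Tr}_1^e(y^N))}$ and summing over all of $\mathbb{F}_q^2$ counts every solution, including the excluded pair $(0,0)$ (which satisfies the condition since $f(0)=0$ and $\operatorname{Tr}_1^e(0)=0$); hence $n_2=\frac1p\sum_{x,y\in\mathbb{F}_q}\sum_{z\in\mathbb{F}_p}\zeta_p^{z(f(x)+\operatorname{Tr}_1^e(y^N))}-1$. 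First I would split off the $z=0$ term, which contributes $\frac{q^2}{p}$, and then factor the remaining inner sum across $x$ and $y$, so that the task reduces to evaluating $\frac1p\sum_{z\in\mathbb{F}_p^*}\bigl(\sum_{x\in\mathbb{F}_q}\zeta_p^{zf(x)}\bigr)\bigl(\sum_{y\in\mathbb{F}_q}\zeta_p^{z\operatorname{Tr}_1^e(y^N)}\bigr)$.

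The core of the argument is to evaluate the two inner sums for each fixed $z\in\mathbb{F}_p^*$. For the $x$-sum I would invoke the Galois action of Lemma \ref{Le1}: since $\sigma_z(\zeta_p^a)=\zeta_p^{za}$ and $\sum_{x\in\mathbb{F}_q}\zeta_p^{f(x)}=\mathcal{W}_f(0)=\epsilon_f\sqrt{p^*}^e$, applying $\sigma_z$ termwise gives $\sum_{x\in\mathbb{F}_q}\zeta_p^{zf(x)}=\sigma_z\bigl(\epsilon_f\sqrt{p^*}^e\bigr)$. Because $\epsilon_f=\pm1$ is rational and $e=\phi(\ell^k)=(\ell-1)\ell^{k-1}$ is even (as $\ell$ is an odd prime), we get $\sigma_z(\sqrt{p^*}^e)=\eta_1(z)^e\sqrt{p^*}^e=\sqrt{p^*}^e$, so the $x$-sum equals $\epsilon_f\sqrt{p^*}^e$ independently of $z$. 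For the $y$-sum, separating off $y=0$ and recognizing the remaining sum over $\mathbb{F}_q^*$ as the Weil sum $S_{2\ell^k}(z,0)$ from $(\ref{EQ3})$ (using $\operatorname{Tr}_1^e(zy^N)=z\operatorname{Tr}_1^e(y^N)$ for $z\in\mathbb{F}_p$) yields $\sum_{y\in\mathbb{F}_q}\zeta_p^{z\operatorname{Tr}_1^e(y^N)}=1+S_{2\ell^k}(z,0)$.

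Combining these, the nontrivial part collapses to $\frac{\epsilon_f\sqrt{p^*}^e}{p}\sum_{z\in\mathbb{F}_p^*}\bigl(1+S_{2\ell^k}(z,0)\bigr)=\frac{\epsilon_f\sqrt{p^*}^e}{p}\bigl(p-1+w(0,0)\bigr)$, where I identify $\sum_{z\in\mathbb{F}_p^*}S_{2\ell^k}(z,0)=w(0,0)$ directly from the definition $(\ref{EQ22})$. The final case split then follows purely by substituting the two values of $w(0,0)$ from Lemma \ref{Le5}(a) and simplifying: in the case $\ell\equiv1\pmod p$ one checks that $p-1+w(0,0)=q(p-1)-p(\ell-1)\frac{q-1}{\ell^k}$, and in the case $\ell\not\equiv1\pmod p$ that $p-1+w(0,0)=q(p-1)-p\frac{q-1}{\ell^{k-1}}$, which reproduce the two stated expressions after multiplying through by $\frac{\epsilon_f\sqrt{p^*}^e}{p}$ and adding $\frac{q^2}{p}-1$.

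I expect the only genuinely delicate point to be the evaluation of the $x$-sum: one must justify passing from the single Walsh value $\mathcal{W}_f(0)$ to $\sum_x\zeta_p^{zf(x)}$ through the automorphism $\sigma_z$ rather than treating $zf$ as a fresh bent function, and then use the parity of $e$ to discard the quadratic-character factor $\eta_1(z)^e$ so that the $x$-sum is constant in $z$. Once that is in place, everything else is routine orthogonality and the bookkeeping already packaged in Lemmas \ref{Le2} and \ref{Le5} together with the notation $w(0,0)$, so the remaining work is the elementary algebra that folds $w(0,0)+p-1$ into the two closed forms.
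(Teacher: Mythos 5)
Your proposal is correct and follows essentially the same route as the paper's own proof: the same orthogonality/indicator decomposition, the same identification of the $x$-sum as $\sigma_z(\epsilon_f\sqrt{p^*}^e)=\epsilon_f\sqrt{p^*}^e$ using the evenness of $e=\phi(\ell^k)$, the same recognition of the $y$-sum as $S_{2\ell^k}(z,0)+1$, and the same reduction to $w(0,0)+p-1$ evaluated via Lemma \ref{Le5}(a). The only addition is your explicit check that $(0,0)$ satisfies the defining condition (justifying the subtraction of $1$) and the final algebraic verification, both of which the paper leaves implicit.
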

    %      Define a notation 
 %         \begin{equation}\label{EQ14}
   %           N_{a,b}^{(2)}=\{(x,y)\in D_{2}:\operatorname{Tr}_{1}^{e}(ax+by)=0\},
   %       \end{equation}
   %    then the Hamming weight of the codeword $c(a,b)\in\mathcal{C}_{D_{2}}$ can be expressed as 
  %     \begin{equation}\label{EQ15}
  %         \operatorname{wt}(c(a,b))= n_{2}-N_{a,b}^{(2)}.
  %     \end{equation}
  Define a notation 
          \begin{equation}\label{EQ14}
              N_{\gamma,\delta}^{(2)}=|\{(x,y)\in D^{'}:\operatorname{Tr}_{1}^{e}(\gamma x+\delta y)=0\}|,
          \end{equation}
       and the Hamming weight of the codeword $c(\gamma,\delta)\in\mathcal{C}_{D^{'}}$ is as follows: 
       \begin{equation}\label{EQ15}
           \operatorname{wt}(c(\gamma,\delta))= n_{2}-N_{\gamma,\delta}^{(2)}.
       \end{equation}

\begin{lemma}\label{LX2}
    Let $\gamma,\delta\in\mathbb{F}_{q}$ with $(\gamma,\delta)\neq (0,0)$ and $N_{\gamma,\delta}^{(2)}$ be defined in $(\ref{EQ14})$. For $b\in\mathbb{F}_{q}^{*}$, let $\operatorname{Ind}_{\alpha}(b)$ be the index of $b$ with respect to the base $\alpha$, and $i_{b}=-\operatorname{Ind}_{\alpha}(b)\pmod{2\ell^{k}}$. Assume that the set $P$ and its partitions be defined as in section $\ref{ssec2.1}$. Then the explicit value of $N_{\gamma,\delta}^{(2)}$ is given as follows:
    \vspace{1mm}
    \begin{enumerate}
        \item[\textnormal{1)}] When $f^{*}(\gamma)=0$ $(\gamma\neq 0)$ and $\delta=0$, we have
        \begin{align*}
            N_{\gamma,\delta}^{(2)}&=\begin{cases}
                \frac{q^{2}}{p^{2}}+\epsilon_{f}\sqrt{p^{*}}^{e}\left(\frac{q}{p}(p-1)-\frac{(q-1)(\ell-1)}{\ell^{k}}\right)-1;\text{ if }\ell\equiv 1\pmod{p}, \\
                \frac{q^{2}}{p^{2}}+\epsilon_{f}\sqrt{p^{*}}^{e}\left(\frac{q}{p}(p-1)-\frac{(q-1)}{\ell^{k-1}}\right)-1;\text{ if }\ell\not\equiv 1\pmod{p}.
            \end{cases}
        \end{align*}
        \item[\textnormal{2)}] When $f^{*}(\gamma)=0$ and $\delta\neq 0$, we have
        \begin{align*}
            N_{\gamma,\delta}^{(2)}&=\begin{cases}
                \frac{q^{2}}{p^{2}}+\epsilon_{f}\sqrt{p^{*}}^{e}\left(\frac{\sqrt{q}(\sqrt{q}-p)(p-1)}{p^{2}}-\frac{(\ell-1)(\sqrt{q}+1)(\sqrt{q}-p)}{p\ell^{k}}\right)-1;\text{ if }i_{\delta}\in P_{1}^{(2)}\cup P_{1}^{(3)}\cup P_{3}^{(2)}\cup P_{3}^{(3)}, \\
                \frac{q^{2}}{p^{2}}+\epsilon_{f}\sqrt{p^{*}}^{e}\left(\frac{q(p-1)}{p^{2}}-\frac{(\ell-1)(\sqrt{q}+1)(\sqrt{q}-p)}{p\ell^{k}}\right)-1;\text{ otherwise,}
            \end{cases}
        \end{align*}
        if $\ell\equiv 1\pmod{p}$; and 
        \begin{align*}
            N_{\gamma,\delta}^{(2)} &=\begin{cases}
                \frac{q^{2}}{p^{2}}+\epsilon_{f}\sqrt{p^{*}}^{e}\left(\frac{\sqrt{q}(p-1)(\sqrt{q}-p)}{p^{2}}-\frac{(\sqrt{q}+1)(\sqrt{q}-p)}{p\ell^{k-1}}\right)-1;\text{ if }i_{\delta}\in \{0,\ell^{k}\}\cup P_{1}^{(2)}\cup P_{1}^{(3)}\cup\\\hspace{10.4cm} P_{3}^{(2)}\cup P_{3}^{(3)}, \\
                \frac{q^{2}}{p^{2}}+\epsilon_{f}\sqrt{p^{*}}^{e}\left(\frac{q(p-1)}{p^{2}}-\frac{(\sqrt{q}+1)(\sqrt{q}-p)}{p\ell^{k-1}}\right)-1;\text{ otherwise,} 
            \end{cases}
        \end{align*}
        if $\ell\not\equiv 1\pmod{p}$. \\
        \item[\textnormal{3)}] When $f^{*}(\gamma)\neq 0$ and $\delta=0$, we have
        \begin{align*}
            N_{\gamma,\delta}^{(2)} &=\begin{cases}
                \frac{q^{2}}{p^{2}}+\epsilon_{f}\sqrt{p^{*}}^{e}\eta_{1}(f^{*}(a))\frac{(\ell-1)(q-1)}{p\ell^{k}}-1;\text{ if }\ell\equiv 1\pmod{p}\text{ and }p\equiv 1\pmod{4}, \\
                \frac{q^{2}}{p^{2}}+\epsilon_{f}\sqrt{p^{*}}^{e}\eta_{1}(f^{*}(a))(\eta_{1}(t_{1})+(\ell-1)\eta_{1}(t_{2}))\frac{q-1}{p\ell^{k}}-1;\text{ if }\ell\not\equiv 1\pmod{p}\\\hspace{7.7cm}\text{ and }p\equiv 1\pmod{4}, \\
                \frac{q^{2}}{p^{2}}-1;\text{ if }p\equiv 3\pmod{4},
            \end{cases}
        \end{align*}
        where $t_{1}=\phi(\ell^{k})\pmod{p}$ and $t_{2}=\ell^{k-1}\pmod{p}$. \\
        \item[\textnormal{4)}] When $f^{*}(\gamma)\neq 0$ and $\delta\neq 0$, we have
        \begin{align*}
            N_{\gamma,\delta}^{(2)} &=\begin{cases}
                \frac{q^{2}}{p^{2}}+\epsilon_{f}\sqrt{p^{*}}^{e}\left(\frac{q(p-1)}{p^{2}}+\frac{\sqrt{q}}{p}(1+\eta_{1}(f^{*}(a)))-\frac{(\ell-1)(\sqrt{q}+1)}{p\ell^{k}}(\sqrt{q}+\eta_{1}(f^{*}(a)))\right)-1;\\
               \hspace{7.8cm} \text{ if }i_{\delta}\in P_{1}^{(2)}\cup P_{1}^{(3)}\cup P_{3}^{(2)}\cup P_{3}^{(3)}, \\
                \frac{q^{2}}{p^{2}}+\epsilon_{f}\sqrt{p^{*}}^{e}\left(\frac{q(p-1)}{p^{2}}-\frac{(\ell-1)(\sqrt{q}+1)}{p\ell^{k}}(\sqrt{q}+\eta_{1}(f^{*}(a)))\right)-1;\text{ otherwise,}
            \end{cases}
        \end{align*}
        if $\ell\equiv 1\pmod{p}$ and $p\equiv 1\pmod{4}$,
        \begin{align*}
            N_{\gamma,\delta}^{(2)} &=\begin{cases}
                \frac{q^{2}}{p^{2}}+\epsilon_{f}\sqrt{p^{*}}^{e}\times\\ \left(\frac{q(p-1)}{p^{2}}+\frac{\sqrt{q}}{p}(1+\eta_{1}(f^{*}(\gamma))\eta_{1}(t_{1}))-\frac{\sqrt{q}+1}{p\ell^{k}}(\ell\sqrt{q}+\eta_{1}(f^{*}(\gamma))(\eta_{1}(t_{1})+(\ell-1)\eta_{1}(t_{2})))\right)-1;\\ \hspace{11.3cm}\text{ if }i_{\delta}\in\{0,\ell^{k}\}, \\
                \frac{q^{2}}{p^{2}}+\epsilon_{f}\sqrt{p^{*}}^{e}\times\\ \left(\frac{q(p-1)}{p^{2}}+\frac{\sqrt{q}}{p}(1+\eta_{1}(f^{*}(\gamma))\eta_{1}(t_{2}))-\frac{\sqrt{q}+1}{p\ell^{k}}(\ell\sqrt{q}+\eta_{1}(f^{*}(\gamma))(\eta_{1}(t_{1})+(\ell-1)\eta_{1}(t_{2})))\right)-1;\\ \hspace{8.4cm}\text{ if }i_{\delta}\in P_{1}^{(2)}\cup P_{1}^{(3)}\cup P_{3}^{(2)}\cup P_{3}^{(3)}, \\
                \frac{q^{2}}{p^{2}}+\epsilon_{f}\sqrt{p^{*}}^{e} \left(\frac{q(p-1)}{p^{2}}-\frac{\sqrt{q}+1}{p\ell^{k}}(\ell\sqrt{q}+\eta_{1}(f^{*}(\gamma))(\eta_{1}(t_{1})+(\ell-1)\eta_{1}(t_{2})))\right)-1;\text{ otherwise,}
            \end{cases}
        \end{align*}
        if $\ell\not\equiv 1\pmod{p}$ and $p\equiv 1\pmod{4}$,
        \begin{align*}
            N_{\gamma,\delta}^{(2)} &=\begin{cases}
            \frac{q^{2}}{p^2}+\epsilon_{f}\sqrt{p^{*}}^{e}\left(\frac{q(p-1)}{p^{2}}+\frac{\sqrt{q}}{p}(1+\eta_{1}(f^{*}(\gamma)))-\frac{(\ell-1)(q+\sqrt{q})}{p\ell^{k}}\right)-1;\text{ if }i_{\delta}\in P_{1}^{(3)}\cup P_{3}^{(3)}, \\
             \frac{q^{2}}{p^2}+\epsilon_{f}\sqrt{p^{*}}^{e}\left(\frac{q(p-1)}{p^{2}}+\frac{\sqrt{q}}{p}(1-\eta_{1}(f^{*}(\gamma)))-\frac{(\ell-1)(q+\sqrt{q})}{p\ell^{k}}\right)-1;\text{ if }i_{\delta}\in P_{1}^{(2)}\cup P_{3}^{(2)}, \\
              \frac{q^{2}}{p^2}+\epsilon_{f}\sqrt{p^{*}}^{e}\left(\frac{q(p-1)}{p^{2}}-\frac{(\ell-1)(q+\sqrt{q})}{p\ell^{k}}\right)-1;\text{ otherwise, }
            \end{cases}
        \end{align*}
        if $\ell\equiv 1\pmod{p}$ and $p\equiv 3\pmod{4}$,
        \begin{align*}
            N_{\gamma,\delta}^{(2)} &=\begin{cases}
                \frac{q^{2}}{p^{2}}+\epsilon_{f}\sqrt{p^{*}}^{e}\left(\frac{q(p-1)}{p^{2}}+\frac{\sqrt{q}}{p}(1-\eta_{1}(f^{*}(\gamma))\eta_{1}(t_{1}))-\frac{(q+\sqrt{q})}{p\ell^{k-1}}\right)-1;\text{ if }i_{\delta}=0, \\
                \frac{q^{2}}{p^{2}}+\epsilon_{f}\sqrt{p^{*}}^{e}\left(\frac{q(p-1)}{p^{2}}+\frac{\sqrt{q}}{p}(1+\eta_{1}(f^{*}(\gamma))\eta_{1}(t_{1}))-\frac{(q+\sqrt{q})}{p\ell^{k-1}}\right)-1;\text{ if }i_{\delta}=\ell^{k}, \\
                 \frac{q^{2}}{p^{2}}+\epsilon_{f}\sqrt{p^{*}}^{e}\left(\frac{q(p-1)}{p^{2}}+\frac{\sqrt{q}}{p}(1+\eta_{1}(f^{*}(\gamma))\eta_{1}(t_{2}))-\frac{(q+\sqrt{q})}{p\ell^{k-1}}\right)-1;\text{ if }i_{\delta}\in P_{1}^{(3)}\cup P_{3}^{(3)}, \\
          \frac{q^{2}}{p^{2}}+\epsilon_{f}\sqrt{p^{*}}^{e}\left(\frac{q(p-1)}{p^{2}}+\frac{\sqrt{q}}{p}(1-\eta_{1}(f^{*}(\gamma))\eta_{1}(t_{2}))-\frac{(q+\sqrt{q})}{p\ell^{k-1}}\right)-1;\text{ if }i_{\delta}\in P_{1}^{(2)}\cup P_{3}^{(2)}, \\
        \frac{q^{2}}{p^{2}}+\epsilon_{f}\sqrt{p^{*}}^{e}\left(\frac{q(p-1)}{p^{2}}-\frac{(q+\sqrt{q})}{p\ell^{k-1}}\right)-1;\text{ otherwise, }
            \end{cases}
        \end{align*}
        if $\ell\not\equiv 1\pmod{p}$ and $p\equiv 3\pmod{4}$, 
        \vskip 1pt 
        where $t_{1}=\phi(\ell^{k})\pmod{p}$ and $t_{2}=\ell^{k-1}\pmod{p}$. 
    \end{enumerate}
    \begin{proof}
        By the definition in $(\ref{EQ3})$ and using the fact $S_{2\ell^{k}}(z,w \delta)=S_{2\ell^{k}}(z,\delta)$ for any $\delta\in\mathbb{F}_{q}$ and $z,w\in\mathbb{F}_{p}^{*}$, we obtain  
        \begin{align}\label{EQ17}
            N_{\gamma,\delta}^{(2)} &=|\{(x,y)\in\mathbb{F}_{q}^{2}\backslash\{(0,0)\}:f(x)+\operatorname{Tr}_{1}^{e}(y^{\frac{q-1}{2\ell^{k}}})=0\text{ and }\operatorname{Tr}_{1}^{e}(\gamma x+\delta y)=0\}| \nonumber \\
            &= \sum_{x,y\in\mathbb{F}_{q}}\left(\frac{1}{p}\sum_{z\in\mathbb{F}_{p}}\zeta_{p}^{z\{f(x)+\operatorname{Tr}_{1}^{e}(y^{\frac{q-1}{2\ell^{k}}})\}}\right)\times\left(\frac{1}{p}\sum_{w\in\mathbb{F}_{p}}\zeta_{p}^{w\operatorname{Tr}_{1}^{e}(\gamma x+\delta y)}\right)-1 \nonumber \\
            &=\frac{1}{p^{2}}\sum_{x,y\in\mathbb{F}_{q}}\left(\sum_{z\in\mathbb{F}_{p}^{*}}\zeta_{p}^{z\{f(x)+\operatorname{Tr}_{1}^{e}(y^{\frac{q-1}{2\ell^{k}}})\}}+1\right)\left(\sum_{w\in\mathbb{F}_{p}^{*}}\zeta_{p}^{w\operatorname{Tr}_{1}^{e}(\gamma x+\delta y)}+1\right)-1 \nonumber \\
            &= \frac{1}{p^{2}}\left(q^{2}+\sum_{x,y\in\mathbb{F}_{q}}\sum_{z\in\mathbb{F}_{p}^{*}}\zeta_{p}^{z\{f(x)+\operatorname{Tr}_{1}^{e}(y^{\frac{q-1}{2\ell^{k}}})\}}+\sum_{x,y\in\mathbb{F}_{q}}\sum_{z,w\in\mathbb{F}_{p}^{*}}\zeta_{p}^{z\{f(x)+\operatorname{Tr}_{1}^{e}(y^{\frac{q-1}{2\ell^{k}}})\}+w\operatorname{Tr}_{1}^{e}(\gamma x+\delta y)}\right)-1 \nonumber \\
            &= \frac{1}{p^{2}}\left(q^{2}+\epsilon_{f}\sqrt{p^{*}}^{e}(w(0,0)+p-1)+\sum_{z,w\in\mathbb{F}_{p}^{*}}\sum_{x\in\mathbb{F}_{q}}\zeta_{p}^{z\{f(x)+\operatorname{Tr}_{1}^{e}(\gamma\frac{w}{z}x)\}}\sum_{y\in\mathbb{F}_{q}}\zeta_{p}^{z\operatorname{Tr}_{1}^{e}(y^{\frac{q-1}{2\ell^{k}}}+\delta\frac{w}{z}y)}\right)-1 \nonumber \\
            &=\frac{1}{p^{2}}\left(q^{2}+\epsilon_{f}\sqrt{p^{*}}^{e}(w(0,0)+p-1)+\sum_{z,w\in\mathbb{F}_{p}^{*}}\sigma_{z}\left(\epsilon_{f}\sqrt{p^{*}}^{e}\zeta_{p}^{f^{*}(-\gamma w)}\right)\left(S_{2\ell^{k}}(z,\delta wz)+1\right)\right)-1  \nonumber \\
            &= \frac{1}{p^{2}}\left(q^{2}+\epsilon_{f}\sqrt{p^{*}}^{e}(w(0,0)+p-1)+\epsilon_{f}\sqrt{p^{*}}^{e}\sum_{z\in\mathbb{F}_{p}^{*}}\left(S_{2\ell^{k}}(z,\delta)+1\right)\sum_{w\in\mathbb{F}_{p}^{*}}\zeta_{p}^{w^{2}zf^{*}(\gamma)}\right)-1 \nonumber \\
            &= \frac{1}{p^{2}}\left(q^{2}+\epsilon_{f}\sqrt{p^{*}}^{e}(w(0,0)+p-1)+\epsilon_{f}\sqrt{p^{*}}^{e}\Omega_{2}\right)-1, 
        \end{align}
        where $\Omega_{2}=\sum_{z\in\mathbb{F}_{p}^{*}}\left(S_{2\ell^{k}}(z,\delta)+1\right)\sum_{w\in\mathbb{F}_{p}^{*}}\zeta_{p}^{w^{2}zf^{*}(\gamma)}$. The sixth  equality above holds because for a fixed $z$, as $w$ varies over $\mathbb{F}_{p}^{*}$, $\frac{w}{z}$ also varies.
        \vskip 1pt
        Next, we determine the value of $\Omega_{2}$ by considering the following three cases while $(\gamma,\delta)$ runs through $\mathbb{F}_{q}^{2}\backslash\{(0,0)\}$.
        \vskip 1pt
       $\text{Case }1:$ When $f^{*}(\gamma)=0$, we have
        \begin{align}\label{EQ20}
            \Omega_{2} &=(p-1)\sum_{z\in\mathbb{F}_{p}^{*}}(S_{2\ell^{k}}(z,\delta)+1) \nonumber \\ &= \begin{cases}
                (p-1)w(0,0)+(p-1)^{2};\text{ if }\delta=0\text{ with }\gamma \neq 0, \\
                (p-1)w(0,\delta)+(p-1)^{2};\text{ if }\delta\neq 0.
            \end{cases}
        \end{align}
        $\text{Case }2:$ When $f^{*}(\gamma)\neq 0$ and $\delta=0$, by Lemma $\ref{Le6}$, we have 
        \begin{align}\label{EQ16}
            \Omega_{2} &= \sum_{z\in\mathbb{F}_{p}^{*}}(S_{2\ell^{k}}(z,0)+1)(\eta_{1}(zf^{*}(\gamma))\sqrt{p^{*}}-1) \nonumber \\
             &= \eta_{1}(f^{*}(\gamma))\sqrt{p^{*}}\sum_{z\in\mathbb{F}_{p}^{*}}\eta_{1}(z)S_{2\ell^{k}}(z,0)-w(0,0)-(p-1)
        \end{align}
        From Lemma $\ref{Le2}(a),(b)$ and the fact $\sum_{z\in\mathbb{F}_{p}^{*}}\eta_{1}(z)=0$, one can verify that
        \begin{align}\label{EQ18}
            \sum_{z\in\mathbb{F}_{p}^{*}}\eta_{1}(z)S_{2\ell^{k}}(z,0) &= \frac{q-1}{2\ell^{k}}\sum_{z\in\mathbb{F}_{p}^{*}}\eta_{1}(z)S_{2\ell^{k}}(z) \nonumber \\
            &= \begin{cases}
                \frac{q-1}{2\ell^{k}}(\ell-1)\sqrt{p^{*}}(1+(-1)^{\frac{p-1}{2}});\text{ if }\ell\equiv 1\pmod{p}, \\
                \frac{q-1}{2\ell^{k}}\sqrt{p^{*}}\left(\eta_{1}(t_{1})+(-1)^{\frac{p-1}{2}}\eta_{1}(t_{1})+(\ell-1)(\eta_{1}(t_{2})+(-1)^{\frac{p-1}{2}}\eta_{1}(t_{2}))\right);\text{ if }\\\hspace{8.5cm}\ell\not\equiv 1\pmod{p}, \nonumber  
            \end{cases}      \\
            &= \begin{cases}
                \frac{q-1}{\ell^{k}}(\ell-1)\sqrt{p^{*}};\text{ if }\ell\equiv 1\pmod{p}\text{ and }p\equiv 1\pmod{4},  \\
                 \frac{q-1}{\ell^{k}}\sqrt{p^{*}}(\eta_{1}(t_{1})+(\ell-1)\eta_{1}(t_{2}));\text{ if }\ell\not\equiv 1\pmod{p}\\\hspace{4.4cm}\text{ and }p\equiv 1\pmod{4},  \\
                 0;\text{ if }p\equiv 3\pmod{4},
            \end{cases}
        \end{align}
        where $t_{1}=\phi(\ell^{k})\pmod{p}$ and $t_{2}=\ell^{k-1}\pmod{p}$.
        
        \vskip 1 pt
       $\text{Case }3:$ When $f^{*}(\gamma)\neq 0$ and $\delta\neq 0$, by Lemma $\ref{Le6}$, we have
        \begin{align}\label{EQ19}
            \Omega_{2} &= \sum_{z\in\mathbb{F}_{p}^{*}}(S_{2\ell^{k}}(z,\delta)+1)(\eta_{1}(zf^{*}(\gamma))\sqrt{p^{*}}-1) \nonumber \\
            &= \eta_{1}(f^{*}(\gamma))\sqrt{p^{*}}\sum_{z\in\mathbb{F}_{p}^{*}}\eta_{1}(z)S_{2\ell^{k}}(z,\delta)-w(0,\delta)-(p-1)
        \end{align}
        From Lemma $\ref{Le2}(c),(d)$ and $(\ref{EQ18})$ it is to be noted that
        \begin{align}\label{EQ21}
            \sum_{z\in\mathbb{F}_{p}^{*}}\eta_{1}(z)S_{2\ell^{k}}(z,\delta) &=\sum_{z\in\mathbb{F}_{p}^{*}}\eta_{1}(z)(\chi_{e}(z\xi^{i_{\delta}})\sqrt{q}-\frac{\sqrt{q}+1}{2\ell^{k}}S_{2\ell^{k}}(z)) \nonumber \\
            &= \sum_{z\in\mathbb{F}_{p}^{*}}\eta_{1}(z)\left(\zeta_{p}^{z\operatorname{Tr}_{1}^{e}(\xi^{i_{\delta}})}\sqrt{q}-\frac{\sqrt{q}+1}{2\ell^{k}}S_{2\ell^{k}}(z)\right) \nonumber \\
            &= \begin{cases}
                \sqrt{p^{*}}\left(\eta_{1}(\operatorname{Tr}_{1}^{e}(\xi^{i_{\delta}}))\sqrt{q}-\frac{\sqrt{q}+1}{\ell^{k}}(\ell-1)\right);\text{ if }\ell\equiv 1\pmod{p}\\\hspace{5.4cm}\text{ and }p\equiv 1\pmod{4}, \\
                \sqrt{p^{*}}\left(\eta_{1}(\operatorname{Tr}_{1}^{e}(\xi^{i_{\delta}}))\sqrt{q}-\frac{\sqrt{q}+1}{\ell^{k}}(\eta_{1}(t_{1})+(\ell-1)\eta_{1}(t_{2}))\right);\text{ if }\ell\not\equiv 1\pmod{p}\\\hspace{7.9cm}\text{ and }p\equiv 1\pmod{4}, \\
                \sqrt{p^{*}}\eta_{1}(\operatorname{Tr}_{1}^{e}(\xi^{i_{\delta}}))\sqrt{q};\text{ if }p\equiv 3\pmod{4}.
            \end{cases}
        \end{align}
        By Lemma $\ref{Le3}$, one can easily check 
        \vskip 1pt
        When $\ell\equiv 1\pmod{p}$ and $p\equiv 1\pmod{4}$, we have
        \begin{align*}
            \sum_{z\in\mathbb{F}_{p}^{*}}\eta_{1}(z)S_{2\ell^{k}}(z,\delta) &= \begin{cases}
                \sqrt{p^{*}}\left(\sqrt{q}-\frac{\sqrt{q}+1}{\ell^{k}}(\ell-1)\right);\text{ if }i_{\delta}\in P_{1}^{(3)}\cup P_{3}^{(3)}\cup P_{1}^{(2)}\cup P_{3}^{(2)}, \\
                -\sqrt{p^{*}}\frac{\sqrt{q}+1}{\ell^{k}}(\ell-1);\text{ otherwise, }
            \end{cases}
        \end{align*}
        \vskip 1pt
        when $\ell\not\equiv 1\pmod{p}$ and $p\equiv 1\pmod{4}$, we have
        \begin{align*}
            \sum_{z\in\mathbb{F}_{p}^{*}}\eta_{1}(z)S_{2\ell^{k}}(z,\delta) &= \begin{cases}
                \sqrt{p^{*}}\left(\eta_{1}(t_{1})\sqrt{q}-\frac{\sqrt{q}+1}{\ell^{k}}(\eta_{1}(t_{1})+(\ell-1)\eta_{1}(t_{2}))\right);\text{ if }i_{\delta}\in\{0,\ell^{k}\}, \\
                \sqrt{p^{*}}\left(\eta_{1}(t_{2})\sqrt{q}-\frac{\sqrt{q}+1}{\ell^{k}}(\eta_{1}(t_{1})+(\ell-1)\eta_{1}(t_{2}))\right);\text{ if }i_{\delta}\in P_{1}^{(3)}\cup P_{3}^{(3)}\cup\\\hspace{8.7cm} P_{1}^{(2)}\cup P_{3}^{(2)}, \\
                -\sqrt{p^{*}}\frac{\sqrt{q}+1}{\ell^{k}}(\eta_{1}(t_{1})+(\ell-1)\eta_{1}(t_{2}));\text{ otherwise, }
            \end{cases}
        \end{align*}
        when $\ell\equiv 1\pmod{p}$ and $p\equiv 3\pmod{4}$, we have
        \begin{align*}
            \sum_{z\in\mathbb{F}_{p}^{*}}\eta_{1}(z)S_{2\ell^{k}}(z,\delta) &= \begin{cases}
                -\sqrt{p^{*}q};\text{ if }i_{\delta}\in P_{1}^{(3)}\cup P_{3}^{(3)}, \\
                \sqrt{p^{*}q};\text{ if }i_{\delta}\in P_{1}^{(2)}\cup P_{3}^{(2)}, \\
                0;\text{ otherwise, }
            \end{cases}
        \end{align*}
          and when $\ell\not\equiv 1\pmod{p}$ and $p\equiv 3\pmod{4}$, we have
           \begin{align*}
            \sum_{z\in\mathbb{F}_{p}^{*}}\eta_{1}(z)S_{2\ell^{k}}(z,\delta) &= \begin{cases}
            \sqrt{p^{*}}\eta_{1}(t_{1})\sqrt{q};\text{ if }i_{\delta}=0, \\
                -\sqrt{p^{*}}\eta_{1}(t_{1})\sqrt{q};\text{ if }i_{\delta}=\ell^{k}, \\
                -\sqrt{p^{*}}\eta_{1}(t_{2})\sqrt{q};\text{ if }i_{\delta}\in P_{1}^{(3)}\cup P_{3}^{(3)}, \\
               \sqrt{p^{*}}\eta_{1}(t_{2})\sqrt{q};\text{ if }i_{\delta}\in P_{1}^{(2)}\cup P_{3}^{(2)}, \\
                0;\text{ otherwise, }
            \end{cases}
        \end{align*}
          where $t_{1}=\phi(\ell^{k})\pmod{p}$ and $t_{2}=\ell^{k-1}\pmod{p}$.
          \vskip 1pt
       Then, one can explicitly calculate the values of $\Omega_{2}$ in $(\ref{EQ20})$, $(\ref{EQ16})$ and $(\ref{EQ19})$ by the help of Lemma \ref{Le5}; Lemma \ref{Le5} and $(\ref{EQ18})$; and  Lemma $\ref{Le5}$ and  $(\ref{EQ21})$, respectively.
       \vskip 1pt
        Therefore, $N_{\gamma,\delta}^{(2)}$ is easily determined for each of the above cases by substituting the respective values of  $\Omega_{2}$ in $(\ref{EQ17})$. 
    \end{proof}
\end{lemma}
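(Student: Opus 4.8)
The plan is to detect the two defining conditions $f(x)+\operatorname{Tr}_1^e(y^{\frac{q-1}{2\ell^k}})=0$ and $\operatorname{Tr}_1^e(\gamma x+\delta y)=0$ simultaneously by the orthogonality of additive characters, writing
$$N_{\gamma,\delta}^{(2)}=\frac{1}{p^2}\sum_{x,y\in\mathbb{F}_q}\Bigl(\sum_{z\in\mathbb{F}_p}\zeta_p^{z(f(x)+\operatorname{Tr}_1^e(y^{\frac{q-1}{2\ell^k}}))}\Bigr)\Bigl(\sum_{w\in\mathbb{F}_p}\zeta_p^{w\operatorname{Tr}_1^e(\gamma x+\delta y)}\Bigr)-1,$$
the final $-1$ removing the forbidden pair $(0,0)$. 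I would then split the sum over $(z,w)\in\mathbb{F}_p^2$ into four blocks. The block $z=w=0$ contributes $q^2$; the block $z=0,\,w\neq 0$ vanishes for $(\gamma,\delta)\neq(0,0)$, because one of the factored inner sums over $x$ or $y$ is zero; and the block $z\neq 0,\,w=0$ factors and, using $f(0)=0$, $e$ even and the Galois action $\sigma_z$ fixing $\sqrt{p^*}^{\,e}$, produces $\epsilon_f\sqrt{p^*}^{\,e}(w(0,0)+p-1)$ through the definition $(\ref{EQ22})$ and Lemma $\ref{Le5}$(a).

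The heart of the argument is the block $z,w\neq 0$. Reindexing $w\mapsto wz$, I would separate the $x$- and $y$-sums: the $y$-sum is $1+S_{2\ell^k}(z,\delta)$ after invoking $S_{2\ell^k}(z,wz\delta)=S_{2\ell^k}(z,\delta)$ from Lemma $\ref{Le2}$(e), while the $x$-sum is $\sigma_z\bigl(\mathcal{W}_f(-w\gamma)\bigr)=\epsilon_f\sqrt{p^*}^{\,e}\zeta_p^{zw^2f^*(\gamma)}$, where I use that $f$ is weakly regular bent with dual $f^*$, that $e$ is even, and that $f^*$ is a quadratic form so $f^*(-w\gamma)=w^2f^*(\gamma)$. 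This reproduces expression $(\ref{EQ17})$, namely $N_{\gamma,\delta}^{(2)}=\frac{1}{p^2}\bigl(q^2+\epsilon_f\sqrt{p^*}^{\,e}(w(0,0)+p-1)+\epsilon_f\sqrt{p^*}^{\,e}\Omega_2\bigr)-1$ with $\Omega_2=\sum_{z\in\mathbb{F}_p^*}(S_{2\ell^k}(z,\delta)+1)\sum_{w\in\mathbb{F}_p^*}\zeta_p^{w^2zf^*(\gamma)}$.

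Everything now reduces to evaluating $\Omega_2$, which I would split into the three cases of the statement. If $f^*(\gamma)=0$ the inner $w$-sum is simply $p-1$, so $\Omega_2=(p-1)\bigl(w(0,\delta)+p-1\bigr)$ (or with $w(0,0)$ when $\delta=0$), and Lemma $\ref{Le5}$(a),(b) finishes parts 1) and 2). If $f^*(\gamma)\neq 0$, I would evaluate the inner quadratic sum by Lemma $\ref{Le6}$ over $\mathbb{F}_p$ (i.e.\ $m=1$) as $\sum_{w\in\mathbb{F}_p^*}\zeta_p^{cw^2}=\eta_1(c)\sqrt{p^*}-1$ with $c=zf^*(\gamma)$, turning $\Omega_2$ into $\eta_1(f^*(\gamma))\sqrt{p^*}\sum_{z\in\mathbb{F}_p^*}\eta_1(z)S_{2\ell^k}(z,\delta)-w(0,\delta)-(p-1)$ (and likewise with $\delta=0$), since $\sum_{z\in\mathbb{F}_p^*}\eta_1(z)=0$. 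The twisted sums $\sum_z\eta_1(z)S_{2\ell^k}(z,0)$ and $\sum_z\eta_1(z)S_{2\ell^k}(z,\delta)$ are then expanded via Lemma $\ref{Le2}$(a)--(d): the first through the explicit shape of $S_{2\ell^k}(z)$ in Lemma $\ref{Le2}$(b) together with the standard evaluation $\sum_z\eta_1(z)\zeta_p^{cz}=\eta_1(c)\sqrt{p^*}$, and the second through Lemma $\ref{Le2}$(d), which introduces the factor $\zeta_p^{z\operatorname{Tr}_1^e(\xi^{i_\delta})}$ whose exponent is read off from Lemma $\ref{Le3}$.

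I expect the main obstacle to be the bookkeeping in part 4) ($f^*(\gamma)\neq 0$, $\delta\neq 0$): there the value of $\sum_z\eta_1(z)S_{2\ell^k}(z,\delta)$ depends on which of the five trace regimes of Lemma $\ref{Le3}$ the index $i_\delta$ lands in, and this must be crossed with the congruences $\ell\equiv 1\pmod p$ versus $\ell\not\equiv 1\pmod p$ and $p\equiv 1\pmod 4$ versus $p\equiv 3\pmod 4$, all while tracking the three quadratic-character signs $\eta_1(f^*(\gamma))$, $\eta_1(t_1)$ and $\eta_1(t_2)$ with $t_1\equiv\phi(\ell^k)$ and $t_2\equiv\ell^{k-1}\pmod p$. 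Once these twisted sums are tabulated, substituting them together with the piecewise values of $w(0,0)$ and $w(0,\delta)$ from Lemma $\ref{Le5}$ back into $(\ref{EQ17})$ yields each displayed formula, and the remaining cases follow by the same substitution, completing the proof.
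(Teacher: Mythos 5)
Your proposal is correct and follows essentially the same route as the paper's own proof: the same character-orthogonality expansion, the same reindexing $w\mapsto w/z$ combined with the Galois action $\sigma_z$ and the quadratic-form property $f^{*}(-w\gamma)=w^{2}f^{*}(\gamma)$ to reach $\Omega_{2}=\sum_{z\in\mathbb{F}_p^{*}}(S_{2\ell^{k}}(z,\delta)+1)\sum_{w\in\mathbb{F}_p^{*}}\zeta_p^{w^{2}zf^{*}(\gamma)}$, and the same three-case evaluation via Lemmas \ref{Le6}, \ref{Le2}, \ref{Le3} and \ref{Le5}. (The only cosmetic slip is invoking $f(0)=0$ where what is actually used is $f^{*}(0)=0$, both of which hold for $f\in\mathcal{RF}$.)
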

With the above preparations, we are now ready to determine the parameters and weight distributions of $\mathcal{C}_{D^{'}}$, as stated in the following main
result of this section.
\vspace{1em}
 \begin{theorem}\label{Th2}
 Assume that $q=p^{e}$ and $e=\phi(\ell^{k})$. Let $\mathcal{C}_{D'}=\{(\operatorname{Tr}_{1}^{e}(\gamma x+\delta y))_{(x,y)\in D^{'}}:\gamma,\delta\in\mathbb{F}_{q}\}$, where $D^{'}$ is defined in $(\ref{EQ10})$. Define $T:=\eta_{1}(t_{1})+(\ell-1)\eta_{1}(t_{2})$, where $t_{1}=\phi(\ell^{k})\pmod{p}$ and $t_{2}=\ell^{k-1}\pmod{p}$. Then,
 \begin{enumerate}
 \justifying
     \item[\textnormal{(1)}] If $\ell\equiv 1\pmod{p}$ and $p\equiv 1\pmod{4}$, then $\mathcal{C}_{D^{'}}$ is at most eight-weight $\Large[\frac{q^{2}}{p}+\epsilon_{f}\sqrt{p^{*}}^{e}\left(\frac{q(p-1)}{p}-\frac{(\ell-1)(q-1)}{\ell^{k}}\right)-1,2e\Large]$ linear code over $\mathbb{F}_{p}$ and its weight distribution given in Table $\ref{Table1}$.
 \end{enumerate}
 %\vspace{2mm}
 \begin{table}[htbp]
\centering
\scriptsize
\captionof{table}{}
\label{Table1}
\begin{tabular}{c c} % 'l' = left-aligned, 'c' = centered, '|' = vertical lines
\hline
\textnormal{Weight} & \textnormal{Frequency} \\ 
\hline
 \textnormal{0} & \textnormal{1} \\ 
%\hline
$\frac{q^{2}}{p^{2}}(p-1)$ & $p^{e-1}+\epsilon_{f}(p-1)\frac{\sqrt{p^{*}}^{e}}{p}-1$ \\ 
%\hline
$\frac{q^{2}}{p^{2}}(p-1)+\epsilon_{f}\sqrt{p^{*}}^{e}\left(\frac{q}{p^{2}}(p-1)^{2}+\frac{\sqrt{q}(p-1)}{p}-\frac{(\ell-1)(q+\sqrt{q})(p-1)}{p\ell^{k}}\right)$ & $\frac{(q-1)(\ell-1)}{\ell^{k}}\left(p^{e-1}+\epsilon_{f}(p-1)\frac{\sqrt{p^{*}}^{e}}{p}\right)$ \\ 
 $\frac{q^{2}}{p^{2}}(p-1)+\epsilon_{f}\sqrt{p^{*}}^{e}\left(\frac{q}{p^{2}}(p-1)^{2}-\frac{(\ell-1)(q+\sqrt{q})(p-1)}{p\ell^{k}}\right)$ & $\frac{(q-1)(\ell^{k}-\ell+1)}{\ell^{k}}\left(p^{e-1}+\epsilon_{f}(p-1)\frac{\sqrt{p^{*}}^{e}}{p}\right)$ \\
 $\frac{q^{2}}{p^{2}}(p-1)+\epsilon_{f}\sqrt{p^{*}}^{e}\left(\frac{q}{p}(p-1)-\frac{(\ell-1)(q-1)}{p\ell^{k}}(p+1)\right)$  & $\frac{(p-1)}{2}\left(p^{e-1}-\epsilon_{f}\frac{\sqrt{p^{*}}^{e}}{p}\right)$ \\
 $\frac{q^{2}}{p^{2}}(p-1)+\epsilon_{f}\sqrt{p^{*}}^{e}\left(\frac{q}{p}(p-1)-\frac{(\ell-1)(q-1)}{p\ell^{k}}(p-1)\right)$  & $\frac{(p-1)}{2}\left(p^{e-1}-\epsilon_{f}\frac{\sqrt{p^{*}}^{e}}{p}\right)$ \\
 $\frac{q^{2}}{p^{2}}(p-1)+\epsilon_{f}\sqrt{p^{*}}^{e}\left(\frac{q}{p^{2}}(p-1)^{2}-\frac{2\sqrt{q}}{p}-\frac{(\ell-1)(\sqrt{q}+1)}{p\ell^{k}}(\sqrt{q}(p-1)-(p+1))\right)$  & $\frac{(q-1)(\ell-1)(p-1)}{2\ell^{k}}\left(p^{e-1}-\epsilon_{f}\frac{\sqrt{p^{*}}^{e}}{p}\right)$ \\
 $\frac{q^{2}}{p^{2}}(p-1)+\epsilon_{f}\sqrt{p^{*}}^{e}\left(\frac{q}{p^{2}}(p-1)^{2}-\frac{(\ell-1)(\sqrt{q}+1)}{p\ell^{k}}(\sqrt{q}(p-1)-(p+1))\right)$  & $\frac{(q-1)(\ell^{k}-\ell+1)(p-1)}{2\ell^{k}}\left(p^{e-1}-\epsilon_{f}\frac{\sqrt{p^{*}}^{e}}{p}\right)$ \\
 $\frac{q^{2}}{p^{2}}(p-1)+\epsilon_{f}\sqrt{p^{*}}^{e}\left(\frac{q}{p^{2}}(p-1)^{2}-\frac{(\ell-1)(q-1)}{p\ell^{k}}(p-1)\right)$  & $\frac{(p-1)(q-1)}{2}\left(p^{e-1}-\epsilon_{f}\frac{\sqrt{p^{*}}^{e}}{p}\right)$ \\
\hline
\end{tabular}
\end{table}
\vspace{1em}
\begin{enumerate}
\justifying
    \item[\textnormal{(2)}] If $\ell\equiv 1\pmod{p}$ and $p\equiv 3\pmod{4}$, then $\mathcal{C}_{D^{'}}$ is at most six-weight $\Large[\frac{q^{2}}{p}+\newline\epsilon_{f}\sqrt{p^{*}}^{e}\left(\frac{q(p-1)}{p}-\frac{(\ell-1)(q-1)}{\ell^{k}}\right)-1,2e\Large]$ linear code over $\mathbb{F}_{p}$ and its weight distribution given in Table $\ref{Table2}$.
\end{enumerate}
\vspace{2mm}
\begin{table}[htbp]
\centering
\scriptsize
\caption{}
\label{Table2}
\begin{tabular}{c c} % 'l' = left-aligned, 'c' = centered, '|' = vertical lines
\hline
\textnormal{Weight} & \textnormal{Frequency} \\ 
\hline
 \textnormal{0} & \textnormal{1} \\ 
%\hline
$\frac{q^{2}}{p^{2}}(p-1)$ & $p^{e-1}+\epsilon_{f}(p-1)\frac{\sqrt{p^{*}}^{e}}{p}-1$ \\ 
%\hline
$\frac{q^{2}}{p^{2}}(p-1)+\epsilon_{f}\sqrt{p^{*}}^{e}\left(\frac{q}{p^{2}}(p-1)^{2}+\frac{\sqrt{q}(p-1)}{p}-\frac{(\ell-1)(q+\sqrt{q})(p-1)}{p\ell^{k}}\right)$ & $\frac{(q-1)(\ell-1)}{\ell^{k}}\left(p^{e-1}+\epsilon_{f}(p-1)\frac{\sqrt{p^{*}}^{e}}{p}\right)$ \\
$\frac{q^{2}}{p^{2}}(p-1)+\epsilon_{f}\sqrt{p^{*}}^{e}\left(\frac{q}{p^{2}}(p-1)^{2}-\frac{(\ell-1)(q+\sqrt{q})(p-1)}{p\ell^{k}}\right)$ & $\frac{(q-1)(\ell^{k}-\ell+1)}{\ell^{k}}\left(p^{e-1}+\epsilon_{f}(p-1)\frac{\sqrt{p^{*}}^{e}}{p}\right)$ \\

$\frac{q^{2}}{p^{2}}(p-1)+\epsilon_{f}\sqrt{p^{*}}^{e}\left(\frac{q}{p}(p-1)-\frac{(\ell-1)(q-1)}{\ell^{k}}\right)$ & $(p-1)\left(p^{e-1}-\epsilon_{f}\frac{\sqrt{p^{*}}^{e}}{p}\right)$ \\
$\frac{q^{2}}{p^{2}}(p-1)+\epsilon_{f}\sqrt{p^{*}}^{e}\left(\frac{q}{p^{2}}(p-1)^{2}-\frac{2\sqrt{q}}{p}-\frac{(\ell-1)(\sqrt{q}+1)}{p\ell^{k}}(\sqrt{q}(p-1)-p)\right)$ & $\frac{(q-1)(\ell-1)(p-1)}{2\ell^{k}}\left(p^{e-1}-\epsilon_{f}\frac{\sqrt{p^{*}}^{e}}{p}\right)$ \\
$\frac{q^{2}}{p^{2}}(p-1)+\epsilon_{f}\sqrt{p^{*}}^{e}\left(\frac{q}{p^{2}}(p-1)^{2}-\frac{(\ell-1)(\sqrt{q}+1)}{p\ell^{k}}(\sqrt{q}(p-1)-p)\right)$ & $\frac{(q-1)(2\ell^{k}-\ell+1)(p-1)}{2\ell^{k}}\left(p^{e-1}-\epsilon_{f}\frac{\sqrt{p^{*}}^{e}}{p}\right)$ \\
\hline
\end{tabular}
\end{table}
\vspace{1em}
\begin{enumerate}
\justifying
    \item[\textnormal{(3)}] If $\ell\not\equiv 1\pmod{p}$ and $p\equiv 1\pmod{4}$, then $\mathcal{C}_{D^{'}}$ is at most eight-weight \textnormal{(when $\eta_{1}(t_{1})=\eta_{1}(t_{2})$)} and at most nine-weight \textnormal{(when $\eta_{1}(t_{1})=-\eta_{1}(t_{2})$)} $\Large[\frac{q^{2}}{p}+\epsilon_{f}\sqrt{p^{*}}^{e}\left(\frac{q(p-1)}{p}-\frac{(q-1)}{\ell^{k-1}}\right)-1,2e\Large]$ linear code over $\mathbb{F}_{p}$ and their respective weight distributions are given in Table $\ref{Table3}$ and $\ref{Table4}$.
\end{enumerate}
%\vspace{2mm}
\begin{table}[htbp]
\centering
\scriptsize
\captionof{table}{When $\eta_{1}(t_{1})=\eta_{1}(t_{2})$} 
\label{Table3}
\vspace{0.5em}
\begin{tabular}{c c} % 'l' = left-aligned, 'c' = centered, '|' = vertical lines
\hline
\textnormal{Weight} & \textnormal{Frequency} \\ 
\hline
 \textnormal{0} & \textnormal{1} \\ 
%\hline
$\frac{q^{2}}{p^{2}}(p-1)$ & $p^{e-1}+\epsilon_{f}(p-1)\frac{\sqrt{p^{*}}^{e}}{p}-1$ \\ 
%\hline
$\frac{q^{2}}{p^{2}}(p-1)+\epsilon_{f}\sqrt{p^{*}}^{e}\left(\frac{q}{p^{2}}(p-1)^{2}+\frac{\sqrt{q}(p-1)}{p}-\frac{(q+\sqrt{q})(p-1)}{p\ell^{k-1}}\right)$ &  $\frac{q-1}{\ell^{k-1}}(p^{e-1}+\epsilon_{f}(p-1)\frac{\sqrt{p^{*}}^{e}}{p})$ \\
$\frac{q^{2}}{p^{2}}(p-1)+\epsilon_{f}\sqrt{p^{*}}^{e}\left(\frac{q}{p^{2}}(p-1)^{2}-\frac{(q+\sqrt{q})(p-1)}{p\ell^{k-1}}\right)$ &  $\frac{(q-1)(\ell^{k-1}-1)}{\ell^{k-1}}(p^{e-1}+\epsilon_{f}(p-1)\frac{\sqrt{p^{*}}^{e}}{p})$ \\
$\frac{q^{2}}{p^{2}}(p-1)+\epsilon_{f}\sqrt{p^{*}}^{e}\left(\frac{q}{p}(p-1)-\frac{(q-1)(p\ell+T)}{p\ell^{k}}\right)$ &  $\frac{(p-1)}{2}(p^{e-1}-\epsilon_{f}\frac{\sqrt{p^{*}}^{e}}{p})$ \\
$\frac{q^{2}}{p^{2}}(p-1)+\epsilon_{f}\sqrt{p^{*}}^{e}\left(\frac{q}{p}(p-1)-\frac{(q-1)(p\ell-T)}{p\ell^{k}}\right)$ &  $\frac{(p-1)}{2}(p^{e-1}-\epsilon_{f}\frac{\sqrt{p^{*}}^{e}}{p})$ \\
$\frac{q^{2}}{p^{2}}(p-1)+\epsilon_{f}\sqrt{p^{*}}^{e}\left(\frac{q}{p^{2}}(p-1)^{2}-\frac{2\sqrt{q}}{p}-\frac{(q-1)}{\ell^{k-1}}+\frac{(\sqrt{q}+1)(\ell\sqrt{q}+\eta_{1}(t_{1})T)}{p\ell^{k}}\right)$ &  $\frac{(q-1)(p-1)}{2\ell^{k-1}}(p^{e-1}-\epsilon_{f}\frac{\sqrt{p^{*}}^{e}}{p})$ \\
$\frac{q^{2}}{p^{2}}(p-1)+\epsilon_{f}\sqrt{p^{*}}^{e}\left(\frac{q}{p^{2}}(p-1)^{2}-\frac{(q-1)}{\ell^{k-1}}+\frac{(\sqrt{q}+1)(\ell\sqrt{q}-\eta_{1}(t_{1})T)}{p\ell^{k}}\right)$ &  $\frac{(p-1)(q-1)}{2}(p^{e-1}-\epsilon_{f}\frac{\sqrt{p^{*}}^{e}}{p})$ \\
$\frac{q^{2}}{p^{2}}(p-1)+\epsilon_{f}\sqrt{p^{*}}^{e}\left(\frac{q}{p^{2}}(p-1)^{2}-\frac{(q-1)}{\ell^{k-1}}+\frac{(\sqrt{q}+1)(\ell\sqrt{q}+\eta_{1}(t_{1})T)}{p\ell^{k}}\right)$ &  $\frac{(p-1)(q-1)(\ell^{k-1}-1)}{2\ell^{k-1}}(p^{e-1}-\epsilon_{f}\frac{\sqrt{p^{*}}^{e}}{p})$ \\
\hline
\end{tabular}
\end{table}

%\vspace{1mm}
\begin{table}[htbp]
\centering
\scriptsize
\captionof{table}{When $\eta_{1}(t_{1})=-\eta_{1}(t_{2})$} 
\label{Table4}
\vspace{0.5em}
\begin{tabular}{c c} % 'l' = left-aligned, 'c' = centered, '|' = vertical lines
\hline
\textnormal{Weight} & \textnormal{Frequency} \\ 
\hline
 \textnormal{0} & \textnormal{1} \\ 
%\hline
$\frac{q^{2}}{p^{2}}(p-1)$ & $p^{e-1}+\epsilon_{f}(p-1)\frac{\sqrt{p^{*}}^{e}}{p}-1$ \\ 
$\frac{q^{2}}{p^{2}}(p-1)+\epsilon_{f}\sqrt{p^{*}}^{e}\left(\frac{q}{p^{2}}(p-1)^{2}+\frac{\sqrt{q}(p-1)}{p}-\frac{(q+\sqrt{q})(p-1)}{p\ell^{k-1}}\right)$ & $\frac{q-1}{\ell^{k-1}}(p^{e-1}+\epsilon_{f}(p-1)\frac{\sqrt{p^{*}}^{e}}{p})$ \\
$\frac{q^{2}}{p^{2}}(p-1)+\epsilon_{f}\sqrt{p^{*}}^{e}\left(\frac{q}{p^{2}}(p-1)^{2}-\frac{(q+\sqrt{q})(p-1)}{p\ell^{k-1}}\right)$ & $\frac{(q-1)(\ell^{k-1}-1)}{\ell^{k-1}}(p^{e-1}+\epsilon_{f}(p-1)\frac{\sqrt{p^{*}}^{e}}{p})$  \\
$\frac{q^{2}}{p^{2}}(p-1)+\epsilon_{f}\sqrt{p^{*}}^{e}\left(\frac{q}{p}(p-1)-\frac{(q-1)(p\ell+T)}{p\ell^{k}}\right)$ & $\frac{(p-1)}{2}(p^{e-1}-\epsilon_{f}\frac{\sqrt{p^{*}}^{e}}{p})$ \\
$\frac{q^{2}}{p^{2}}(p-1)+\epsilon_{f}\sqrt{p^{*}}^{e}\left(\frac{q}{p}(p-1)-\frac{(q-1)(p\ell-T)}{p\ell^{k}}\right)$ &  $\frac{(p-1)}{2}(p^{e-1}-\epsilon_{f}\frac{\sqrt{p^{*}}^{e}}{p})$  \\
$\frac{q^{2}}{p^{2}}(p-1)+\epsilon_{f}\sqrt{p^{*}}^{e}\left(\frac{q}{p^{2}}(p-1)^{2}-\frac{2\sqrt{q}}{p}-\frac{(q-1)}{\ell^{k-1}}+\frac{(\sqrt{q}+1)(\ell\sqrt{q}+\eta_{1}(t_{1})T)}{p\ell^{k}}\right)$ & $\frac{(q-1)(p-1)}{2\ell^{k}}(p^{e-1}-\epsilon_{f}\frac{\sqrt{p^{*}}^{e}}{p})$ \\
$\frac{q^{2}}{p^{2}}(p-1)+\epsilon_{f}\sqrt{p^{*}}^{e}\left(\frac{q}{p^{2}}(p-1)^{2}-\frac{(q-1)}{\ell^{k-1}}+\frac{(\sqrt{q}+1)(\ell\sqrt{q}+\eta_{1}(t_{1})T)}{p\ell^{k}}\right)$ & $\frac{(q-1)(p-1)(\ell^{k}-1)}{2\ell^{k}}(p^{e-1}-\epsilon_{f}\frac{\sqrt{p^{*}}^{e}}{p})$ \\
$\frac{q^{2}}{p^{2}}(p-1)+\epsilon_{f}\sqrt{p^{*}}^{e}\left(\frac{q}{p^{2}}(p-1)^{2}-\frac{2\sqrt{q}}{p}-\frac{(q-1)}{\ell^{k-1}}+\frac{(\sqrt{q}+1)(\ell\sqrt{q}-\eta_{1}(t_{1})T)}{p\ell^{k}}\right)$ & $\frac{(q-1)(p-1)(\ell-1)}{2\ell^{k}}(p^{e-1}-\epsilon_{f}\frac{\sqrt{p^{*}}^{e}}{p})$ \\
$\frac{q^{2}}{p^{2}}(p-1)+\epsilon_{f}\sqrt{p^{*}}^{e}\left(\frac{q}{p^{2}}(p-1)^{2}-\frac{(q-1)}{\ell^{k-1}}+\frac{(\sqrt{q}+1)(\ell\sqrt{q}-\eta_{1}(t_{1})T)}{p\ell^{k}}\right)$ & $\frac{(q-1)(p-1)(\ell^{k}-\ell+1)}{2\ell^{k}}(p^{e-1}-\epsilon_{f}\frac{\sqrt{p^{*}}^{e}}{p})$ \\
%\hline
\hline
\end{tabular}
\end{table}
\vspace{1em}
\begin{enumerate}
\justifying
    \item[\textnormal{(4)}] If $\ell\not\equiv 1\pmod{p}$ and $p\equiv 3\pmod{4}$, then $\mathcal{C}_{D'}$ is at most six-weight $\Large[\frac{q^{2}}{p}+\epsilon_{f}\sqrt{p^{*}}^{e}\left(\frac{q(p-1)}{p}-\frac{(q-1)}{\ell^{k-1}}\right)-1,2e\Large]$ linear code over $\mathbb{F}_{p}$ and its weight distribution given in Table $\ref{Table5}$.
\end{enumerate}
\vspace{2mm}
\begin{table}[htbp]
\centering
\scriptsize
\captionof{table}{}
\label{Table5}
\begin{tabular}{c c} % 'l' = left-aligned, 'c' = centered, '|' = vertical lines
\hline
\textnormal{Weight} & \textnormal{Frequency} \\ 
\hline
 \textnormal{0} & \textnormal{1} \\ 
%\hline
$\frac{q^{2}}{p^{2}}(p-1)$ & $p^{e-1}+\epsilon_{f}(p-1)\frac{\sqrt{p^{*}}^{e}}{p}-1$ \\ 
$\frac{q^{2}}{p^{2}}(p-1)+\epsilon_{f}\sqrt{p^{*}}^{e}\left(\frac{q}{p^{2}}(p-1)^{2}+\frac{\sqrt{q}(p-1)}{p}-\frac{(q+\sqrt{q})(p-1)}{p\ell^{k-1}}\right)$ & $\frac{q-1}{\ell^{k-1}}(p^{e-1}+\epsilon_{f}(p-1)\frac{\sqrt{p^{*}}^{e}}{p})$ \\
$\frac{q^{2}}{p^{2}}(p-1)+\epsilon_{f}\sqrt{p^{*}}^{e}\left(\frac{q}{p^{2}}(p-1)^{2}-\frac{(q+\sqrt{q})(p-1)}{p\ell^{k-1}}\right)$ & $\frac{(q-1)(\ell^{k-1}-1)}{\ell^{k-1}}(p^{e-1}+\epsilon_{f}(p-1)\frac{\sqrt{p^{*}}^{e}}{p})$ \\
$\frac{q^{2}}{p^{2}}(p-1)+\epsilon_{f}\sqrt{p^{*}}^{e}\left(\frac{q}{p}(p-1)-\frac{(q-1)}{\ell^{k-1}}\right)$ & $(p-1)(p^{e-1}-\epsilon_{f}\frac{\sqrt{p^{*}}^{e}}{p})$  \\
$\frac{q^{2}}{p^{2}}(p-1)+\epsilon_{f}\sqrt{p^{*}}^{e}\left(\frac{q}{p^{2}}(p-1)^{2}-\frac{2\sqrt{q}}{p}-\frac{(\sqrt{q}+1)}{p\ell^{k-1}}(\sqrt{q}(p-1)-p)\right)$ &  $\frac{(q-1)(p-1)}{2\ell^{k-1}}(p^{e-1}-\epsilon_{f}\frac{\sqrt{p^{*}}^{e}}{p})$  \\
$\frac{q^{2}}{p^{2}}(p-1)+\epsilon_{f}\sqrt{p^{*}}^{e}\left(\frac{q}{p^{2}}(p-1)^{2}-\frac{(\sqrt{q}+1)}{p\ell^{k-1}}(\sqrt{q}(p-1)-p)\right)$ &  $\frac{(q-1)(p-1)(2\ell^{k-1}-1)}{2\ell^{k-1}}(p^{e-1}-\epsilon_{f}\frac{\sqrt{p^{*}}^{e}}{p})$  \\
%\hline
\hline
\end{tabular}
\end{table}
     \begin{proof}
         We will prove only the case $\ell\equiv 1\pmod{p}$ and $p\equiv 1\pmod{4}$, as the remaining cases can be proved in a similar manner.
         \vskip 1pt
        It is obvious from Lemma \ref{LX1} that when $\ell\equiv 1\pmod{p}$, $\mathcal{C}_{D'}$ has length $n_{2}=|D^{'}|=\frac{q^{2}}{p}+\epsilon_{f}\sqrt{p^{*}}^{e}\left(\frac{q(p-1)}{p}-\frac{(\ell-1)(q-1)}{\ell^{k}}\right)-1$.
        \vskip 1pt
        From $(\ref{EQ15})$, we have the Hamming weight of $c(\gamma,\delta)$ in $\mathcal{C}_{D^{'}}$ is $\operatorname{wt}(c(\gamma,\delta))=n_{2}-N_{\gamma,\delta}^{(2)}$, where $N_{\gamma,\delta}^{(2)}$ is defined in $(\ref{EQ14})$. Next, we determine all possible values of $\operatorname{wt}(c(\gamma,\delta))$ with the help of Lemma \ref{LX2}, while $(\gamma,\delta)$ runs through $\mathbb{F}_{q}^{2}\backslash\{(0,0)\}$. 
        \vskip 1pt
        For $f^{*}(\gamma)=0$ $(\gamma\neq 0)$ and $\delta=0$, we have
        \begin{align*}
            \text{}\operatorname{wt}(c(\gamma,\delta))&=\frac{q^{2}(p-1)}{p^{2}},
            \end{align*}
            For $f^{*}(\gamma)=0$ and $\delta\neq 0$, we have
            \begin{align*}
            \operatorname{wt}(c(\gamma,\delta)) &=\begin{cases}
                \frac{q^{2}(p-1)}{p^{2}}+\epsilon_{f}\sqrt{p^{*}}^{e}\left(\frac{q(p-1)^{2}}{p^{2}}+\frac{\sqrt{q}(p-1)}{p}-\frac{(\ell-1)(q+\sqrt{q})(p-1)}{p\ell^{k}}\right);\text{ if }i_{\delta}\in P_{1}^{(2)}\cup P_{1}^{(3)}\cup \\\hspace{10.1cm}P_{3}^{(2)}\cup P_{3}^{(3)}, \\
                \frac{q^{2}(p-1)}{p^{2}}+\epsilon_{f}\sqrt{p^{*}}^{e}\left(\frac{q(p-1)^{2}}{p^{2}}-\frac{(\ell-1)(q+\sqrt{q})(p-1)}{p\ell^{k}}\right);\text{ otherwise,}
            \end{cases}            
        \end{align*}
        For $f^{*}(\gamma)\neq 0$ and $\delta=0$, we have
        \begin{align*}
            \operatorname{wt}(c(\gamma,\delta)) &=\begin{cases}
                \frac{q^{2}(p-1)}{p^{2}}+\epsilon_{f}\sqrt{p^{*}}^{e}\left(\frac{q(p-1)}{p}-\frac{(\ell-1)(q-1)(p+1)}{p\ell^{k}}\right);\text{ if }\eta_{1}(f^{*}(\gamma))=1, \\
                \frac{q^{2}(p-1)}{p^{2}}+\epsilon_{f}\sqrt{p^{*}}^{e}\left(\frac{q(p-1)}{p}-\frac{(\ell-1)(q-1)(p-1)}{p\ell^{k}}\right);\text{ if }\eta_{1}(f^{*}(\gamma))=-1,
            \end{cases}
        \end{align*}
        For $f^{*}(\gamma)\neq 0$ and $\delta\neq 0$, we have
        \begin{align*}
            \operatorname{wt}(c(\gamma,\delta)) &=\begin{cases}
            \frac{q^{2}(p-1)}{p^{2}}+\epsilon_{f}\sqrt{p^{*}}^{e}\left(\frac{q(p-1)^{2}}{p^{2}}-\frac{(\ell-1)(q-1)(p-1)}{p\ell^{k}}\right);\text{ if }\eta_{1}(f^{*}(\gamma))=-1, \\
            \frac{q^{2}(p-1)}{p^{2}}+\epsilon_{f}\sqrt{p^{*}}^{e}\left(\frac{q(p-1)^{2}}{p^{2}}-\frac{2\sqrt{q}}{p}-\frac{(\ell-1)(\sqrt{q}+1)(\sqrt{q}(p-1)-(p+1))}{p\ell^{k}}\right);\\\hspace{3.7cm}\text{ if }\eta_{1}(f^{*}(\gamma))=1\text{ and }i_{\delta}\in P_{1}^{(2)}\cup P_{1}^{(3)}\cup P_{3}^{(2)}\cup P_{3}^{(3)}, \\
            \frac{q^{2}(p-1)}{p^{2}}+\epsilon_{f}\sqrt{p^{*}}^{e}\left(\frac{q(p-1)^{2}}{p^{2}}-\frac{(\ell-1)(\sqrt{q}+1)(\sqrt{q}(p-1)-(p+1))}{p\ell^{k}}\right);\text{ otherwise. }                
            \end{cases}
        \end{align*}
        Therefore, the eight nonzero weights of $\mathcal{C}_{D^{'}}$ are as follows:
        \begin{align*}
            w_{1}&= \frac{q^{2}(p-1)}{p^{2}}, \\
            w_{2} &=  \frac{q^{2}(p-1)}{p^{2}}+\epsilon_{f}\sqrt{p^{*}}^{e}\left(\frac{q(p-1)^{2}}{p^{2}}+\frac{\sqrt{q}(p-1)}{p}-\frac{(\ell-1)(q+\sqrt{q})(p-1)}{p\ell^{k}}\right), \\
            w_{3} &= \frac{q^{2}(p-1)}{p^{2}}+\epsilon_{f}\sqrt{p^{*}}^{e}\left(\frac{q(p-1)^{2}}{p^{2}}-\frac{(\ell-1)(q+\sqrt{q})(p-1)}{p\ell^{k}}\right), \\
            w_{4} &=  \frac{q^{2}(p-1)}{p^{2}}+\epsilon_{f}\sqrt{p^{*}}^{e}\left(\frac{q(p-1)}{p}-\frac{(\ell-1)(q-1)(p+1)}{p\ell^{k}}\right), \\
            w_{5} &=  \frac{q^{2}(p-1)}{p^{2}}+\epsilon_{f}\sqrt{p^{*}}^{e}\left(\frac{q(p-1)}{p}-\frac{(\ell-1)(q-1)(p-1)}{p\ell^{k}}\right), \\
            w_{6} &= \frac{q^{2}(p-1)}{p^{2}}+\epsilon_{f}\sqrt{p^{*}}^{e}\left(\frac{q(p-1)^{2}}{p^{2}}-\frac{2\sqrt{q}}{p}-\frac{(\ell-1)(\sqrt{q}+1)(\sqrt{q}(p-1)-(p+1))}{p\ell^{k}}\right), \\
            w_{7} &= \frac{q^{2}(p-1)}{p^{2}}+\epsilon_{f}\sqrt{p^{*}}^{e}\left(\frac{q(p-1)^{2}}{p^{2}}-\frac{(\ell-1)(\sqrt{q}+1)(\sqrt{q}(p-1)-(p+1))}{p\ell^{k}}\right), \\
            w_{8} &= \frac{q^{2}(p-1)}{p^{2}}+\epsilon_{f}\sqrt{p^{*}}^{e}\left(\frac{q(p-1)^{2}}{p^{2}}-\frac{(\ell-1)(q-1)(p-1)}{p\ell^{k}}\right).
        \end{align*}
        In the following, we will show their frequencies $A_{w_{1}},A_{w_{2}},A_{w_{3}},A_{w_{4}},A_{w_{5}},A_{w_{6}},A_{w_{7}}$ and $A_{w_{8}}$ respectively. 
        \vskip 1pt
        Note that $e$ is even, then from Lemma \ref{LX3}, we obtain
        \begin{align*}
            A_{w_{1}}&= |\{(\gamma,\delta)\in\mathbb{F}_{q}^{2}: f^{*}(\gamma)=0\text{ with }\gamma\neq 0\text{ and }\delta=0\}|= p^{e-1}+\epsilon_{f}(p-1)\frac{\sqrt{p^{*}}^{e}}{p}-1, \\
            A_{w_{2}} &= |\{(\gamma,\delta)\in\mathbb{F}_{q}^{2}: f^{*}(\gamma)=0\text{ and }\delta\neq 0\text{ with }i_{\delta}\in P_{1}^{(2)}\cup P_{1}^{(3))}\cup P_{3}^{(2)}\cup P_{3}^{(3)}\}| \\
            &= \frac{(q-1)(\ell-1)}{\ell^{k}}\left(p^{e-1}+\epsilon_{f}(p-1)\frac{\sqrt{p^{*}}^{e}}{p}\right), \\
            A_{w_{3}} &= |\{(\gamma,\delta)\in\mathbb{F}_{q}^{2}: f^{*}(\gamma)=0\text{ and }\delta\neq 0\text{ with }i_{\delta}\not\in P_{1}^{(2)}\cup P_{1}^{(3))}\cup P_{3}^{(2)}\cup P_{3}^{(3)}\}| \\
            &= \frac{(q-1)(\ell^{k}-\ell+1)}{\ell^{k}}\left(p^{e-1}+\epsilon_{f}(p-1)\frac{\sqrt{p^{*}}^{e}}{p}\right), \\
            A_{w_{4}} &=|\{(\gamma,\delta)\in\mathbb{F}_{q}^{2}: f^{*}(\gamma)\neq 0\text{ with }\eta_{1}(f^{*}(\gamma))=1\text{ and }\delta=0\}| \\
            &= \frac{(p-1)}{2}\left(p^{e-1}-\epsilon_{f}\frac{\sqrt{p^{*}}^{e}}{p}\right), \\
            A_{w_{5}} &=|\{(\gamma,\delta)\in\mathbb{F}_{q}^{2}: f^{*}(\gamma)\neq 0\text{ with }\eta_{1}(f^{*}(\gamma))=-1\text{ and }\delta=0\}| \\
            &= \frac{(p-1)}{2}\left(p^{e-1}-\epsilon_{f}\frac{\sqrt{p^{*}}^{e}}{p}\right), \\
            A_{w_{6}} &= |\{(\gamma,\delta)\in\mathbb{F}_{q}^{2}: f^{*}(\gamma)\neq 0\text{ such that }\eta_{1}(f^{*}(\gamma))=1\text{ and }\delta\neq 0\text{ with }i_{\delta}\in P_{1}^{(2)}\cup P_{1}^{(3))}\cup P_{3}^{(2)}\cup P_{3}^{(3)}\}| \\
            &= \frac{(q-1)(\ell-1)}{\ell^{k}}\times \frac{(p-1)}{2}\left(p^{e-1}-\epsilon_{f}\frac{\sqrt{p^{*}}^{e}}{p}\right), \\
            A_{w_{7}} &= |\{(\gamma,\delta)\in\mathbb{F}_{q}^{2}: f^{*}(\gamma)\neq 0\text{ such that }\eta_{1}(f^{*}(\gamma))=1\text{ and }\delta\neq 0\text{ with }i_{\delta}\not\in P_{1}^{(2)}\cup P_{1}^{(3))}\cup P_{3}^{(2)}\cup P_{3}^{(3)}\}| \\
            &= \frac{(q-1)(\ell^{k}-\ell+1)}{\ell^{k}}\times \frac{(p-1)}{2}\left(p^{e-1}-\epsilon_{f}\frac{\sqrt{p^{*}}^{e}}{p}\right),  \\
            A_{w_{8}} &= |\{(\gamma,\delta)\in\mathbb{F}_{q}^{2}:\delta\neq 0\text{ and }f^{*}(\gamma)\neq 0\text{ such that }\eta_{1}(f^{*}(\gamma))=-1\}| \\
            &= \frac{(p-1)(q-1)}{2}\left(p^{e-1}-\epsilon_{f}\frac{\sqrt{p^{*}}^{e}}{p}\right).
        \end{align*}
        Obviously, for $\gamma=\delta=0$, the zero codeword occurs only once in $\mathcal{C}_{D'}$, which has exactly $p^{2e}$ codewords, so the dimension of $\mathcal{C}_{D'}$ is $2e$. This completes the proof.
     \end{proof}
 \end{theorem}
 \begin{example}
     Let $f(x)=\operatorname{Tr}_{1}^{e}(x^{2})$ over $\mathbb{F}_{p^{e}}$; then $f(x)$ is a weakly regular bent function with $\epsilon_{f}=-1$ and $l_{f}=2$ from \cite[Table 4]{P12}. Consider $p=3$, $\ell=7$, and $k=1$. Clearly, $\ell\equiv 1\pmod{3}$, $p\equiv 3\pmod{4}$, and $e=\phi(14)=6$. Therefore, $\mathcal{C}_{D'}$ is a ternary $[173420,12,114372]$ linear code, and its weight enumerator is $1+468z^{114372}+27144z^{115182}+146016z^{115344}+162864z^{115668}+194688z^{115830}+260z^{118098}$. Due to the large computation required, using a magma program to verify whether the code $\mathcal{C}_{D'}$ is consistent with Theorem $\ref{Th2}(2)$ is not possible.
 \end{example}
 \vspace{1em}
\begin{example}
    Let $f(x)=\operatorname{Tr}_{1}^{e}(\alpha x^{p^{i}+1})$, where $\alpha$ is a primitive element over $\mathbb{F}_{p^e}$. Then $f(x)$ is a weakly regular bent function with $\epsilon_{f}=1$ and $l_{f}=2$ from \cite[Table 4]{P12}. Consider $p=5$, $\ell=3$, $k=1$, and $i=2$. Clearly, $\ell\not\equiv 1\pmod{5}$, $p\equiv 1\pmod{4}$, $e=\phi(6)=2$, $\eta_{1}(2)=-\eta_{1}(1)=-1$, and $T=\eta_{1}(2)+2\eta_{1}(1)=1$. Therefore, $\mathcal{C}_{D'}$ is a quinary $[104,4,72]$ linear code, and its weight enumerator is $1+8z^{72}+64z^{78}+216z^{80}+128z^{82}+136z^{88}+64z^{92}+8z^{100}$. By a magma program, the obtained code $\mathcal{C}_{D'}$ is consistent with Theorem $\ref{Th2}(3)$ (see Table $\ref{Table4}$).
\end{example}
\vspace{1em}
\begin{example}
    Let $f(x)=\operatorname{Tr}_{1}^{e}(x^{p^{i}+1})$ over $\mathbb{F}_{p^e}$. Then $f(x)$ is a weakly regular bent function with $\epsilon_{f}=-1$ and $l_{f}=2$ from \cite[Table 4]{P12}. Consider $p=5$, $\ell=3$, $k=1$, and $i=2$. Clearly, $\ell\not\equiv 1\pmod{5}$, $p\equiv 1\pmod{4}$, $e=\phi(6)=2$, $\eta_{1}(2)=-\eta_{1}(1)=-1$, and $T=\eta_{1}(2)+2\eta_{1}(1)=1$. Therefore, $\mathcal{C}_{D'}$ is a quinary $[144,4,108]$ linear code, and its weight enumerator is  $1+96z^{108}+204z^{112}+192z^{118}+24z^{120}+96z^{122}+12z^{128}$. By a magma program, the obtained code $\mathcal{C}_{D'}$ is consistent with Theorem $\ref{Th2}(3)$ (see Table $\ref{Table4}$).
\end{example}
\vspace{1em}
\begin{example}
    Let $f(x)=\operatorname{Tr}_{1}^{e}(x^{\frac{3^{i}+1}{2}})$ over $\mathbb{F}_{3^e}$. Then $f(x)$ is a weakly regular bent function with $\epsilon_{f}=-1$ and $l_{f}=2$ from \cite[Table 4]{P12}. Consider $p=3$, $\ell=5$, $k=1$, and $i=5$. Clearly, $\ell\not\equiv 1\pmod{3}$, $p\equiv 3\pmod{4}$, and $e=\phi(10)=4$. Therefore, $\mathcal{C}_{D'}$ is a ternary $[2420,8,1458]$ linear code, and its weight enumerator is $1+20z^{1458}+2400z^{1584}+1680z^{1620}+2400z^{1638}+60z^{1692}$. By a magma program, the obtained code $\mathcal{C}_{D'}$ is consistent with Theorem $\ref{Th2}(4)$.
\end{example}

\section{Concluding remarks}\label{sec5}
Inspired by the work in \cite{P12}, this paper investigates two classes of $p$-ary linear codes through specific defining sets and using the evaluation of the Weil sum $S_{\mathcal{N}}(a,b)$ defined in $(\ref{EQ0})$. % this paper focuses on constructing new classes of $p$-ary linear codes of the form $(\ref{EQ2})$. By carefully selecting two different defining sets, defined by $(\ref{EQ9})$ and $(\ref{EQ10})$, we systematically analyze their algebraic structures and employ character sum techniques to determine their corresponding code parameters and weight distributions.
%Through detailed computation and application of number-theoretic tools such as quadratic Gauss sums and Weil sums over finite fields, 
We presented several classes of linear codes with two-weight, four-weight, six-weight, eight-weight, and nine-weight over $\mathbb{F}_{p}$ and completely determined their parameters and weight distributions (see Theorems \ref{Th1} and \ref{Th2}). Meanwhile, we identify an infinite class of optimal linear codes in our constructions that achieve the Griesmer bound. Notably, some of our newly constructed codes are minimal under certain conditions (see Remark $\ref{Re1}$). %$t$-weight linear codes over $\mathbb{F}_{p}$, where $t=2,4,6,8,9$. For each class of codes, we explicitly determined their parameters and weight distributions.% These computations not only verify the correctness of our theoretical framework but also demonstrate the diversity and richness of the proposed code families.
  
	% References

\end{document}